\documentclass[12pt]{article}

\usepackage[labelfont=bf]{caption}
\usepackage[utf8]{inputenc} 
\usepackage[T1]{fontenc}    
\usepackage{hyperref}       
\usepackage{url}            
\usepackage{booktabs}       
\usepackage{amsfonts}       
\usepackage{amsmath, bbm, amsthm} 
\usepackage{nicefrac}       
\usepackage{microtype}      
\usepackage{lipsum}		
\usepackage{graphicx}
\usepackage{caption}
\usepackage{subcaption}
\usepackage{tikz}
\usepackage{natbib}
\usepackage{doi}
\usepackage{enumerate}
\usepackage{todo}
\usepackage{romannum}
\usepackage{xr}
\usepackage{xcolor}
\usepackage{placeins}

\makeatletter
\newcommand*{\addFileDependency}[1]{
\typeout{(#1)}
%
%
\@addtofilelist{#1}
%
\IfFileExists{#1}{}{\typeout{No file #1.}}
}\makeatother

\addtolength{\oddsidemargin}{-.5in}%
\addtolength{\evensidemargin}{-1in}%
\addtolength{\textwidth}{1in}%
\addtolength{\textheight}{2.15in}%
\addtolength{\topmargin}{-1in}%

\theoremstyle{plain}
\newtheorem{theorem}{Theorem}

\newtheorem{proposition}{Proposition}
\newtheorem{lemma}{Lemma}
\newtheorem{corollary}{Corollary}

\theoremstyle{definition}
\newtheorem{definition}{Definition}
\newtheorem{example}{Example}

\theoremstyle{remark}
\newtheorem*{remark}{Remark}

\newcommand{\blind}{1}
\newcommand{\R}{\mathbb R} 
\newcommand{\N}{\mathbb N} 
\renewcommand{\P}{\mathbb P} 
\newcommand{\E}{\mathbb E} 
\renewcommand{\S}{\mathbb S}
 
\newcommand{\disteq}{\buildrel d \over =}
 
\newcommand{\aeeq}{\buildrel a.e. \over =} 
 
\newcommand{\Cov}{\mathrm{Cov}}

\title{Principal component analysis for max-stable distributions}


\date{}


\hypersetup{
pdftitle={Principal Component Analysis for max-stable distributions},
pdfauthor={F. Reinbott, A. Janssen},
pdfkeywords={},
}

\begin{document}
\pagenumbering{arabic}

\def\spacingset#1{\renewcommand{\baselinestretch}%
{#1}\small\normalsize} \spacingset{1}


\if1\blind
{
  \title{\bf Principal Component Analysis for max-stable distributions}
  \author{Felix Reinbott \thanks{
    The authors gratefully acknowledge financial support from \textit{Deutsche Forschungsgemeinschaft (DFG, German Research Foundation) - 314838170, GRK 2297 MathCoRe. }}\hspace{.2cm}\\
    Institute of Mathematical Stochastics, Department of Mathematics,\\ Otto von Guericke University Magdeburg\\
    and \\
    Anja Jan{\ss}en\\
    Institute of Mathematical Stochastics, Department of Mathematics,\\ Otto von Guericke University Magdeburg
    }

  \maketitle
} \fi

\if0\blind
{
  \bigskip
  \bigskip
  \bigskip
  \begin{center}
    {\LARGE\bf Principal Component Analysis for max-stable distributions}
\end{center}
  \medskip
} \fi

\noindent%
{\it Keywords:} Dimension reduction, Extreme value statistics, Max-stable distributions, 
Principal Component Analysis. 
\vfill

\newpage
\spacingset{1.9} 

\begin{abstract}
	Principal component analysis (PCA) is one of the most popular dimension reduction techniques in statistics
	and is especially powerful when a multivariate distribution is concentrated near a lower-dimensional subspace. 
	Multivariate extreme value distributions have turned out to provide challenges for the application of PCA since their constraint support impedes the detection of lower-dimensional structures and heavy-tails can imply that second moments do not exist, thereby preventing the application of classical variance-based techniques for PCA. 
	We adapt PCA to max-stable distributions using a regression setting and employ max-linear maps to project the random vector to a lower-dimensional space while preserving max-stability. 
	We also provide a characterization of those distributions which allow for a perfect reconstruction from the lower-dimensional representation. Finally, we demonstrate how an optimal projection matrix can be consistently estimated and show viability in practice with a simulation study and application to a benchmark dataset.
	
\end{abstract}

\section{Introduction}

Multivariate, potentially high-dimensional data frequently emerges in various applications, 
and it is of fundamental interest to extract key features and patterns from datasets. 
For this task, a large variety of unsupervised learning algorithms have been developed and applied to different settings, 
for a general overview see for example~\cite{Bishop_pr} or \cite{HastieTibshirani_eosl}. 
One of the most frequently used techniques is principal component analysis (PCA), 
performing especially well if the data concentrates around a low dimensional linear subspace, see ~\cite{Hotelling_aocsv} and \cite{Jolliffe_pca}. 

One way to motivate PCA is via the representation of multivariate normal distributions. Recall that for a $d$-dimensional random variable $X$ following a multivariate normal distribution with mean $(0, \ldots, 0)$ and covariance matrix $\Sigma \in \mathbb{R}^{d \times d}$, there exist an orthogonal matrix $Q \in \mathbb{R}^{d \times d}$ and a diagonal matrix $D  \in [0,\infty)^{d \times d}$ such that
 $$ X \overset{d}{=} Q \cdot D \cdot Z,$$
where $Z$ is a $d$-dimensional vector of i.i.d. standard normal random variables and $\overset{d}{=}$ stands for equality in distribution. Both $Q$ and $D$ can be deduced from the singular value decomposition of $\Sigma$ and can be chosen in a way such that the diagonal entries of $D^2$ are the eigenvalues of $\Sigma$ in descending order. The columns of $Q$ are called the principal components of $X$. Now, $X$ lives on a lower-dimensional subspace (i.e.\ its covariance matrix does not have full rank), if and only if some diagonal entries of $D$ are 0. More precisely, if only $p<d$ diagonal entries of $D$ are positive, the random variable $X$ can be represented with the help of the lower dimensional random variable $(Z_1, \ldots, Z_p)$ consisting of only $p$ i.i.d.\ standard normal random variables. Furthermore, in order to approximate the distribution of $X$ by a lower-dimensional representation in terms of its first principal components, one can replace the smallest diagonal elements of $D$ by zeros. If applied to a data set instead of a known normal distribution, one first estimates the covariance matrix $\Sigma$ and then performs a singular value decomposition of the result. The method of PCA thus provides a lower-dimensional (approximate) representation of a data set, where the chosen principal components can be interpreted as latent factors, and it therefore offers a tool for complexity reduction, i.e. both data compression and pattern detection. 

In this article, we are interested in complexity reduction for data of extremes, and care has to be taken when one wants to transfer methods which were designed for the bulk of a data set to its maxima or high-level exceedances. Many classic unsupervised learning procedures have been adapted specifically for this task, see, e.g., \cite{Chautru_drmve, Chiapino_fcev, JanssenWan_kmce, DreesSabourin_pcame, Jalalzai_fcsier, Aghbalou_tir, Roettger_tpmve, CooleyThibaud_ddhde}, and also ~\cite{Engelke_ssme} for a recent overview article. 
To ensure that an approach is valid for extreme events possibly beyond the observed data, 
most methods rely on the framework of extreme value theory, which provides additional structure 
and allows for stability in applications and interpretability of results. 

Since the normal distribution is not suitable to describe tails of distributions, the aforementioned motivation of PCA fails when designing a similar approach for extremes. In particular, we want to provide methods for heavy-tailed distributions for which no covariance matrix exists. Furthermore (and this will prove to be a subtle, yet crucial point) extreme value distributions typically live on a restricted set like $[0,\infty)^d$ which impedes an orthogonal decomposition of its support. Different approaches to solve these two problems have been given by~\cite{DreesSabourin_pcame, CooleyThibaud_ddhde, JiangCooley_pcae, Avella_kpcamve}, 
using a suitable transformation of the data and then identifying the linear subspace closest to the data. {Approaches to PCA for extremes have also been proposed for functional data~\cite{Kokoszka_pcarvf, Kokozka_pcaivfd, Clemencon_rvhsfpca, Kokoszka_epft}, focusing on the sample covariance operator.}

We propose a new approach especially suited to the max-stability property of multivariate extreme value distributions that is based on a well-known equivalent formulation of PCA via a regression problem and shares some similarities with~\cite{Breiman_aa}. It allows a transfer from normal to more general distributions and involves in our case techniques from tropical linear algebra, combined with a suitable metric to measure how well the lower dimensional encoding can reconstruct the original data. {Note that by using this tropical setting, we are using an approach to PCA unrelated to Hilbert spaces, in contrast to the aforementioned references.} Tropical linear algebra has proven to be powerful in extremes, see~\cite{Amendola_cimln} and \cite{GissiblKlueppelberg_rmlm},
and tropical variants of PCA also have gained attention for phylogenetic trees, see ~\cite{Yoshida_tpcapt} and \cite{Page_tpcaspt}. We present several theoretical properties of our approach which motivate its application and aide in interpretation of results. Finally, we present a non-parametric estimator of the optimal projection matrix and show its consistency. 

This paper is structured as follows: 
Section~\ref{sec:background} recalls the necessary concepts from extreme value 
theory, tropical linear algebra and regular PCA. In Section~\ref{sec:defmaxpca} 
we motivate our definition of PCA for max-stable distributions based on an optimization problem that can be solved using data. In Section~\ref{sec:properties} we investigate properties of the minimizer of our optimization problem and give a complete characterization of distributions that can be mapped down to a lower dimensional space without loss of information in Theorem~\ref{thm:spectralrep}. In Section~\ref{sec:stats} we outline a framework for applying max-stable PCA to data, provide a consistency type result for our minimization problem in Theorem~\ref{thm:consistency} and we demonstrate practical viability with a simulation study and an application to a benchmark dataset. Proofs and additional details are deferred to the supplementary material for ease of presentation. 
 
\section{Background}
\label{sec:background}

\subsection{Notation and conventions}

In the main work and in the supplementary material, we write $\bigvee_{i=1}^\infty x_i:=\sup_{i \in \mathbb{N}}x_i, \bigvee_{i=1}^d x_i:=\max_{i=1}^d x_i$ and $\bigwedge_{i=1}^\infty x_i:=\inf_{i \in \mathbb{N}}x_i, \bigwedge_{i=1}^d x_i:=\min_{i=1}^d x_i$ for $x_i \in \mathbb{R}, i \in \mathbb{N}$. Vectors are understood as column vectors, with $x^T, A^T$ denoting the transpose of a vector $x$ or a matrix $A$. Inequalities between vectors are meant componentwise, i.e. $x \leq y$ for $x=(x_1, \ldots, x_d), y=(y_1, \ldots, y_d) \in \mathbb{R}^d$ means that $x_i \leq y_i$ for all $i=1, \ldots, d$, and $x \nleq y$ means that there exists an $i=1, \ldots, d$ such that $x_i > y_i$. Accordingly, taking suprema and infima over vectors is done componentwise. The $d$-dimensional vectors of all zeros and all ones are written as $\mathbf 0_d = (0, \ldots, 0)$ and $\mathbf 1_d = (1, \ldots, 1)$, respectively. Additionally, we denote the $k$-th canonical basis vector by $e_k$. We write $\mathbf 0_{m \times n}$ for an $m \times n$ all zero matrix and for an $n \times n$ identity matrix we use the notation $\mathrm{id}_{n}$. We denote the $i$-th row of a matrix $M$ by $M_i$, the $j$-th column by $M_{\cdot j}$ and the entry in row $i$ and column $j$ of $M$ by $M_{ij}$. The Borel sets on $\mathbb{R}^d$, restricted to a Borel set $A \subseteq \mathbb{R}^d$, are denoted by $\mathcal{B}(A)$. The complement of a set $A$ is given by $A^c$ and $\mathbbm 1_A$ is the indicator function of $A$ with $\mathbbm 1_A(x)=1$ if $x \in A$ and 0 if $x \in A^c$. The closure of a set $A$ is denoted by $\mbox{cl}(A)$. We denote a probability space by $(\Omega, \mathcal A, \P)$ and the law of some random variable or random vector $X$ as $\P^X$. Equality in distribution, as already mentioned in the introduction, is denoted by $\overset{d}{=}$ and in similar fashion we write equality almost everywhere as $\aeeq$. The Dirac measure in a point $x$ is denoted by $\delta_x$. The space of all non-negative, finitely integrable functions on $[0,1]$ is denoted by $L_+^1([0, 1])$. We write $\| \cdot \|$ for an arbitrary norm, $\| x \|_{\infty} := \bigvee_{i=1}^d \lvert x_i \rvert $ as the $\infty$-norm and denote by $\| x \|_1 := \sum_{k=1}^d \lvert x_k \rvert$ the sum norm of a vector $x \in \R^d$. The notations for the norms can also be used for matrices. Additionally, we use the norm $\| f \|_{L^1(E)} := \int_E \lvert f(e) \rvert de $ as the $L^1$ norm over a suitable domain $E$.  

\subsection{Max-stable distributions}\label{subsec:evt}

As we are interested in extremal observations of $d$-dimensional random vectors we shall subsequently work within the framework of multivariate extreme value theory, see, e.g. \cite{Beirlant_soe, DeHaanFerreira_evti, resnick_htp} for introductions on the topic. To this end, 
let us consider i.i.d.~random vectors $X_i := (X_{1i}, \ldots X_{di})^T, i \in \mathbb{N},$ and assume that there exist suitable scaling sequences $a_n = (a_{1n}, \ldots, a_{dn})^T \in (0, \infty)^d, b_n = (b_{1n}, \ldots, b_{dn})^T \in \R^d$, 
 and a distribution function $G$ such that
\begin{equation}\label{Eq:MEVD}
	\P\left( \bigvee_{i=1}^n\frac{X_{1i} - b_{1n}}{a_{1n}} \leq z_1, \ldots  ,\bigvee_{i=1}^n \frac{X_{di} - b_{dn}}{a_{dn}} \leq z_d \right) \to G(z_1, \ldots, z_d), \quad n \to \infty,
\end{equation}
holds in all continuity points $z=(z_1, \ldots, z_d)^T \in \R^d$ of $G$. If additionally the marginals of $G$ are non-degenerate, we call $G$ a multivariate extreme value distribution (MEVD) and by the Fisher-Gnedenko-Tippett theorem it follows that, up to a linear transformation, 
all marginal distribution functions are univariate extreme value distributions. When the interest is, as in our case, in the extremal dependence structure of a random vector $X=(X_1, \ldots, X_d)^T$, it is common practise (see, e.g., \cite{DeHaanFerreira_evti}, Section 6.1.2) to first standardize all its marginal distributions. This can be done in a way to ensure convergence of linearly standardized maxima from i.i.d.\ copies of it to a multivariate extreme value distribution $G^\ast$, such that the marginal distribution functions of $G^\ast$ satisfy
 \begin{equation*}
 G_j^\ast (y)= \exp(-\sigma_j/y)\mathbbm 1_{[0,\infty)}(y), \end{equation*}
where we interpret $c/0=\infty$ for $c>0$ and $0/0=0$. We call these marginal distributions 1-Fréchet with scale parameter $\sigma_j$. The resulting joint distribution function $G^\ast$ has the property that for $n$ i.i.d.\ random vectors $Y_i=(Y_{1i}, \ldots, Y_{di})^T, i = 1, \ldots, n,$ with distribution function $G^\ast$, there exist constants $c_n = (c_{1n}, \ldots, c_{dn})^T \in (0, \infty)^d$ such that 
$$ \left(\bigvee_{i=1}^n\frac{Y_{1i}}{c_{1n}}, \ldots, \bigvee_{i=1}^n \frac{Y_{di}}{c_{dn}}\right)^T \overset{d}{=}(Y_{11}, \ldots, Y_{d1})^T.$$
 This is a particularly simple and convenient form (including only a multiplicative standardization) of the property of max-stability, see \cite{deHaanResnick_ltme} and \cite{DavisResnick_psmsp}, and such distributions will be the object of our subsequent analysis. 
\begin{definition}\label{Def:maxstable}
We call a limiting distribution function $G$ in \eqref{Eq:MEVD} with marginal distribution functions
\begin{equation}\label{Eq:maxstablemargin} G_j(y)=\exp(-\sigma_j/y)\mathbbm 1_{[0,\infty)}(y),
\end{equation}
for $\sigma_j\geq 0, j = 1, \ldots, d,$ a \textbf{max-stable distribution with 1-Fréchet margins}. If a random vector $X \in [0,\infty)^d$ has such a distribution function $G$, then we also call $X$ \textbf{max-stable with 1-Fréchet margins}. If $\sigma_j>0$, then we call the $j$th marginal distribution \textbf{non-degenerate}. 
\end{definition}
Note that the use of the term ``max-stability'' differs in the literature and can also be understood in a more general way (that is including all possible limits in \eqref{Eq:MEVD} without an assumption about the marginal distributions) or a more restricted one (by assuming that $\sigma_j=1, j=1, \ldots, d$, when one also often speaks of \textit{simple} max-stable distribution). The reason for choosing Definition~\ref{Def:maxstable} in our context is that it allows for a unified treatment by fixing the extreme value index of all marginals while at the same time ensuring that max-stable distributions are closed under max-linear transformations, see Lemma~\ref{lem:maxlin_trafo} in the supplementary material for additional details, and compare also \cite{DavisResnick_psmsp}.

In contrast to univariate extreme value distributions, which are essentially determined by their extreme value index (in addition to a location and scale parameter), the class of all multivariate extreme value distributions cannot be parameterized by a finite set of parameters. Nevertheless, there exist several useful representation results, which we gather below.  First, it can be shown (see~\cite[Proposition 5.11]{resnick_erp} for the case where $\sigma_1=\ldots=\sigma_d=1$ and \cite[Corollary 2.7]{Janssen_tdme} for the general one) that a distribution function $G$ is max-stable with 
1-Fréchet margins if and only if there exists a bounded measure $S$ on $\mathcal{B}(\S^{d-1}_+)$ with $\S^{d-1}_+ := \{x \in [0, \infty)^d \colon \| x\| = 1 \}$ (for some arbitrary but fixed norm $\| \cdot \|$) 
such that
\begin{equation}\label{Eq:spectral_meas_rep}
	G(z) = \exp \left( - \int_{\S^{d-1}_+} \bigvee_{i=1}^d \frac {a_i}{z_i} \, S(da) \right) \mathbbm 1_{[0, \infty)^d}(z). 
\end{equation}
The measure $S$ is called the \textit{spectral measure}. Equivalently, see \cite{DeHaan_srmsp}, there exist functions $f_1, \ldots, f_d \in L_+^1([0, 1])$, called \textit{spectral functions}, 
such that $G$ can be written as 
\begin{equation}\label{Eq:spectral_fun_rep}
	G(z) = \exp \left( - \int_{[0,1]} \bigvee_{i=1}^d \frac {f_i(e)}{z_i} \, de \right) \mathbbm 1_{[0, \infty)^d}(z). 
\end{equation}
Note that \eqref{Eq:maxstablemargin}, \eqref{Eq:spectral_meas_rep} and \eqref{Eq:spectral_fun_rep} imply that, for all $y>0$,
\begin{equation}
\label{eq:scale-par-reps}
\sigma_j = -y \log(G_j(y))=\int_{[0,1]} f_j(e) \, de = \int_{\S^{d-1}_+} a_j \, S(da), \;\;\;  j = 1, \ldots, d.
\end{equation}
Finally, for our approach to introduce max-stable PCA via a regression problem, we need a way to measure a distance between different components of a max-stable distribution. As \cite{Davis_maxARMA, DavisResnick_psmsp} and \cite{StoevTaqqu_esi} show, a metric for the components of a bivariate max-stable random vector $(X_1,X_2)$ with spectral functions $f_1, f_2$ is given by 
\begin{equation}
	\label{eq:univmetric}
	\rho(X_1,X_2) :=  \int_{[0,1]} \lvert f_1(e) - f_2(e) \rvert  de = \|f_1-f_2\|_{L^1([0,1])},
\end{equation}
(which is well-defined even if the choice of $f_1, f_2$ in~\eqref{Eq:spectral_fun_rep} is not unique, see \cite[Section 3]{Davis_maxARMA} for details) and it metrizes convergence in probability with $\rho(X_1,X_2)=0$ implying that $\P(X_1=X_2)=1$. This metric has been named \textit{spectral distance} in \cite{Janssen_tdme}. 
It can easily be extended to a metric for two random vectors $X=(X_1, \ldots, X_d)^T, Y=(Y_1, \ldots, Y_d)^T$ for which $(X_1, \ldots, X_d,Y_1, \ldots, Y_d)^T$ is a max-stable random vector by setting 
\begin{equation}
	\label{eq:mmetric}
	\tilde \rho(X, Y) := \sum_{k=1}^d \rho(X_k, Y_k).
\end{equation}
By the above, the metric $\tilde{\rho}$ metrizes convergence in probability for max-stable random vectors and the statement $\tilde{\rho}(X,Y)=0$ is equivalent to $\P(X=Y)=1$. It is thus a suitable metric for measuring how well we can approximate a max-stable random vector by a transformation of it which preserves the max-stability of the random vector but is low-dimensional in a suitable sense. These transformations will be introduced in Section~\ref{subsec:tropalg}.

\subsection{PCA as a regression problem}\label{subsec:regPCA}

An alternative motivation for PCA than the one given in the introduction, is the search for an optimal linear subspace of dimension $p < d$ 
to project a $d$-variate random vector $X$ upon, where optimality is measured in terms of the second moment of the projection error. It is thus equivalent to finding a solution to the regression problem 
\begin{equation}
	\label{eq:pca_regr}
	\min_{(B,W) \in\R^{d \times p} \times \R^{p \times d}} \sum_{k=1}^d \E \left[(B_k W X - X_k)^2 \right],
\end{equation}
where $B_k, k = 1, \ldots, d,$ denotes the $k$th row of $B$, i.e.\ in finding an optimal matrix $W$ which maps our random vector $X$ to a $p$-dimensional subspace and its optimal counterpart $B$ which reconstructs $X$ from this lower dimensional representation.  
If $\E(X_i)=0, i = 1, \ldots, d,$ and the covariance matrix $\Sigma := (\Cov(X_i, X_j))_{i,j = 1, \ldots, d}$ of $X$ exists, then the orthonormal eigenvectors $v_1, \ldots, v_d$ of $\Sigma$, corresponding to the eigenvalues $\lambda_1 \geq \ldots \geq \lambda_d$ can be used to find the optimal matrices as $B = (v_{ij})_{i= 1, \ldots, d, j = 1, \ldots p}$ and $W = B^T$, for a given value of $p$. For a proof see~\cite{Hornik_nnpca}. 

PCA has been adapted to many different settings, see for example~\cite{DreesSabourin_pcame, Hornik_nnpca, Schlather_sgpca}, and our aim in the following shall be to find an analogue to \eqref{eq:pca_regr} which is  suitable to max-stable random vectors. A first component of this approach has already been found in the metric~\eqref{eq:mmetric} which shall replace the mean squared distance in \eqref{eq:pca_regr}. Second, we want to make sure that our reconstructed vector is of the same type as the original one. Therefore, in order to stay within the class of max-stable distributions, we replace in \eqref{eq:pca_regr} the usual matrix product by a max-linear analogue under which max-stable distributions are closed. Details are given in the following subsection. 

\subsection{Tropical linear algebra and max-stable distributions}\label{subsec:tropalg}

If $X=(X_1, \ldots, X_d)^T$ is max-stable with 1-Fréchet margins with distribution function $G$ according to \eqref{Eq:spectral_meas_rep} and \eqref{Eq:spectral_fun_rep}, respectively, and if $c_i \geq 0, i = 1, \ldots, d,$ are constants, then it follows from the above representations that
\begin{align}
\P \left( \bigvee_{i = 1}^d c_i X_i \leq z \right)
\label{Eq:univ_maxlinS}		&= \exp \left(- \frac 1 z \int_{[0,1]} \bigvee_{i=1}^d c_i a_i \, S(da) \right) \mathbbm 1_{[0, \infty)}(z) \\
\label{Eq:univ_maxlinf}	    &= \exp \left(- \frac 1 z \int_{[0,1]} \bigvee_{i=1}^d c_i f_i(e) \, de \right) \mathbbm 1_{[0, \infty)}(z), \;\;\; z \in \mathbb{R}, 
\end{align}
see also \cite{DeHaan_cmev}. Thus, a univariate max-linear combination of the components of a max-stable random vector has a 1-Fr\'{e}chet-distribution with scale coefficient $\int_{[0,1]} \bigvee_{i=1}^d c_i f_i(e) \, de$.
More generally, we can write multiple max-linear combinations of $X$ given by a matrix $C \in [0, \infty)^{p \times d}$ 
conveniently as 
\begin{equation}\label{Eq:max_matrix:prod}
	C \diamond X := \left(\bigvee_{l = 1}^d C_{jl} X_l \right)_{j = 1, \ldots, p }. 
\end{equation}
We show in Lemma~\ref{lem:maxlin_trafo} that the resulting vector has again a max-stable distribution with 1-Fr\'{e}chet margins and explore further properties in the supplementary material. 
Accordingly, we define a max-linear matrix product by 
\begin{equation}\label{eq:maxmatprod}
  \diamond \colon [0, \infty)^{d \times p} \times [0, \infty)^{p \times k} \to [0, \infty)^{d \times k}, \quad (B, W) \mapsto B \diamond W := \left(\bigvee_{l=1}^p B_{il} W_{lj} \right)_{i=1, \ldots, d, j = 1, \ldots, k}. 
\end{equation}
Note that this max-linear matrix product is a bilinear map with respect to the operations $(\vee, \cdot)$, where the maximum operation acts as addition, combined with regular scalar multiplication.  This framework has been extensively studied in the algebraic literature (often, after a logarithmic transformation, in the isomorphic setting of the operations $(\vee, +)$) under the name tropical linear algebra.  For further reading, see the monographs~\cite{Butkovic_mls} and~\cite{MacLaganSturmfels_ittg} and for related applications of tropical linear algebra to extremes, see, e.g. \cite{Amendola_cimln, GissiblKlueppelberg_rmlm, Klueppelberg_rmlmpn}. 

\section{Definition of max-stable PCA}
\label{sec:defmaxpca}
The aim of this section is to develop a dimension reduction procedure for max-stable distributions that shares many similarities with classic PCA. 
With the regression approach to PCA from Section~\ref{subsec:regPCA} in mind, 
we recall that the  metric $\tilde \rho$ from \eqref{eq:mmetric} measures the distance between two 
$d$-variate max-stable random vectors with $1$-Fréchet margins living on the same probability space.
We have shown that we can map a $d$-variate random vector $X$ with $1$-Fréchet margins to 
a $p$-variate max-stable random vector also with $1$-Fréchet margins
by using a matrix $W \in [0, \infty)^{p \times d}$ as follows
\begin{equation} \label{eq:encform} 
  X \mapsto W \diamond X.  
\end{equation} 
We call the mapping~\eqref{eq:encform} an \textit{encoding of $X$ in $p$ variables}.  
If $p > d$, by choosing the first $d$ columns of $W$ as canonical basis vectors, $X$ is a subvector of $W \diamond X$ and the remaining $p - d$ elements in $W \diamond X$ add no additional information about $X$. 
Therefore, we will only consider the case $p \leq d$. 
Mapping the encoding back to $[0, \infty)^d$, by using a matrix $B \in [0, \infty)^{d \times p}$ as follows 
\begin{equation} 
  \label{eq:recform} 
  X \mapsto B\diamond W \diamond X, 
\end{equation} 
we obtain a $d$-variate max-stable random vector with $1$-Fréchet margins again.  
We call~\eqref{eq:recform} a \textit{reconstruction of $X$ using the matrix $H = B \diamond W$}. 
{Note that the matrix pair $(B,W)$ used to construct the matrix $H$ is not necessarily unique and the uniqueness properties differ from regular PCA. We illustrate this in Example~\ref{ex:nonunique} in the Supplement.}
We are interested in encodings and reconstructions of the random vector $X$ 
that describe the original distribution of $X$ as well as possible. 
In order to measure the quality of the reconstruction, {note that $(X, B \diamond W \diamond X)$ as a concatenated random vector is max-stable as well by Lemma~\ref{lem:maxlin_trafo}. 
So we can use
} 
the semimetric~\eqref{eq:mmetric} and are ready to define our approach to PCA for max-stable distributions.  
\begin{definition}\label{def:maxpca} 
  Let $X$ be a $d$-variate max-stable random vector with $1$-Fréchet margins. 
  Then we call a pair of matrices 
  $(B^\ast, W^\ast) \in [0, \infty)^{d \times p} \times [0, \infty)^{p \times d}$ a \textbf{max-stable PCA of $X$ with $p \leq d$ components}, 
  if it satisfies
  \begin{equation} 
    \label{eq:maxPCA}
    \tilde{\rho}(B^\ast \diamond W^\ast \diamond X, X)=\min_{(B,W) \in [0, \infty)^{d \times p} \times [0, \infty)^{p \times d}}\tilde{\rho}(B \diamond W \diamond X, X).  
  \end{equation} 
  Furthermore, we call the function 
  \begin{equation}
  \label{eq:rec_error}
  [0, \infty)^{d \times p} \times [0, \infty)^{p \times d} \to \R, \quad (B,W) \mapsto \tilde \rho(B \diamond W \diamond X, X) 
  \end{equation}
  the \textbf{reconstruction error} of the matrix pair $(B,W)$ for $X$. 
\end{definition} 
\begin{remark} 
  We want to highlight some key properties of max-stable PCA
  \begin{itemize} 
    \item[1.] The construction is a direct adaptation of the regression approach to classic PCA presented in~\eqref{eq:pca_regr}, 
    where a distribution is first mapped to a lower dimensional space and then back to the original space. 
    \item[2.] Just like regular PCA is particularly suited for data with a Gaussian distribution, 
    since the distribution is preserved under matrix products, 
    our approach preserves the max-stable distribution of the data.  
    \item[3.] The encoding $W \diamond X$ is an optimal max-linear combination such that we can reconstruct $X$ as well as possible. 
    \item[4.] We show in Lemma~\ref{lem:compactset}, that there always exists a global minimizer of \eqref{eq:maxPCA}. 
  \end{itemize} 
\end{remark} 

In order to solve the minimization problem \eqref{eq:maxPCA} for a known or estimated distribution of $X$ we shall first investigate how the reconstruction error depends on the spectral measure of $X$. 
\begin{lemma}\label{metric_representation} 
  Let $X=(X_1, \ldots, X_d)^T$ be a $d$-variate max-stable random vector with non-degenerate $1$-Fréchet margins given by a spectral measure $S$ as in~\eqref{Eq:spectral_meas_rep} or spectral functions $f_1, \ldots, f_d \in L^1_+([0,1])$ as in~\eqref{Eq:spectral_fun_rep}. 
  Then for $b_i, c_i \geq 0$, $i = 1, \ldots,d$, 
  it holds that 
  \begin{align}\label{eq:metric_representation} 
    \rho \left(\bigvee_{i= 1}^d b_i X_i, \bigvee_{i = 1}^d c_i X_i \right) 
    & = \int_{\S^{d-1}_+} \left \lvert \bigvee_{i=1}^d b_i a_i - \bigvee_{i=1}^d c_i a_i \right\rvert \, S(da) \\
    \label{metric_representation_sf} & = \int_{[0,1]} \left \lvert \bigvee_{i=1}^d b_i f_i(e) - \bigvee_{i=1}^d c_i f_i(e) \right \rvert \, de. 
  \end{align} 
\end{lemma}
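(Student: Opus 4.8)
The plan is to reduce both claimed identities to the already-recorded behaviour of a single max-linear combination under the two representations. Write $\beta(a):=\bigvee_{i=1}^d b_i a_i$, $\gamma(a):=\bigvee_{i=1}^d c_i a_i$ and, correspondingly, $g_1(e):=\bigvee_{i=1}^d b_i f_i(e)$, $g_2(e):=\bigvee_{i=1}^d c_i f_i(e)$, and set $(Y_1,Y_2):=\big(\bigvee_{i=1}^d b_i X_i,\ \bigvee_{i=1}^d c_i X_i\big)$. By Lemma~\ref{lem:maxlin_trafo} this pair is again max-stable with $1$-Fréchet margins, so that $\rho(Y_1,Y_2)$ is well defined. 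The first step is to identify $g_1,g_2$ as spectral functions of $(Y_1,Y_2)$. To do so I would compute the joint law directly: with the conventions $z/0=\infty$, the event $\{Y_1\le z_1,\,Y_2\le z_2\}$ equals $\{X_i\le (z_1/b_i)\wedge(z_2/c_i)\ \forall i\}$, and since $1/(u\wedge v)=(1/u)\vee(1/v)$ the exponent integral in \eqref{Eq:spectral_fun_rep} collapses to $\int_{[0,1]}\big(g_1(e)/z_1\vee g_2(e)/z_2\big)\,de$. Hence $g_1,g_2$ are admissible spectral functions of $(Y_1,Y_2)$, and the definition~\eqref{eq:univmetric} of the spectral distance immediately gives \eqref{metric_representation_sf}.

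For the spectral-measure identity \eqref{eq:metric_representation} I would avoid re-deriving a spectral-measure version of $\rho$ and instead use the elementary pointwise identity $|u-v|=2(u\vee v)-u-v$, valid for $u,v\ge 0$. Applying it with $u=\beta(a)$, $v=\gamma(a)$ and integrating against $S$ yields $\int_{\S^{d-1}_+}|\beta-\gamma|\,S(da)=2\int_{\S^{d-1}_+}\beta\vee\gamma\,S(da)-\int_{\S^{d-1}_+}\beta\,S(da)-\int_{\S^{d-1}_+}\gamma\,S(da)$. The key observation is that each of these three integrals is the $1$-Fréchet scale of a single max-linear combination of $X$: by \eqref{Eq:univ_maxlinS} the integrals $\int\beta\,dS$, $\int\gamma\,dS$ and $\int\beta\vee\gamma\,dS$ are the scales of $Y_1$, $Y_2$ and $Y_1\vee Y_2=\bigvee_{i=1}^d (b_i\vee c_i)X_i$ respectively, using that $\bigvee_{i=1}^d(b_i\vee c_i)a_i=\beta(a)\vee\gamma(a)$ for nonnegative $a$. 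By \eqref{Eq:univ_maxlinf} these same scales equal $\int_{[0,1]} g_1\,de$, $\int_{[0,1]} g_2\,de$ and $\int_{[0,1]} g_1\vee g_2\,de$. Substituting and applying the same algebraic identity in reverse turns the right-hand side into $\int_{[0,1]}|g_1-g_2|\,de$, which is precisely \eqref{metric_representation_sf}. This establishes both equalities at once.

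The genuinely load-bearing step is the first one: verifying that $g_1,g_2$ are spectral functions of the \emph{pair} $(Y_1,Y_2)$, not merely that they reproduce the correct marginal scales, since this is what licenses the use of the definition of $\rho$. The joint-law computation must treat the degenerate coordinates $b_i=0$ or $c_i=0$ correctly through the division-by-zero conventions, which is where the only real care is needed; the non-degeneracy assumption on the margins of $X$ ensures that the spectral representations \eqref{Eq:spectral_meas_rep}--\eqref{Eq:spectral_fun_rep} are the intended ones and keeps all scale parameters finite. The second paragraph is then essentially bookkeeping, its sole analytic input being the equality of scale parameters across the two representations already furnished by \eqref{Eq:univ_maxlinS}--\eqref{Eq:univ_maxlinf}, combined with linearity of the integral and the identity $|u-v|=2(u\vee v)-u-v$.
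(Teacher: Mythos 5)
Your proof is correct, and it overlaps with the paper's argument in its second half but takes a genuinely different route in its first. The paper never computes the joint law of the pair $(Y_1,Y_2)$: it takes from Lemma~\ref{lem:maxlin_trafo} only the \emph{existence} of abstract spectral functions $g_1,g_2$ for the pair, uses the identity $\lvert x-y\rvert = 2(x\vee y)-x-y$ to reduce the spectral distance to scale coefficients, $\rho(Y_1,Y_2)=2\sigma_{Y_1\vee Y_2}-\sigma_{Y_1}-\sigma_{Y_2}$, and then evaluates these three scale coefficients twice --- once through \eqref{Eq:univ_maxlinS} and once through \eqref{Eq:univ_maxlinf} --- applying the identity in reverse each time, so that \eqref{eq:metric_representation} and \eqref{metric_representation_sf} are obtained on an equal footing. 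You instead identify \emph{concrete} spectral functions of the pair, $g_1=\bigvee_i b_i f_i$ and $g_2=\bigvee_i c_i f_i$, by a direct computation of the joint distribution function (a spectral-function analogue of Lemma~\ref{lem:maxlin_trafo} that the paper invokes but never spells out in this form), which makes \eqref{metric_representation_sf} an immediate consequence of the definition \eqref{eq:univmetric}; your derivation of \eqref{eq:metric_representation} from it is then the same scale-coefficient bookkeeping the paper uses. What your route buys is concreteness: one of the two identities comes for free once the spectral functions are identified, and the joint-CDF computation is elementary. What the paper's route buys is that it bypasses the joint-law calculation entirely and, by expressing $\rho$ purely through scale coefficients (which by \eqref{eq:scale-par-reps} are representation-independent), it keeps the argument visibly independent of which spectral functions one chooses --- precisely the point your proof must delegate to the well-definedness remark following \eqref{eq:univmetric}, which you correctly flag as the load-bearing step.
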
 
Proof of this and nontrivial proofs for all our results below can be found in the supplementary material. Observe that using Lemma~\ref{metric_representation}, we can quantify how well a max-stable random vector $X$ with $1$-Fréchet margins is reconstructed by a matrix pair $(B,W)$ if we know the spectral measure. In particular we can rewrite the reconstruction error~\eqref{eq:rec_error} from Definition~\ref{def:maxpca} as 
\begin{equation}\label{eq:rewrite_tf} 
  \tilde{\rho}(B \diamond W \diamond X, X)=\sum_{k=1}^d \rho(B_k \diamond W \diamond X, X_k ) = \int_{\S^{d-1}_+} \sum_{k=1}^d \left \lvert B_k \diamond W \diamond a - a_k \right \rvert \, S(da).  
\end{equation} 
Because there are consistent estimators of the spectral measure $S$, we will use~\eqref{eq:rewrite_tf}  for our statistical analysis of max-stable PCA in Section~\ref{sec:stats}. 

\section{Properties of max-stable PCA}\label{sec:properties}

The properties of a minimizer to~\eqref{eq:maxPCA} differ to properties of classic PCA, 
due to the fact that we use different operations for the reconstruction matrices 
and a semimetric which minimizes an absolute value. 

Intuition from classic PCA could lead to the conjecture that we can make the choice 
$B = W^T$ to half the number of parameters needed. Contrary to intuition, this is
\emph{not} the case, 
{as seen by Example~\ref{ex:nosymmetry} in the Supplement.

Classic PCA is able to retain all information about a random vector with existing covariance matrix in $p$ components if there are only $p$ non-zero eigenvalues.}
We give a description for the class of max-stable distributions that can be reconstructed with zero reconstruction error in the following result. 
\begin{theorem}\label{thm:spectralrep}
  Let $X$ be a $d$-variate max-stable random vector with non-degenerate $1$-Fréchet margins and let $p \leq d$.
  Then the following statements are equivalent. 
  \begin{enumerate}[(i)]
    \item There exists a pair of matrices $(B,W) \in [0, \infty)^{d \times p} \times [0, \infty)^{p \times d}$ such that the reconstruction error of the max-stable PCA with $p$ components satisfies
      \begin{equation} \label{Eq:perfect_B_W}
        \tilde \rho(B \diamond W \diamond X, X ) = 0. 
      \end{equation}
    \item After possibly permuting the entries of $X$, 
      there exists a $p$-variate max-stable random vector $Y$ with non-degenerate $1$-Fréchet margins 
      and a matrix $\Lambda \in [0, \infty)^{(d-p) \times p}$, such that 
      \begin{equation}
        \label{eq:maxlinrep}
        X \disteq \begin{pmatrix}
          \mathrm{id}_{p} \\
          \Lambda
        \end{pmatrix} \diamond Y. 
      \end{equation}
  \end{enumerate}
       Furthermore, if~\eqref{eq:maxlinrep} holds, we can choose $(B,W)$ as 
      \begin{equation}\label{eq:perfrecBW}
	B = \begin{pmatrix}
          \mathrm{id}_{p} \\
          \Lambda
        \end{pmatrix} \quad 
		W = \begin{pmatrix}
          \mathrm{id}_{p}, & \mathbf 0_{p \times (d - p)}
        \end{pmatrix}. 
      \end{equation}
\end{theorem}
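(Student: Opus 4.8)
My plan is to prove the two implications separately, treating (ii) $\Rightarrow$ (i) together with the explicit choice in \eqref{eq:perfrecBW} as the easy direction. Since the reconstruction error $(B,W)\mapsto\tilde\rho(B\diamond W\diamond X,X)$ depends only on the law of $X$, I may assume $X=\begin{pmatrix}\mathrm{id}_p\\\Lambda\end{pmatrix}\diamond Y$ outright. Plugging in $B,W$ from \eqref{eq:perfrecBW} and using associativity of $\diamond$ (it is a matrix product over the semiring $([0,\infty),\vee,\cdot)$, hence associative), I would compute $W\diamond X=(X_1,\dots,X_p)^T=\mathrm{id}_p\diamond Y=Y$ and then $B\diamond W\diamond X=\begin{pmatrix}\mathrm{id}_p\\\Lambda\end{pmatrix}\diamond Y=X$, so that $\tilde\rho(B\diamond W\diamond X,X)=0$; the permutation of coordinates does not affect this. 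This simultaneously establishes (ii) $\Rightarrow$ (i) and the final assertion.

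For (i) $\Rightarrow$ (ii) I would first pass to spectral functions. Fix spectral functions $f_1,\dots,f_d$ of $X$ as in \eqref{Eq:spectral_fun_rep} and set $g_m:=\bigvee_{l=1}^d W_{ml}f_l$, which by \eqref{Eq:univ_maxlinf} are spectral functions of $U:=W\diamond X$. By Lemma~\ref{metric_representation} and the representation \eqref{eq:rewrite_tf}, the hypothesis $\tilde\rho(B\diamond W\diamond X,X)=0$ is equivalent to $\bigvee_{m=1}^p B_{im}g_m=f_i$ almost everywhere on $[0,1]$ for every $i=1,\dots,d$. Consequently every $f_i$ lies in the max-span $C:=\{\bigvee_{i=1}^d c_i f_i:\ c_i\ge 0\}$, while conversely each $g_m\in C$ by construction. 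Thus $C$ is generated both by $f_1,\dots,f_d$ and by the $p$ functions $g_1,\dots,g_p$.

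The key structural input is from tropical linear algebra: a finitely generated subsemimodule of $(L^1_+([0,1]),\vee,\cdot)$ admits a unique minimal generating set up to positive scaling (compare \cite{Butkovic_mls}), whose elements are join-irreducible and therefore proportional to a member of \emph{every} generating set. Applied to $C$, this yields that its minimal generating set has size $r\le p$ (since $g_1,\dots,g_p$ generate $C$), and that each minimal generator is proportional to some $f_{i_s}$ with $i_1,\dots,i_r$ distinct (since $f_1,\dots,f_d$ generate $C$). In particular $f_i$ lies in the max-span of $f_{i_1},\dots,f_{i_r}$ for every $i$. I would then complete $\{i_1,\dots,i_r\}$ to a set of $p$ distinct indices by adjoining any $p-r$ of the remaining coordinates (possible as $r\le p\le d$), choose a permutation $\pi$ placing these $p$ indices first, and set $Y:=(X_{\pi(1)},\dots,X_{\pi(p)})^T$, a $p$-variate max-stable subvector of $X$ with non-degenerate $1$-Fréchet margins. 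Writing each remaining spectral function as $f_{\pi(p+k)}=\bigvee_{s=1}^p\Lambda_{ks}f_{\pi(s)}$ almost everywhere with $\Lambda_{ks}\ge 0$, and noting that the first $p$ rows reproduce themselves with identity coefficients, gives exactly the block form \eqref{eq:maxlinrep}.

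The hard part is the tropical-algebra step: one must make the existence and uniqueness of the minimal generating set, and the join-irreducibility that pins each minimal generator to a single coordinate $f_{i_s}$, precise in the function-space (almost-everywhere) setting rather than the textbook finite-dimensional one; this is the structural replacement for the rank and eigenspace arguments of classical PCA. By comparison, the remaining bookkeeping—the permutation, the padding to exactly $p$ components, and the verification that the leading block is genuinely $\mathrm{id}_p$ so that the selected subvector really plays the role of $Y$—is routine but should be carried out explicitly.
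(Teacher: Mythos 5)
Your proposal is correct and follows essentially the same route as the paper: the easy direction is the identical computation exploiting $W \diamond B = \mathrm{id}_p$, and the hard direction likewise passes to spectral functions, shows the families $\{f_1,\ldots,f_d\}$ and $\{g_1,\ldots,g_p\}$ generate the same $\vee$-span, pins each minimal (equivalently, $\vee$-linearly independent) generator to a rescaled coordinate function $f_{i_s}$, and then permutes and pads to obtain the block form~\eqref{eq:maxlinrep}. The structural fact you defer as ``the hard part''---that in the a.e.\ function-space setting an independent generating family of the common $\vee$-span must consist of rescaled members of every other generating family---is exactly what the paper proves as Lemma~\ref{lemma:basisuniqueL1}, via a direct substitution argument using $\vee$-linear independence rather than the full minimal-generating-set machinery you invoke.
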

We call a random vector $X$ which satisfies the statements above \emph{perfectly reconstructable} (for a given value of $p$). In essence, this theorem says that a distribution is perfectly reconstructable by max-linear combinations if and only if it has the same distribution as the concatenation of a $p$-variate max-stable random vector $Y$
and $d-p$ max-linear combinations of $Y$. 

Note that the matrices $B$ and $W$ in \eqref{eq:perfrecBW} contain many zeros, since $B$ has at least $p(p-1)$ and $W$ has at least $p(d-1)$ zero entries, hence, at least $p^2 - 2p + pd$ out of the $2dp$ entries in the matrix pair $(B,W)$ are zero. Furthermore, $B \diamond W$ also has at least $(d-p)p + (p-1)p$ zero entries, yielding that all appearing matrices can be seen as sparse. 
{Before we can give an alternative characterization of perfectly reconstructable distributions, we introduce some notation about linear independence with respect to the maximum operation. 
\begin{definition}
\label{def:maxlinindepvecs}
If for a family of vectors $x_1, \ldots, x_n \in [0, \infty)^d$ there exist $i \in  \{1, \ldots, n\}$ and constants $\alpha_j \geq 0$, $j = 1, \ldots, n$ such that the following relation holds:  
\begin{equation*}
  x_i = \bigvee_{j \neq i} \alpha_j x_j, 
\end{equation*}
then we call $x_1, \ldots, x_n$ \textbf{$\vee$-linearly dependent}. If $x_1, \ldots, x_n$ are not $\vee$-linearly dependent, we call $x_1, \ldots, x_n$ \textbf{$\vee$-linearly independent}.
\end{definition}
The next result fully characterizes perfectly reconstructable distributions using $\vee$-linear dependence 
} and entails that perfectly reconstructable distributions by max-stable PCA with $p$ components always have a representation with a matrix that has only $p$ $\vee$-linearly independent rows, yielding a nice analogy to classic PCA. 
\begin{corollary}
  \label{cor:mlf_rep}
  Let $X$ be a $d$-variate max-stable random vector with non-degenerate $1$-Fréchet margins. 
Then $X$ is perfectly reconstructable by max-stable PCA with $p\leq d$ components (i.e.\ one of the equivalent statements in Theorem~\ref{thm:spectralrep} holds), if and only if 
there exist $l \in \N$ and an $l$-variate max-stable random vector $Z$ with non-degenerate $1$-Fréchet margins and a matrix $A \in [0, \infty)^{d \times l}$ with at most $p$ $\vee$-linearly independent rows such that $X$ satisfies 
\begin{equation}
\label{eq:genmaxlinfact}
  X \disteq A \diamond Z.
\end{equation}
\end{corollary}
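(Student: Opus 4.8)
The plan is to prove the two implications separately, using Theorem~\ref{thm:spectralrep} to identify ``perfectly reconstructable'' with the block representation $X \disteq \begin{pmatrix}\mathrm{id}_{p}\\\Lambda\end{pmatrix}\diamond Y$, and using the theory of finitely generated max-cones to connect that block form with the $\vee$-linear rank condition on $A$ in~\eqref{eq:genmaxlinfact}.

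For the direction from perfect reconstructability to the factorization, I would feed the representation from Theorem~\ref{thm:spectralrep}(ii) directly into~\eqref{eq:genmaxlinfact}: set $l = p$, $Z = Y$, and (absorbing the permutation of the entries of $X$ into a permutation of rows) $A = \begin{pmatrix}\mathrm{id}_{p}\\\Lambda\end{pmatrix}$. The first $p$ rows $e_1, \ldots, e_p$ generate all of $[0,\infty)^p$ under $\vee$-combinations, so every row of $\Lambda$ equals $\bigvee_{j=1}^p \Lambda_{ij} e_j$. Hence any $\vee$-linearly independent set of rows of $A$ has at most $p$ elements, which is exactly the required bound; a permutation of the rows does not change this count.

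For the converse I would start from $X \disteq A \diamond Z$ and extract from the rows $A_1, \ldots, A_d$ a small $\vee$-linearly independent spanning set. Here I would invoke the structure of the finitely generated max-cone $C \subseteq [0,\infty)^l$ generated by these rows: its unique minimal generating set consists of extremal generators, each of which is a positive multiple of one of the $A_i$ and which are mutually $\vee$-linearly independent. Indexing such a generating subset by $J \subseteq \{1, \ldots, d\}$, $\vee$-independence forces $|J| =: q \le p$, and every row of $A$ is a $\vee$-combination of $\{A_j : j \in J\}$, i.e.\ $A = \Gamma \diamond A_J$ for some $\Gamma \in [0,\infty)^{d \times q}$ whose rows indexed by $J$ are canonical basis vectors. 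Writing $X_J := A_J \diamond Z = (X_j)_{j \in J}$, associativity of the max-linear matrix product gives $X \disteq \Gamma \diamond X_J$, where $X_J$ is a $q$-variate max-stable vector whose margins are non-degenerate $1$-Fréchet since they are components of $X$.

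To reach the $p$-component block form of Theorem~\ref{thm:spectralrep}(ii), I would enlarge $X_J$ to a $p$-variate subvector $Y$ of $X$ by adjoining $p - q$ further components of $X$ (possible as $d \ge p$), keeping identity rows on these coordinates in $\Gamma$; $Y$ is max-stable and non-degenerate by Lemma~\ref{lem:maxlin_trafo}. After permuting the entries of $X$ so that the $Y$-coordinates come first, this yields $X \disteq \begin{pmatrix}\mathrm{id}_{p}\\\Lambda\end{pmatrix}\diamond Y$, so Theorem~\ref{thm:spectralrep} delivers perfect reconstructability with $p$ components. I expect the crux to be the tropical step: a \emph{maximal} $\vee$-linearly independent set of rows need not itself be spanning, so one must pass to the extremal generators of the max-cone, which are simultaneously drawn from the original rows, $\vee$-linearly independent, and spanning, in order to bound the spanning size by $p$. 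I would cite~\cite{Butkovic_mls} for these max-cone facts and verify carefully that the extremal generators are indeed taken from the $A_i$ and are $\vee$-linearly independent in the sense of Definition~\ref{def:maxlinindepvecs}.
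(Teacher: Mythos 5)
Your overall route is the paper's: both implications are reduced to Theorem~\ref{thm:spectralrep}, the forward one by taking $l=p$, $Z=Y$ and $A=\begin{pmatrix}\mathrm{id}_p\\\Lambda\end{pmatrix}$, the converse by selecting a spanning subset of rows of $A$, writing $X\disteq\Gamma\diamond X_J$, padding to $p$ coordinates and permuting. Your converse is in fact \emph{more} careful than the paper's: where the paper simply declares ``without loss of generality, let the first $p$ rows of $A$ be such that they can be used to construct the remaining $d-p$ rows by $\vee$-linear combinations'', you justify this selection via the minimal generating set (extremal generators) of the finitely generated max-cone spanned by the rows --- the finite-dimensional counterpart of the paper's Lemma~\ref{lemma:basisuniqueL1}. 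This works under either reading of the rank hypothesis; a more elementary alternative achieving the same end is to recursively delete rows that are $\vee$-combinations of the remaining ones, the elimination the paper itself uses inside the proof of Theorem~\ref{thm:spectralrep}. Your handling of non-degeneracy (the selected components are components of $X$) and of the padding step is correct.

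The one genuinely incorrect step is in your forward direction: from ``every row of $A$ is a $\vee$-combination of $e_1,\ldots,e_p$'' you infer that ``any $\vee$-linearly independent set of rows of $A$ has at most $p$ elements''. That inference is invalid in tropical linear algebra, and the inferred claim is false in general: for $p=3$ take $\Lambda$ with rows $(1,1,2)$, $(0,1,1)$, $(1,2,1)$; then the four rows $e_1$, $(1,1,2)$, $(0,1,1)$, $(1,2,1)$ of $\begin{pmatrix}\mathrm{id}_3\\\Lambda\end{pmatrix}$ are $\vee$-linearly independent in the sense of Definition~\ref{def:maxlinindepvecs} (none is a $\vee$-combination of the other three; e.g.\ any combination of the others matching the second coordinate of $(1,1,2)$ has third coordinate at most $1<2$, and similarly in the remaining cases). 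This is exactly the phenomenon flagged in the Remark following Corollary~\ref{cor:mlf_rep}, citing Theorem 3.4.4 of \cite{Butkovic_mls}. The slip is harmless only because the rank condition in the corollary --- as the paper's own converse proof makes operational, and as its forward direction asserts in the same words --- is to be read as ``at most $p$ rows of $A$ $\vee$-generate all rows of $A$'', and that weaker statement is already established by your preceding sentence; the stronger claim about arbitrary independent subsets is neither true nor needed, so it should be deleted rather than repaired.
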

{We give an example here to demonstrate how the two results obtained are related. 
\begin{example}
\label{ex:mlmexample}
Consider the following two $5 \times 3$ matrices
\begin{equation}
  \label{eq:mlm_mats}
    A^{(1)}  := \begin{pmatrix}
      \tfrac 3 5 & \tfrac 3 {10} & \tfrac 1 {10} \\
      \tfrac 1 {20} & \tfrac 3 4 & \tfrac 1 5 \\
      \tfrac 1 {10} & \tfrac 1 {10} & \tfrac 4 5 \\
      \tfrac 6 {17} & \tfrac 3 {17} & \tfrac 8 {17} \\
      \tfrac 1 3 & \tfrac 2 9 & \tfrac 4 9 
    \end{pmatrix}, \quad 
    A^{(2)} : = \begin{pmatrix}
      \tfrac 2 3 & 0 & \tfrac 1 3 \\
      \tfrac 2 3 & \tfrac 1 3 & 0 \\
      0 & \tfrac 2 3 & \tfrac 1 3 \\
      0 & \tfrac 1 3 & \tfrac 2 3 \\
      \tfrac 1 5 & \tfrac 3 5 & \tfrac 1 5 
    \end{pmatrix}.
  \end{equation}
  Straightforward calculations show that in the first matrix the first three rows are $\vee$-linearly independent and the last two rows are $\vee$-linear combinations of the first three rows. This is seen by 
  \begin{equation}
  \label{eq:example_props}
      A_4^{(1)} = \frac {10}{17} A_1^{(1)} \vee \frac {10}{17}A_3^{(1)}, \quad 
      A_5^{(1)} = \frac 5 9 A_1^{(1)} \vee \frac 8 {27} A_2^{(1)} \vee \frac 5 9 A_3^{(1)}. 
  \end{equation}
  Meanwhile, for the second matrix all $5$ rows are $\vee$-linearly independent. Consider now max-stable random vectors $X^{(1)}, X^{(2)}$ with 
   \begin{equation*}
    X^{(i)} := A^{(i)} \diamond Z, \quad i = 1,2, 
  \end{equation*} where $Z = (Z_1, Z_2, Z_3)^T$ has independent, non-degenerate $1$-Fréchet margins with $\sigma_1=\sigma_2=\sigma_3=1$. By Corollary~\ref{cor:mlf_rep}, it holds that $X^{(1)}$ is perfectly reconstructable with $p = 3$, while $X^{(2)}$ is not perfectly reconstructable with $p \leq 4$. From~\eqref{eq:example_props} we can also calculate a representation of $X^{(1)}$ as in Theorem~\ref{thm:spectralrep} as follows. We define the 3-dimensional random vector $Y^{(1)}$ by  $Y_i^{(1)} := A_i^{(1)} \diamond Z, i = 1,2,3$, and it immediately follows that 
  \begin{equation*}
      X^{(1)} \disteq \begin{pmatrix}
          & \mathbf{id}_3 & \\
          \frac {10}{17} & 0 & \frac {10}{17} \\
          \frac 5 9 & \frac 8 {27} & \frac 5 9 
      \end{pmatrix} \diamond Y^{(1)}.
  \end{equation*}
  This highlights how random vectors that satisfy Theorem~\ref{thm:spectralrep} and Corollary~\ref{cor:mlf_rep} are such that $d-p$ components are completely dependent, in the sense of being $\vee$-linear combinations of the other $p$ components of the random vector. Therefore, we expect the procedure to work well if at least some components exhibit high dependence. 
  \end{example}
  Note that the example above falls into the well studied parametric model class of max-linear models~\cite{Amendola_cimln, WangStoev_sdmsrf}. They are popular for evaluating the performance of unsupervised learning procedures in multivariate extremes and often appear as model case for these procedures~\cite{Engelke_ssme, GissiblKlueppelberg_rmlm, JanssenWan_kmce} and we will revisit them in our experimental studies, see Section~\ref{Subsubsec:maxlinear}. 
  }

\begin{remark}
    \begin{enumerate}[(i)]
    \item Corollary~\ref{cor:mlf_rep} resembles the motivation of classic PCA from the introduction, where PCA perfectly reconstructs a Gaussian random vector $X$ with $p$ components, if and only if for $Z = (Z_1, \ldots, Z_p)^T$, $Z_i \stackrel{i.i.d.}{\sim} \mathcal N(0,1)$ and a matrix $A \in \R^{d \times p}$, $X$ is given by 
\begin{equation*}
    X \disteq A \cdot Z.
\end{equation*}
\item Note that in the setting of tropical linear algebra, there can be more than $p$ $\vee$-linearly independent rows in a matrix of dimension $d \times p, p \leq d$, see, e.g., \cite{Butkovic_mls} Theorem 3.4.4. Furthermore, a $p$-variate max-stable random vector can in general not be written as a $\vee$-max linear combination of $p$ \emph{independent} Fr\'{e}chet random variables: this is only possible if the spectral measure is discrete with $p$ points of mass, see~\cite{Yuen_crps}. Accordingly, in the max-stable PCA setting, the latent random vector $Y$ does not necessarily have independent entries. 
\end{enumerate}
\end{remark}

\section{Statistics of max-stable PCA}\label{sec:stats}

In this section we outline how to apply max-stable PCA to data. To this end, we should first revisit our assumption of max-stability of all random vectors which appeared in Sections \ref{sec:defmaxpca} and \ref{sec:properties}. It was motivated by the fact that max-stable distributions are the only possible limit distributions that can arise for maxima of i.i.d.\ random vectors with margins standardized to a 1-Fr\'{e}chet distribution, cf.\ Section~\ref{subsec:evt}. This result can in fact be generalized to observations from stationary multivariate time series that satisfy a mild mixing condition, see \cite[Theorem 10.22]{Beirlant_soe}. Thus, whenever we deal with data that has been obtained by taking maxima from a sufficiently large number of observations we can use a max-stable distribution as an approximation for the resulting multivariate distribution.  

Recall that we have used Lemma~\ref{metric_representation} to obtain a convenient representation of the reconstruction error from Definition~\ref{def:maxpca} by equation~\eqref{eq:rewrite_tf}.
This allows to apply our procedure to data, since there are multiple estimators for the spectral measure, ranging from parametric to nonparametric, and for our purposes, we can use a plugin approach.  For different choices to nonparametric estimators see for example~\cite{DeHaan_npesm, Einmahl_mlesm} and~\cite{Engelke_ehrd, Tawn_bvevt, Wadsworth_dmse} is an incomplete list for semiparametric and parametric alternatives to estimate the spectral measure. We briefly revise a well known non-parametric estimator of the spectral measure $S$ with good theoretical properties~\cite{DeHaan_npesm}. Let $X_{i}=(X_{1i}, \ldots, X_{di})^T \in \R^d$ be i.i.d. realizations from a $d$-variate distribution function $F$ with continuous marginals in the max-domain of attraction of a $d$-variate extreme value distribution $G$.  To ensure that we can recover the spectral measure of the transformed limit distribution $G^*$ with $1$-Fréchet margins and equal scale coefficients $\sigma_1=\ldots=\sigma_d=1$, we transform $X_i$ to a new set of data $\hat X_i^*$ by 
\begin{equation}\label{eq:trafodata} 
  \hat X_{ki}^* =\frac 1 {1 - \hat F_k(X_{ki})}, \;\;\; k=1, \ldots, d,
\end{equation} 
where $\hat F_k$ is an estimator of the marginal distribution $F_k$ obtained from the ranks of the observations given by 
\begin{equation*} 
  R_{ki} := \sum_{l=1}^n \mathbbm 1_{\{X_{kl} \leq X_{ki}\}}, \;\;\; k=1, \ldots, d, i=1, \ldots, n.
\end{equation*} 
Using the ranks $R_{ki}$, we set 
\begin{equation}\label{eq:cdfest} 
  \hat F_k(X_{ki}) = \frac 1 n (R_{ki} - 1).  
\end{equation} 
We do not use the classic empirical distribution function to prevent division by zero in~\eqref{eq:trafodata}.  Note that other choices instead of~\eqref{eq:cdfest} are possible, see~\cite{Beirlant_soe} (9.37) for example. The transformed data $\hat X_i^*$ can be used to construct a non-parametric estimator for the spectral measure $S$ on the sphere given by a norm $\| \cdot \|$ on $\R^d$ as follows
\begin{equation}\label{eq:est_spectralmeasure} 
  \hat S_n(A) := \frac 1 k \sum_{i=1}^n \mathbbm 1_{ \left\{ \| \hat X_i^* \| > \frac n k, \frac {\hat X_i^*} {\| \hat X_i^* \|} \in A \right\}}, \quad A \in \mathcal B(\S^{d-1}_+).  
\end{equation} 
This estimator is based on the "Peaks over threshold" approach, which uses the observations with the largest norms in a data set to estimate extremal characteristics (see, e.g., \cite[Section 9.4]{Beirlant_soe}). According to this principle, as long as \( k = k(n) \) and \( n \) satisfy suitable growth conditions, it is sufficient that the data is a sample from a distribution in the max-domain of attraction of some multivariate extreme value distribution $G$, in order for \eqref{eq:est_spectralmeasure} to be a consistent estimator of the spectral measure. The estimator \eqref{eq:est_spectralmeasure} for the spectral measure has been extensively studied in the literature~\cite{DeHaan_npesm, Beirlant_soe} and is related to many other estimators of the dependence structure (see, e.g., \cite{Drees_barstdf, Einmahl_mlesm, deHaanResnick_eldmve}).
 Thus, using data $X_1, \ldots, X_n$, instead of the true distribution, we calculate the empirical reconstruction error using the estimated spectral measure, given by 
\begin{equation}
\label{eq:emprecerror}
    \hat{\tilde \rho}_n (B \diamond W \diamond X, X) := \int_{\S^{d-1}_+} \sum_{k=1}^d \lvert B_k \diamond W \diamond a - a_k \rvert \hat S_n(da). 
\end{equation}
Because $\hat S_n$ is constructed from the random vectors $X_1,\ldots, X_n$, it is a random measure. The following general theorem applies to max-stable PCA as a special case. 
\begin{theorem}\label{thm:consistency}
  Let $S$ be a finite measure on $(\S^{d-1}_+, \mathcal B(\S^{d-1}_+))$ and 
  $(\hat S_n)_{n \in \N}$ be a sequence of random finite measures on $(\S^{d-1}_+, \mathcal B(\S^{d-1}_+))$ defined on a common probability space $(\Omega, \mathcal A, \P)$, such that for all continuous functions $f\colon \S^{d-1}_+ \to \R$ 
  \begin{equation}
    \label{eq:weakconv_prob}
  \int_{\S^{d-1}_+} f(a) \hat S_n(da) \xrightarrow{P}\int_{\S^{d-1}_+} f(a) S(da).  
  \end{equation}
  Furthermore, let $K \subseteq \R^m, m \in \N$ be a compact set and 
  $c \colon K \times \S^{d-1}_+ \to \R$ be a continuous function. 
  Then it holds that 
  \begin{enumerate}[(i)]
    \item There exists a random sequence $(\hat \theta_n)_{n \in \N} \in K$ 
  on $(\Omega, \mathcal A, \P)$ satisfying
  \begin{equation}
    \label{eq:riskminimizers}
    \int_{\S^{d-1}_+} c(\hat \theta_n, a) \, \hat S_n(da) = \inf_{\theta \in K} \int_{\S^{d-1}_+} c(\theta, a) \, \hat S_n(da) \quad \forall n \in \N. 
  \end{equation}
  \item For any random sequence $(\hat \theta_n)_{n \in \N}$ satisfying~\eqref{eq:riskminimizers}, it holds that 
  \begin{equation}
    \label{eq:ermresult}
    \int_{\S^{d-1}_+} c(\hat \theta_n, a) \, S(da) \xrightarrow{P}  \inf_{\theta \in K} \int_{\S^{d-1}_+} c(\theta, a) \, S(da)
     \quad n \to \infty. 
  \end{equation}
\end{enumerate}
\end{theorem}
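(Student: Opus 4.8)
The plan is to treat this as a standard M-estimation consistency result, established via a uniform law of large numbers over the compact parameter set $K$. Throughout I abbreviate $M(\theta) := \int_{\S^{d-1}_+} c(\theta, a)\, S(da)$ and $M_n(\theta) := \int_{\S^{d-1}_+} c(\theta, a)\, \hat{S}_n(da)$, and set $M^\ast := \inf_{\theta \in K} M(\theta)$. Both assertions will follow once the two objective functions are controlled uniformly in $\theta$.

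For part (i), fix $n$ and an outcome $\omega$. Since $c$ is continuous on the compact product $K \times \S^{d-1}_+$ it is bounded, and by dominated convergence (using finiteness of $\hat{S}_n(\omega)$) the map $\theta \mapsto M_n(\theta)(\omega)$ is continuous on $K$; being continuous on a compact set, it attains its infimum, so a minimizer exists for every pair $(\omega, n)$. The delicate point is measurability of a selection: the map $(\omega, \theta) \mapsto M_n(\theta)(\omega)$ is a Carath\'eodory function, measurable in $\omega$ for fixed $\theta$ because $\hat{S}_n$ is a random measure, and continuous in $\theta$ for fixed $\omega$. I would therefore invoke a measurable selection theorem for the argmin of a Carath\'eodory integrand to obtain a measurable map $\hat{\theta}_n \colon \Omega \to K$ achieving the infimum. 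This selection step is the main technical obstacle in (i), and the part most easily glossed over.

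For part (ii), the core step is to upgrade the pointwise convergence $M_n(\theta) \xrightarrow{P} M(\theta)$, which is exactly hypothesis~\eqref{eq:weakconv_prob} applied to the continuous function $c(\theta, \cdot)$, to uniform convergence $\sup_{\theta \in K} |M_n(\theta) - M(\theta)| \xrightarrow{P} 0$. Here I exploit that $c$ is \emph{uniformly} continuous on $K \times \S^{d-1}_+$: given $\epsilon > 0$ there is $\delta > 0$ with $|c(\theta, a) - c(\theta', a)| < \epsilon$ for all $a$ whenever $\|\theta - \theta'\| < \delta$. Covering $K$ by finitely many $\delta$-balls centered at $\theta_1, \ldots, \theta_N$ and using, for $\theta$ in the ball around $\theta_j$, the bound
\[
|M_n(\theta) - M(\theta)| \leq \epsilon\,\hat{S}_n(\S^{d-1}_+) + |M_n(\theta_j) - M(\theta_j)| + \epsilon\, S(\S^{d-1}_+),
\]
I obtain $\sup_{\theta} |M_n(\theta) - M(\theta)| \leq \epsilon\bigl(\hat{S}_n(\S^{d-1}_+) + S(\S^{d-1}_+)\bigr) + \max_{1 \le j \le N} |M_n(\theta_j) - M(\theta_j)|$. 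Applying the hypothesis to the constant function $1$ gives $\hat{S}_n(\S^{d-1}_+) \xrightarrow{P} S(\S^{d-1}_+)$, hence boundedness in probability, while the finite maximum tends to $0$ in probability; since $\epsilon$ is arbitrary, uniform convergence in probability follows.

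Finally I combine uniform convergence with the minimizing property. Writing
\[
|M(\hat{\theta}_n) - M^\ast| \leq |M(\hat{\theta}_n) - M_n(\hat{\theta}_n)| + |M_n(\hat{\theta}_n) - M^\ast|,
\]
the first term is bounded by $\sup_{\theta} |M(\theta) - M_n(\theta)| \to 0$ in probability. For the second term, $M_n(\hat{\theta}_n) = \inf_{\theta} M_n(\theta)$ by~\eqref{eq:riskminimizers}, and since the infimum functional is $1$-Lipschitz with respect to the supremum norm, $|\inf_{\theta} M_n(\theta) - \inf_{\theta} M(\theta)| \leq \sup_{\theta} |M_n(\theta) - M(\theta)| \to 0$. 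Both terms vanish in probability, which yields~\eqref{eq:ermresult}. The substantive mathematical content lies in the uniform-convergence step, whereas the measurable selection underpinning (i) is the subtlest technical ingredient to make fully rigorous.
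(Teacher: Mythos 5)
Your proposal is correct and follows essentially the same route as the paper's proof: part (i) via the Carath\'eodory-function structure and a measurable selection (the paper cites the Measurable Maximum Theorem, Aliprantis--Border Theorem 18.19), and part (ii) via uniform continuity of $c$ on $K \times \S^{d-1}_+$, a finite $\delta$-ball cover, the identical three-term bound with $\hat{S}_n(\S^{d-1}_+) \xrightarrow{P} S(\S^{d-1}_+)$ from testing against the constant function, and then uniform convergence in probability. The only cosmetic difference is the concluding step, where you invoke the $1$-Lipschitz property of the infimum functional under the supremum norm, while the paper instead fixes a minimizer $\theta^\ast$ of the limit risk and uses a sign argument in a three-term decomposition; both are standard and equivalent in substance.
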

This theorem establishes that for suitable parameter spaces and cost functions, 
the empirical risk minimization approach does indeed asymptotically minimize the true risk. 

For i.i.d. data from a distribution that lies in the max-domain of attraction of $G$, the estimator~\eqref{eq:est_spectralmeasure} converges in the desired sense to the true spectral measure $S$ of the limit distribution if $k = k(n) \to \infty$ and $\frac k n \to 0$, see~\cite{DeHaan_npesm}. {Theorem~\ref{thm:consistency} together with Lemma~\ref{lem:compactset} from the supplement thus shows that asymptotically the true risk is minimized when replacing the unknown $S$ by the estimator $\hat{S}_n$ from~\eqref{eq:est_spectralmeasure} in max-stable PCA.

\section{Experimental studies}\label{sec:experiments}

Since we identified a class of theoretical models that can be perfectly reconstructed by max-stable PCA in Theorem~\ref{thm:spectralrep} and Corollary~\ref{cor:mlf_rep}, and established a consistency result in Theorem~\ref{thm:consistency}, it is theoretically justifiable to apply max-stable PCA to data to detect if it approximately follows a distribution as in~\eqref{eq:maxlinrep}. To demonstrate that the procedure also works in practice, we use our implementation (see \ref{subsec:implementation} for details) and investigate the finite sample performance of max-stable PCA for simulated data and a reference real data set. We propose to plot the empirical reconstruction errors for different values of $p$ in an ''elbow plot'' and use a heuristic to choose $p$ at an ''elbow'', for the simulated data. Additionally, we recommend plotting the bivariate margins of the data versus the reconstructed data to visually inspect if the reconstruction captures key properties, if the dimension of the data is small. Simulations are carried out using the programming language \emph{R}~\cite{Rlang} and our R-package maxstablePCA~\cite{Reinbott_maxpcaR}. For reproducibility, the scripts for the simulations are provided in the complementary repository~\url{www.github.com/FelixRb96/maxstablePCA_examples}. 

\subsection{Max-linear models}\label{Subsubsec:maxlinear}

  We evaluate if we can recover the max-linear models from Example~\ref{ex:mlmexample} given by the matrices~\eqref{eq:mlm_mats} by the empirical procedure. We set up the simulation study by simulating $n = 10000$ i.i.d. observations of $X^{(i)}, i=1,2$ and use the estimator~\eqref{eq:est_spectralmeasure} with {$n/k=100$ and the $\| \cdot \|_{\infty}$ norm}. We let $p = 1, \ldots, 4$ be the number of components in max-stable PCA and plot $p$ versus the empirical reconstruction error in Figure~\ref{fig:elbowplot_mlm}. The plot indicates that at $p = 3$, the elbow plot flattens close to zero for the perfectly reconstructable model $X^{(1)}$, agreeing with our theoretical considerations up to small approximation errors by the empirical counterparts and optimization errors. For the second model, the elbow plot never reaches zero, indicating that there is a loss of information for all $p = 1, \ldots, 4$. However, considering the reconstruction for $p = 3$ for the sample of $X^{(2)}$ in Figure~\ref{fig:mlmplots}, the max-stable PCA still captures many of the typical rays the max-linear model concentrates around for large values. The estimated matrices $(\hat B, \hat W)$ for $p = 3$ are reported in the Supplement in ~\eqref{eq:app_mlmest1} for the perfectly reconstructable case and in~\eqref{eq:app_mlmest2} for the second case. In both cases, max-stable PCA tends to prioritize one entry of the random vector $X$ to construct the encoded states $W \diamond X$. Also, the matrix used for the reconstruction $B$ can be seen as close (w.r.t. to the $\infty$-norm for example) to a row permuted version of~\eqref{eq:maxlinrep} up to rescaling of the rows. 
  \begin{figure}[htb]
    \centering
    \begin{subfigure}[b]{0.3\textwidth}
      \centering
      \includegraphics[width=\textwidth]{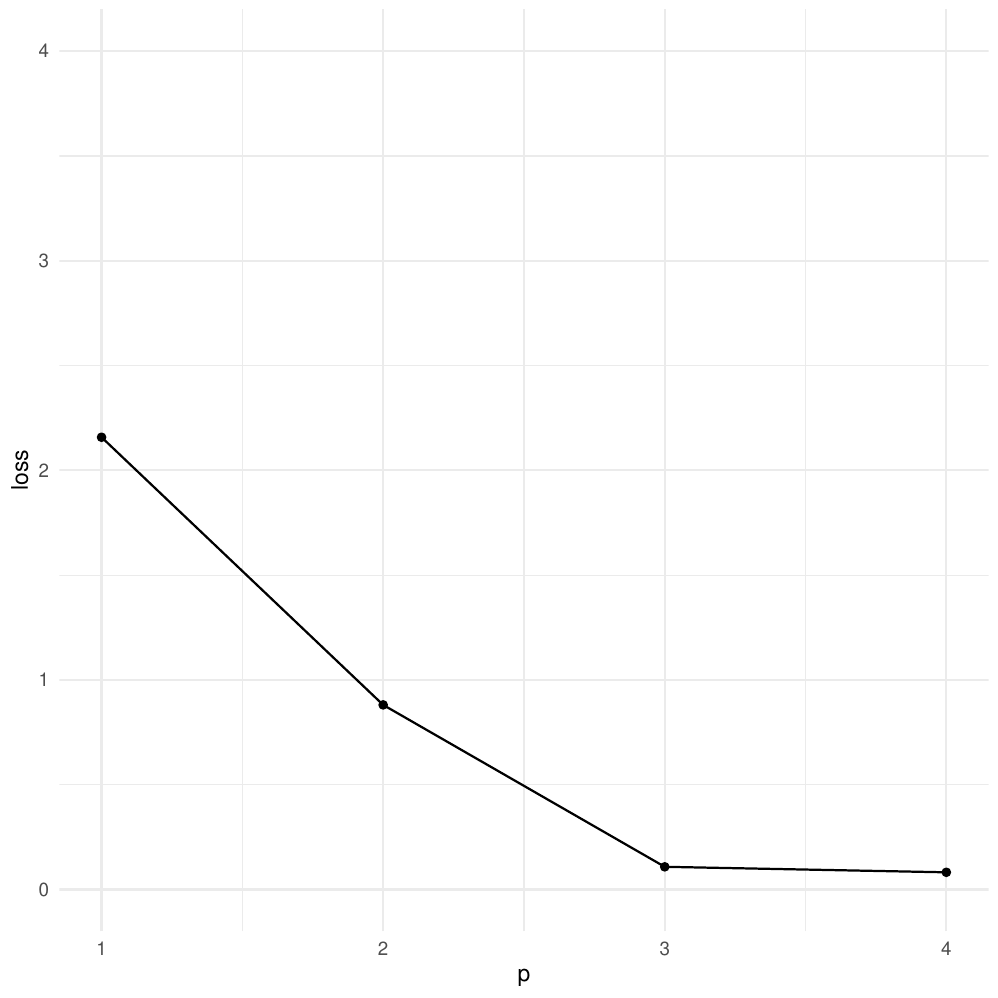}
      \caption{$A^{(1)}$ }
    \end{subfigure}
    \hfill
    \begin{subfigure}[b]{0.3\textwidth}
      \centering
      \includegraphics[width=\textwidth]{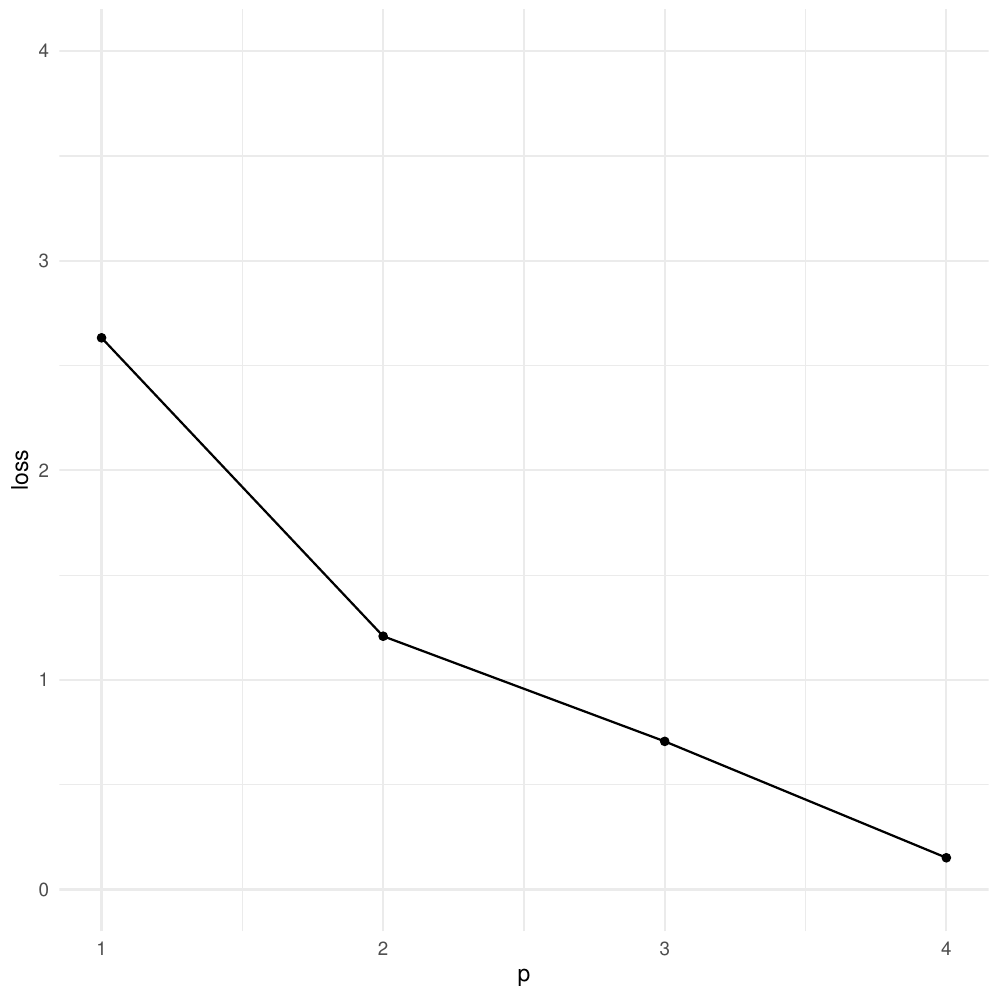}
      \caption{$A^{(2)}$ }
    \end{subfigure}
    \caption{The reconstruction error values of the max-stable PCA for $p = 1, \ldots, 4$ for two max-linear models from example~\ref{ex:mlmexample} by the matrices in~\eqref{eq:mlm_mats}.}
    \label{fig:elbowplot_mlm}
    \end{figure}

\subsection{Logistic models}\label{Subsubsec:logistic}

  Since max-linear models are characterized by a discrete spectral measure (see \cite{Yuen_crps}), 
  we will in this section evaluate the performance for max-stable PCA for a spectral measure that allows for a density. The logistic model is one of the most frequently used parametric models for multivariate extreme value theory, see, e.g., ~\cite[Section 9.2.2]{Beirlant_soe}.
  The distribution function of the logistic model with $1$-Fréchet margins and all scale coefficients equal to one
  with parameter $\beta \in (0, 1]$ is given by the cumulative distribution function
  \begin{equation}
    \label{eq:logmod}
    G \colon \R^d \to [0,1], \quad z \mapsto G(z) = \exp \left( - \left(\sum_{k=1}^d \left( \frac 1 {z_i} \right)^{\frac 1 {\beta}} \right)^{\beta} \right) \mathbbm 1_{(0, \infty)^d}(z). 
  \end{equation}
  Note that it has been shown in~\cite{Tawn_meme}  that this distribution function has a spectral density, 
  and the parameter $\beta$ controls the strength of the dependence between the components. 
  The case $\beta = 1$ corresponds to complete independence and the limit $\beta \downarrow 0$ corresponds to full dependence. 
  We simulate from the model with $d = 5$ and the following three choices for $\beta$
  \begin{equation*}
    \beta_1 = 0.2, \quad \beta_2 = 0.5, \quad \beta_3 = 0.8. 
  \end{equation*}
  Again, we use the same general setup for our simulation study and 
  sample 10000 i.i.d. observations from the model for the different values of $\beta$, {set $\frac n k = 100$, use the $\| \cdot \|_{\infty}$ norm}, fit max-stable PCA 
  to the realizations for $p = 1, \ldots, 4$ and create elbow plots we report in Figure~\ref{fig:elbowplot_logistic}, as well as pairplots of the bivariate margins presented in Figure~\ref{fig:logisticplots}. 
  This time, we do not observe a clear elbow in any of the three models. 
  One can therefore conclude that there is no low dimensional factor model that describes the data well in this case. 
  Figure~\ref{fig:logisticplots} shows the fitted data plotted against the realization in a pairplot for $p = 3$ and the corresponding estimated matrices are reported in~\eqref{eq:app_log1},~\eqref{eq:app_log2} and~\eqref{eq:app_log3} in the Appendix. Again, in all three cases max-stable PCA tends to highly prioritize one entry of $X$ to construct each encoded state in $W \diamond X$ and the matrix $B$ again is close to~\eqref{eq:maxlinrep} up to rescaling and permutation of rows. 
  Here the overall values of reconstruction errors decrease the more dependence there is in the data, i.e. the smaller the value of $\beta$ is. This is not surprising as the limiting $\beta \downarrow 0$ corresponds to a degenerate model of full dependence, i.e. almost surely equal components, which is perfectly reconstructable even for $p=1$. 
  \begin{figure}[htb]
    \centering
    \begin{subfigure}{0.32\textwidth}
      \centering
      \includegraphics[width=\textwidth]{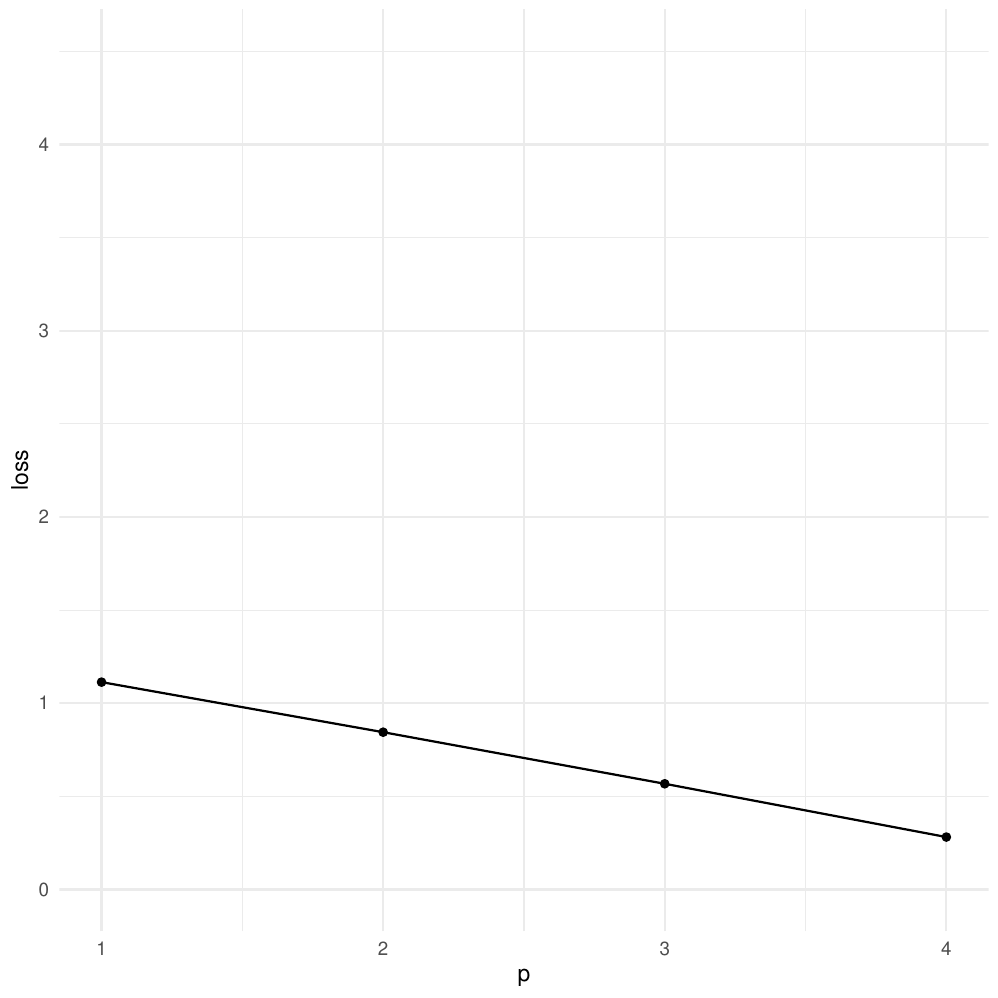}
      \caption{$\beta = 0.2$. }
    \end{subfigure}
    \hfill
    \begin{subfigure}{0.32\textwidth}
      \centering
      \includegraphics[width=\textwidth]{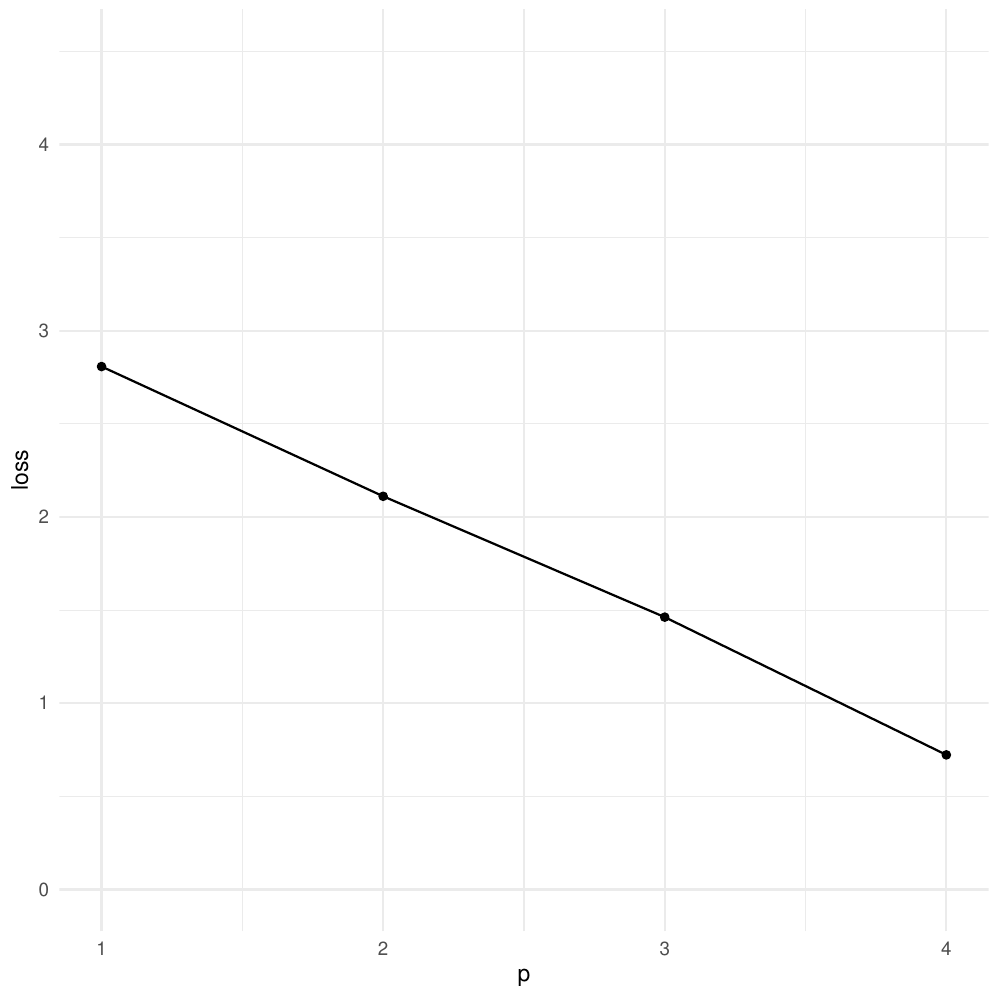}
      \caption{$\beta = 0.5$. }
    \end{subfigure}
    \hfill
    \begin{subfigure}{0.32\textwidth}
      \centering
      \includegraphics[width=\textwidth]{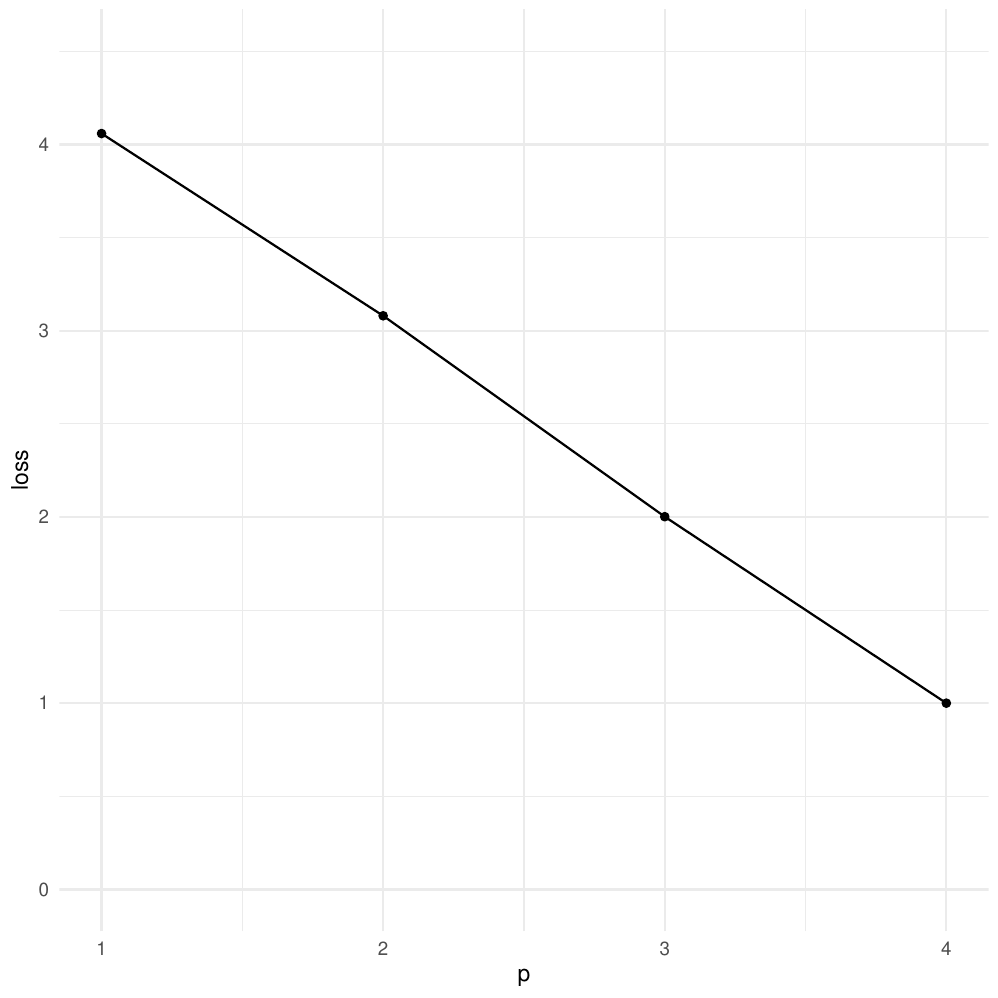}
      \caption{$\beta = 0.8$. }
    \end{subfigure}
    \caption{The reconstruction error values of the max-stable PCA for $p = 1, \ldots, 4$ for the logistic model with three different values of $\beta$.}
    \label{fig:elbowplot_logistic}
    \end{figure}

  \subsection{A perfectly reconstructable model with latent logistic model}\label{Subsubsec:latentlogistic}

  Finally, we simulate from a model where the max-stable random vector $X = A \diamond Z$ is determined by a latent bivariate logistic random vector $Z=(Z_1,Z_2)^T$ with $\beta = 0.5$ and a matrix 
\begin{equation*}
    A = 
    \begin{pmatrix}
      \frac 4 5 & \frac 1 5 \\
      \frac 1 {20} & \frac {19}{20} \\
      \frac 3 5 & \frac 2 5 \\
      \frac 9 {20}  & \frac {11}{20} \\
    \end{pmatrix}^T.
  \end{equation*}
  Again, it can be shown that the conditions of Corollary~\ref{cor:mlf_rep} are met and by defining
\begin{equation*}
Y_1 := \frac 4 5 Z_1 \vee \frac 1 5 Z_2, \quad Y_2 := \frac 1 {20}	Z_1 \vee \frac {19}{20} Z_2 
\end{equation*}
we can write
\begin{equation*}
  X = \begin {pmatrix}
  1 & 0 \\
  0 & 1 \\
  \tfrac 3 4 & \tfrac 8 {19} \\
  \tfrac 9 {16} & \tfrac {11}{19}
  \end{pmatrix} \diamond Y.
\end{equation*}
So $X$ is equal to a model of the form~\eqref{eq:maxlinrep} and for $p = 2$ the theoretical reconstruction error is zero. We again sample $n = 10000$ i.i.d. realizations of $X$, {set $\frac n k = 100$ for the spectral measure estimator, use the $\| \cdot \|_{\infty}$ norm},
fit max-stable PCA for $p = 1, \ldots, 4$, report the reconstruction errors for different values of $p$ in an elbow plot in Figure~\ref{fig:genmlm}
and plot the reconstructed data versus the original data in a bivariate margin plot in Figure~\ref{fig:genmlm}. 
The elbow plot here again highlights that for $p = 2$ the empirical error and optimization error are almost zero. It therefore indicates that in this simulation setup, the procedure recovers the theoretical quantities of interest well. We provide the estimated matrices in~\eqref{eq:app_genmlm} in the supplement. Again, the largest rowwise entries in $W$ correspond to the two entries of $X$ that correspond to the $\vee$-linearly independent rows of the matrix $A$ and $B$ is close to a matrix as in~\eqref{eq:maxlinrep}.

For a more detailed discussion and interpretation of all simulation results we refer to Section~\ref{Sec:estimated_matrices} in the Supplement, and in particular Remark~\ref{Remark:Estimated_Matrices} for a summary over all cases.
  \begin{figure}[hbt]
      \centering
      \includegraphics[width=0.3\textwidth]{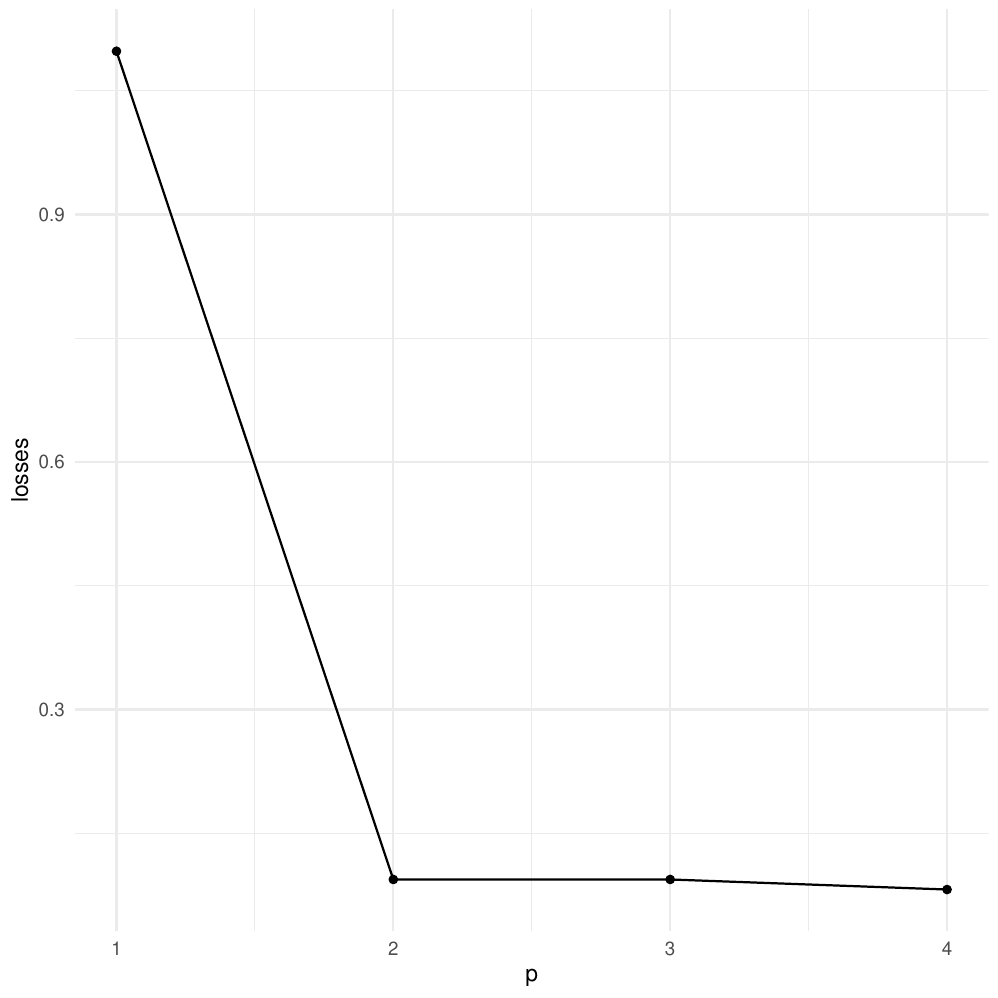}
      \caption{Elbow plot for $p = 1, \ldots, 4$ of the reconstruction error of max-stable PCA for a sample of the model from Section~\ref{Subsubsec:latentlogistic}.}   
  \end{figure}

\subsection{Danube river data}

Finally, we evaluate the performance of max-stable PCA for a real dataset. 
A ready to use dataset about the Danube river can be found in the R package \emph{graphicalExtremes}~\cite{Engelke_ger} and the raw data is available at the website of the Bavarian Environmental agency (\url{http://www.gkd.bayern.de}). The preprocessed version contains measurements about extreme river discharges of 31 stations in the upper Danube river basin in Bavaria from 1960 to 2009. 
The data can be seen as stationary, because only the summer months June to August are taken into account, and independent, due to the application of a declustering technique provided by~\cite{Engelke_eorn}. Therefore, assuming that each row of the data is a realization of an i.i.d. sample is reasonable. This dataset has recently gained a lot of attention in the area of unsupervised learning in extremes, mostly concerned with graphical models~\cite{Chavez_cmerd, Hentschel_sihrgm, Roettger_tpmve, Wan_gle}. We want to investigate if the distribution of the data transformed to $1$-Fréchet margins approximately follows a generalized max-linear factor model, denoted by a random vector $X$, given by $A \in [0, \infty)^{31 \times p}$ with $p$ $\vee$-linearly independent rows and a $p$ variate random vector $Z$ with $1$-Fréchet margins such that 
\begin{equation}
  \label{eq:danubemodel}
  X \disteq A \diamond Z. 
\end{equation}
This model might a priori be a good choice, since the measurement stations are placed on 6 distinct river arms that merge and exploratory data analysis suggests that if we observe an extreme at a measurement station at one arm, it seems highly likely that the adjacent measurement stations will also measure an extreme event.  Furthermore, since some river arms are spatially close, there might be a spatial dependence between river arms, that might not be fully accounted by the river network structure. The model above can account for the latent spatial dependence between $p$ components from the river arms in the latent random vector and by virtue of the high dependence in the river arms and their merging, max-linear combinations might be suitable to model the remaining stations. 

We perform max-stable PCA for $p = 1, \ldots, 12$ and report the reconstruction losses in an elbow plot for the different values of $p$ in Figure~\ref{fig:elbowplot_danube}. We use the estimator~\eqref{eq:est_spectralmeasure} and since the i.i.d. assumption is reasonable as argued above, the estimator is a suitable choice. For the tuning parameters, we follow the proposed practice by~\cite{Engelke_eorn, Hentschel_sihrgm, Roettger_tpmve}, and use the $l_{\infty}$-norm and set the exceedance threshold to $n/k = 10$, 
yielding $117$ observations. In the range of $p = 4$ to $p = 8$, the plot in Fig.~\ref{fig:elbowplot_danube} flattens out, but without a clear elbow. This is somewhat in line with many analyses~\cite{Hentschel_sihrgm, Roettger_tpmve, Wan_gle}, so we further investigate the dataset for max-stable PCA with $p = 6$. 

 \begin{figure}[hbt]
    \centering
    \begin{subfigure}[b]{0.26\textwidth}
      \centering
      \includegraphics[width=\textwidth]{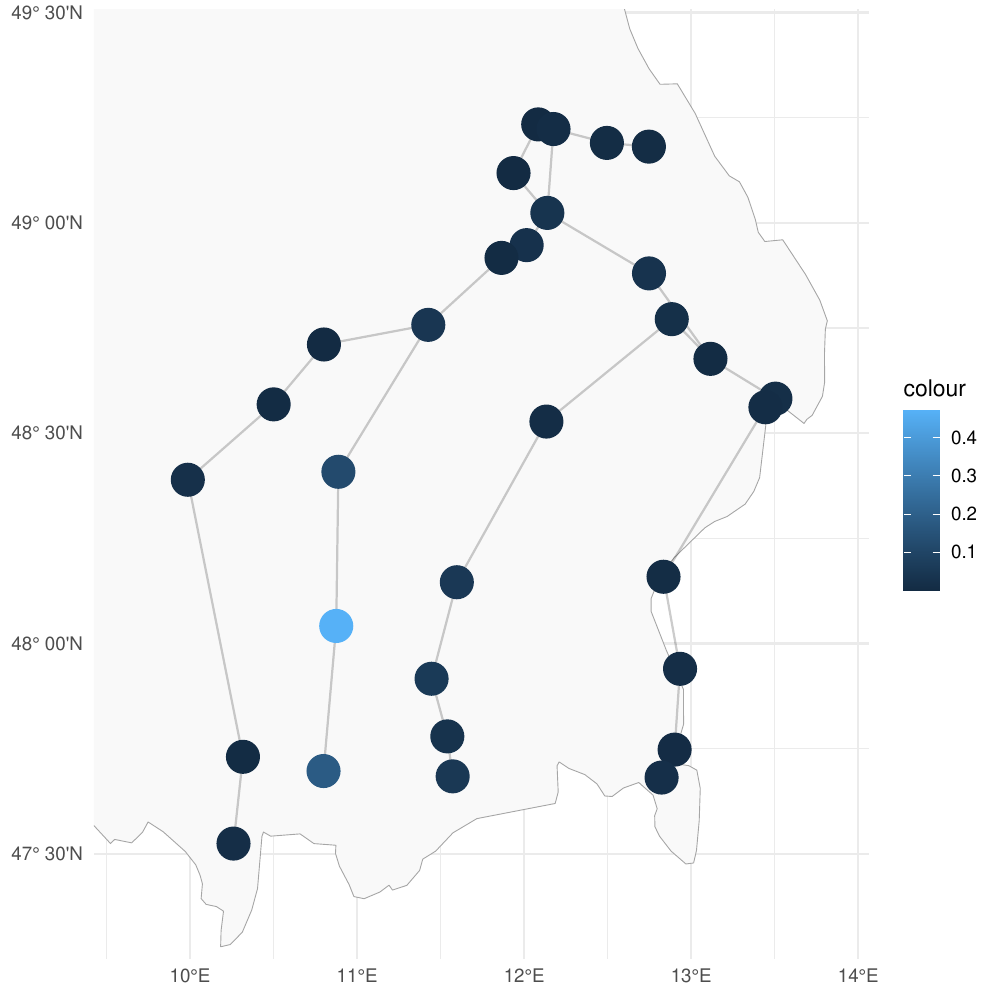}
    \end{subfigure}
    \begin{subfigure}[b]{0.26\textwidth}
      \centering
      \includegraphics[width=\textwidth]{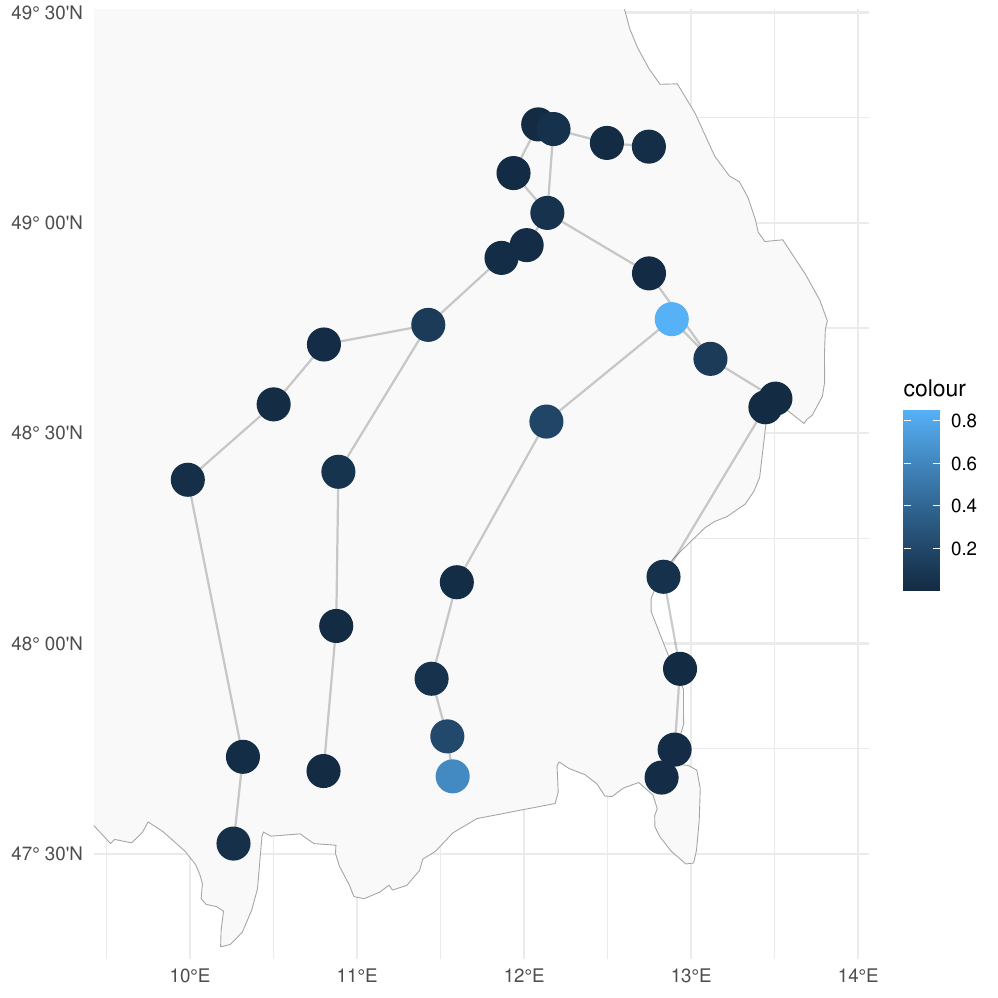}
    \end{subfigure}
    \begin{subfigure}[b]{0.26\textwidth}
      \centering
      \includegraphics[width=\textwidth]{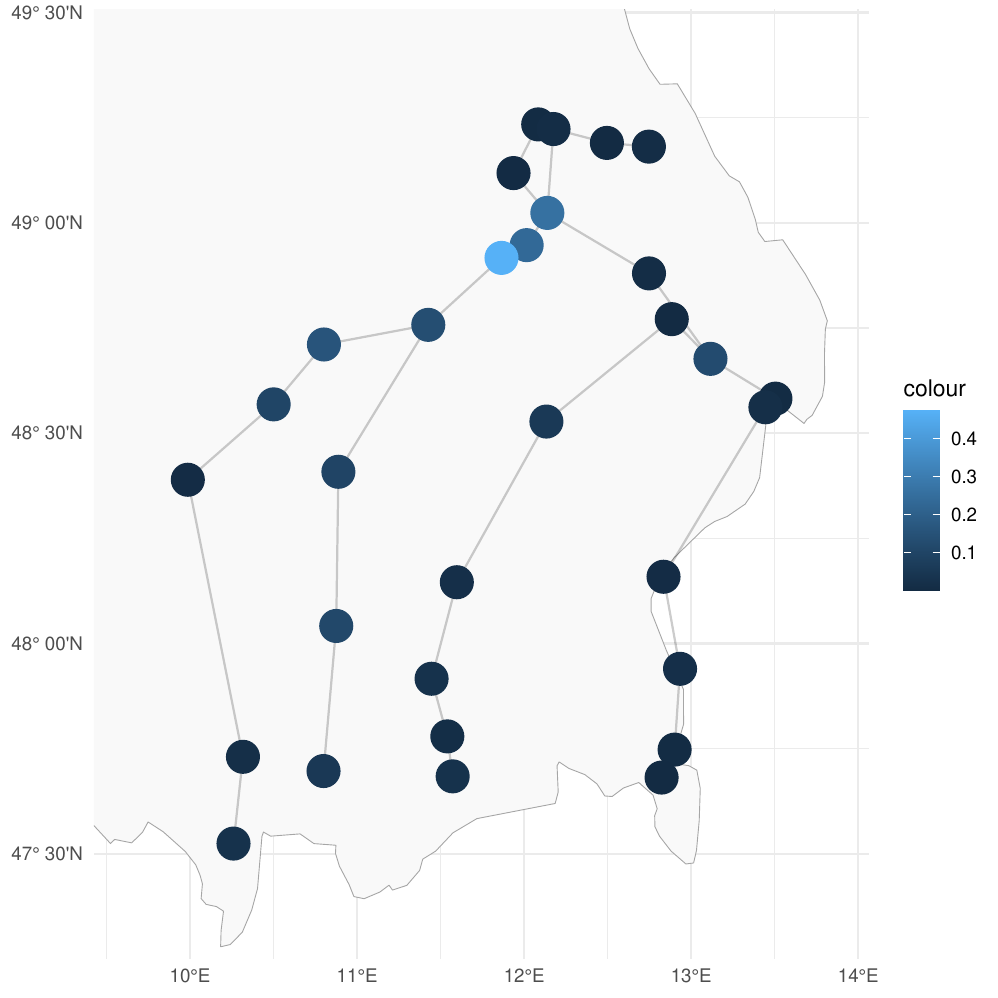}
    \end{subfigure}
    \begin{subfigure}[b]{0.26\textwidth}
      \centering
      \includegraphics[width=\textwidth]{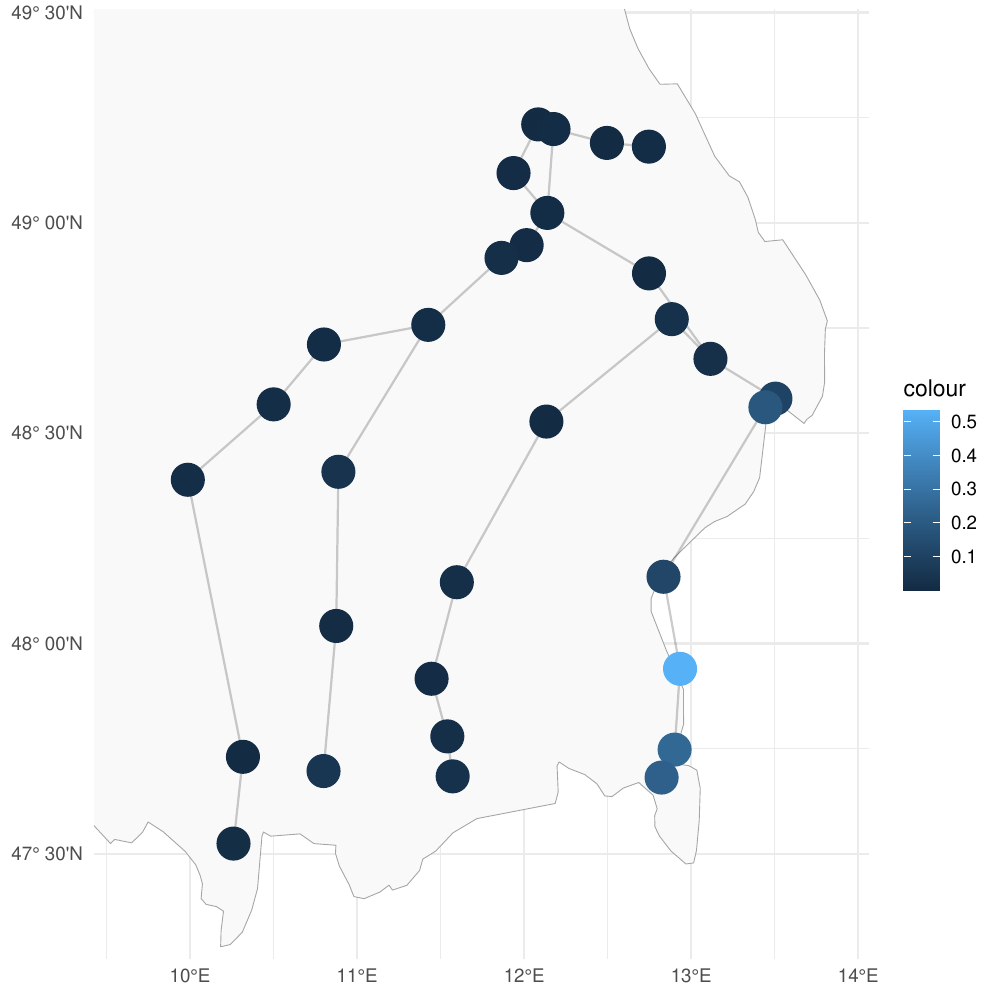}
    \end{subfigure}
    \begin{subfigure}[b]{0.26\textwidth}
      \centering
      \includegraphics[width=\textwidth]{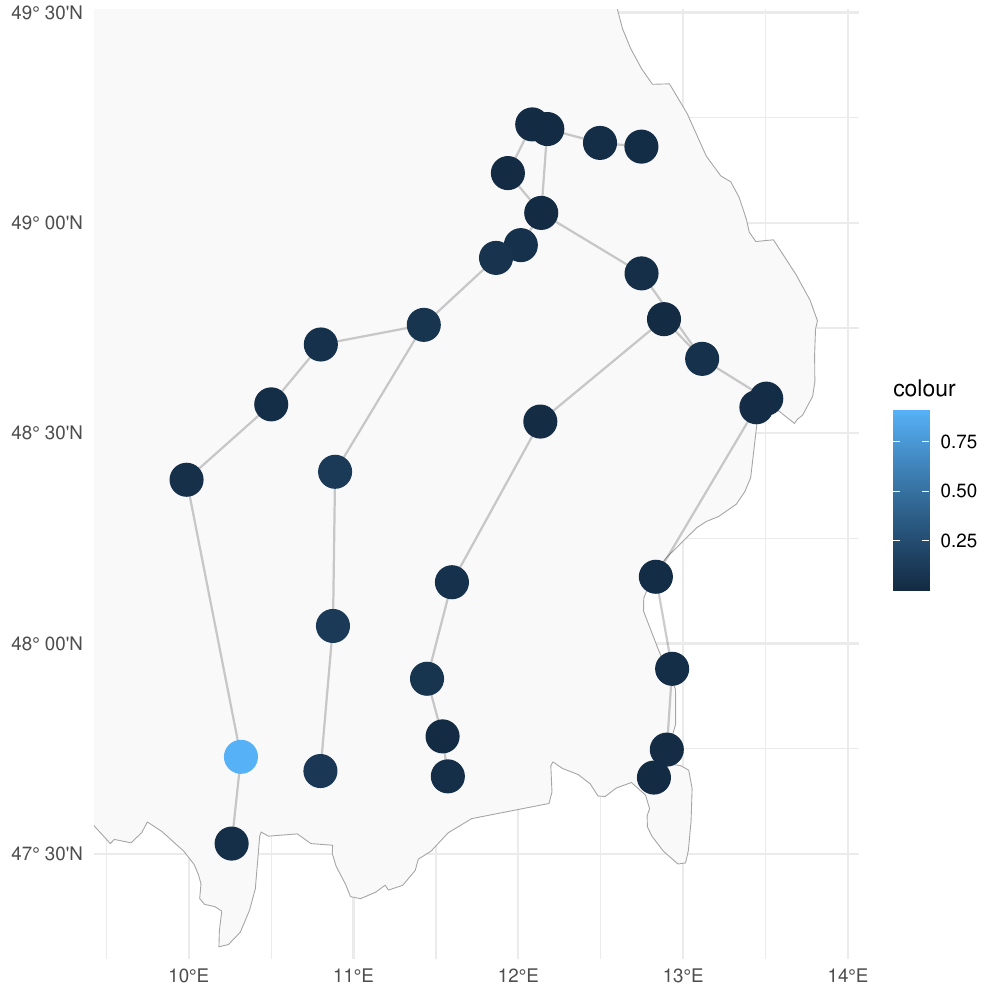}
    \end{subfigure}
    \begin{subfigure}[b]{0.26\textwidth}
      \centering
      \includegraphics[width=\textwidth]{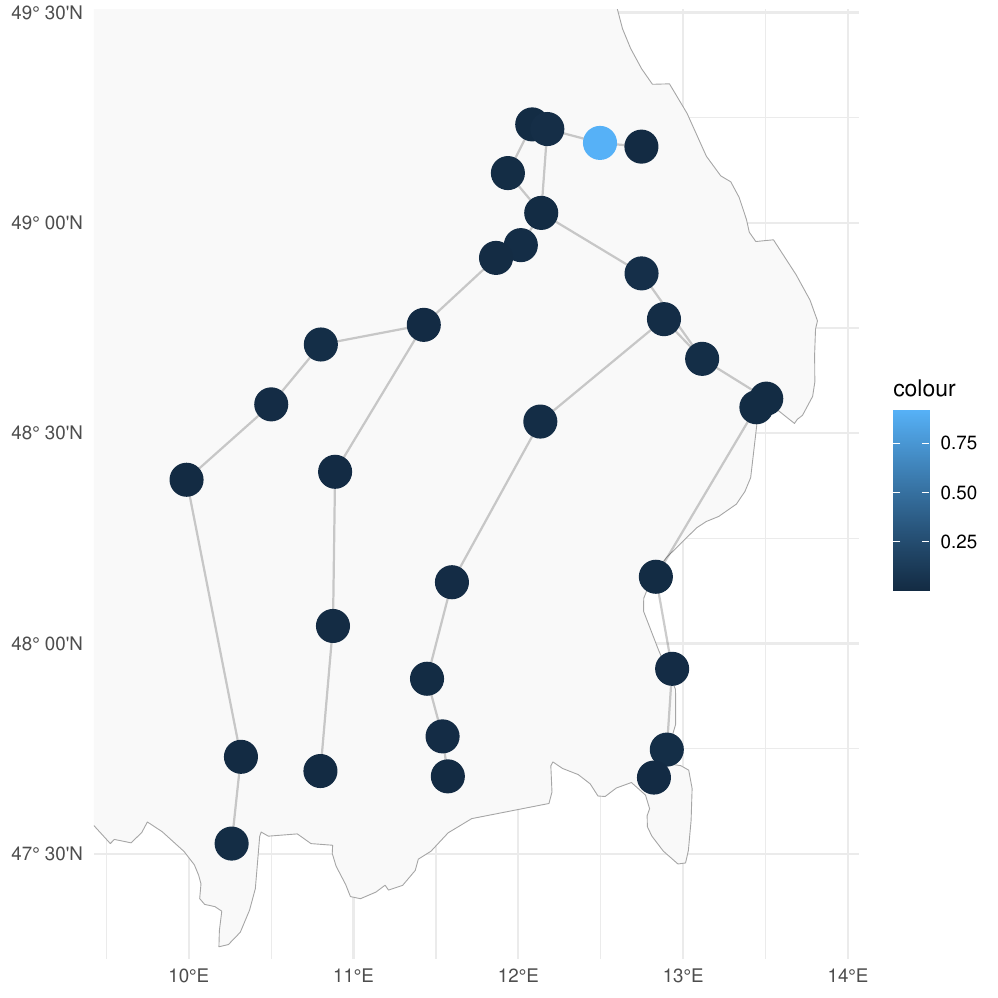}
    \end{subfigure}
    \caption{
        The rows of the estimated $W$ for $p=6$ mapped to their spatial locations ($W_{ji}$ corresponds to station $i$ and component $j$), where larger $W_{ji}$ has the lighter blue color. 
        The top row contains the first three rows and the second row the last three. Observe that each encoded state highly loads stations from the same riverarm. This holds true also for the upper middle plot, compare with the river arm structure provided in Figure~\ref{fig:pairplot_danube}.
        }
    \label{fig:Wspat_danube}
\end{figure}

We visualize the loadings for the principal components, that is the rows of the estimated matrix $W$, by mapping the rows of $W$
for $p = 6$ at the locations of the measurement stations in a map in Figure~\ref{fig:Wspat_danube}. This  illustrates that each row roughly captures one river arm, where the measurement stations that correspond to a large entry in $W$ are chosen by the procedure such that they are spatially far from each other.

\begin{figure}[bht]
    \centering
    \begin{subfigure}[b]{0.26\textwidth}
      \centering
      \includegraphics[width=\textwidth]{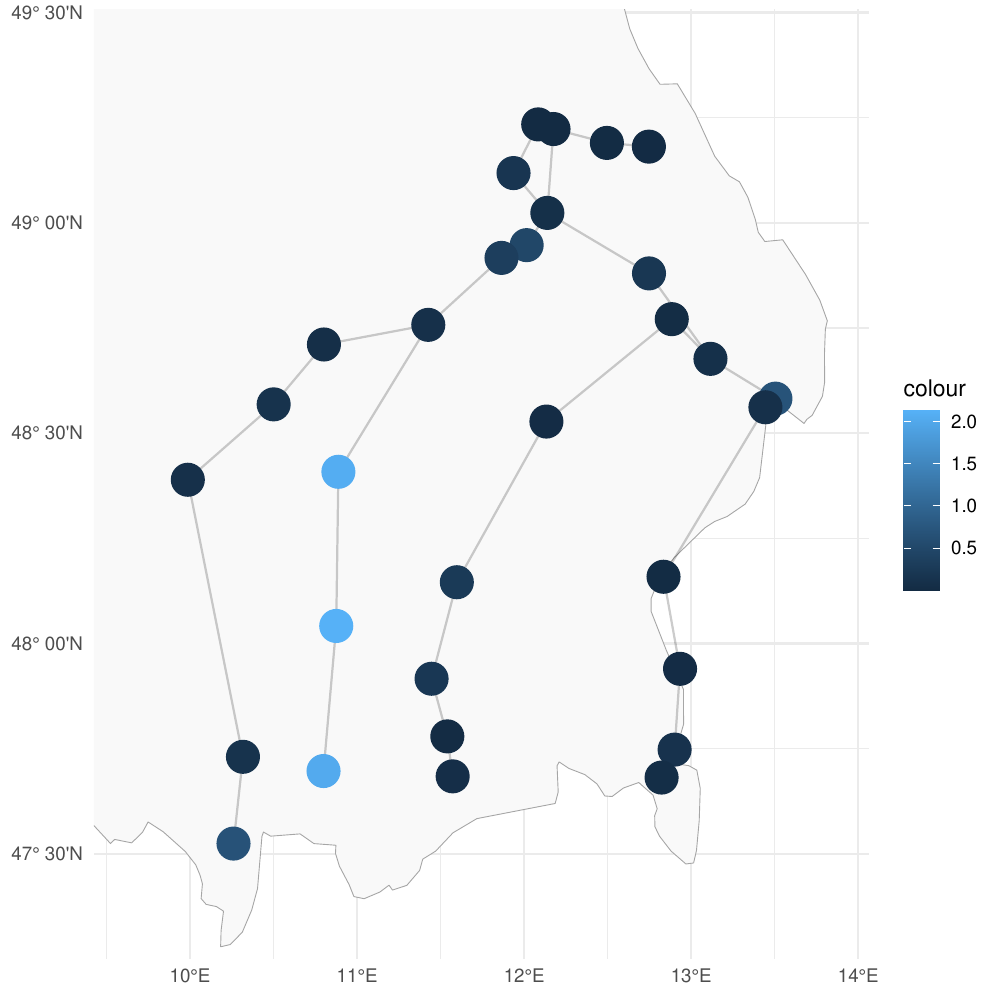}
    \end{subfigure}
    \begin{subfigure}[b]{0.26\textwidth}
      \centering
      \includegraphics[width=\textwidth]{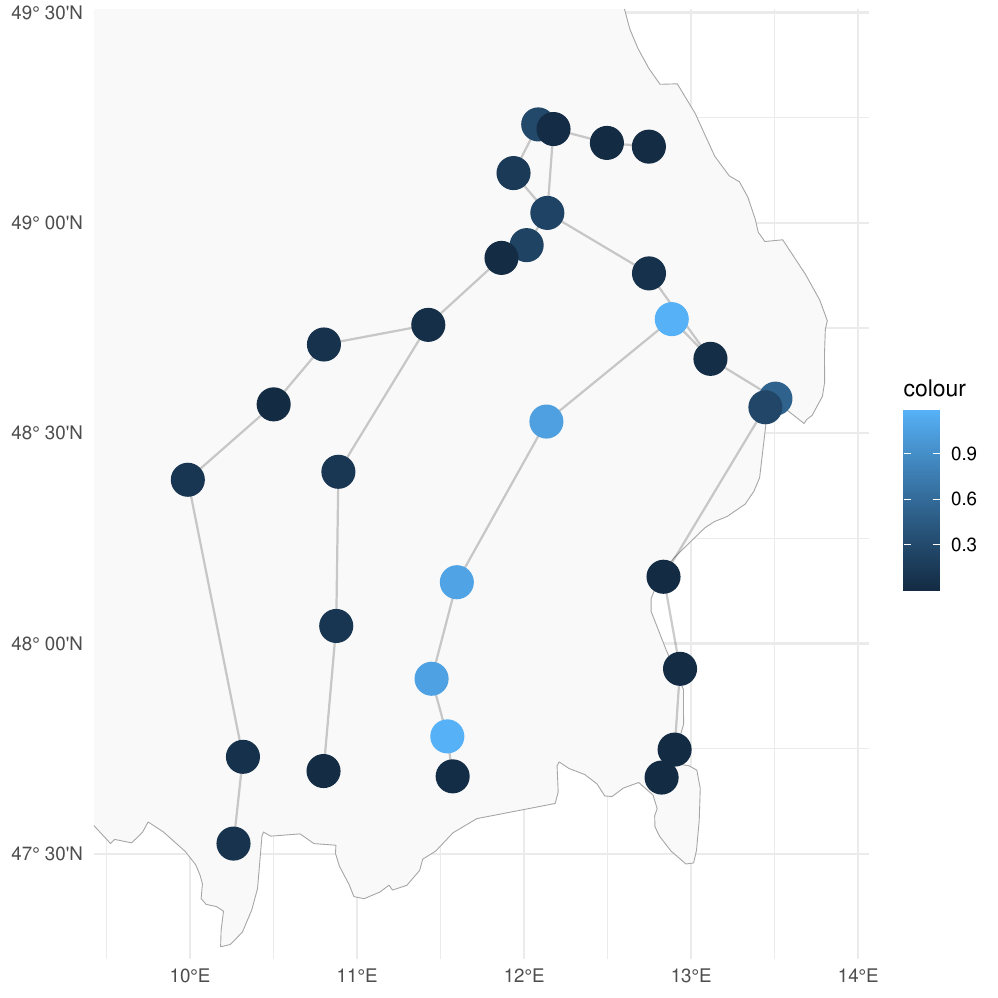}
    \end{subfigure}
    \begin{subfigure}[b]{0.26\textwidth}
      \centering
      \includegraphics[width=\textwidth]{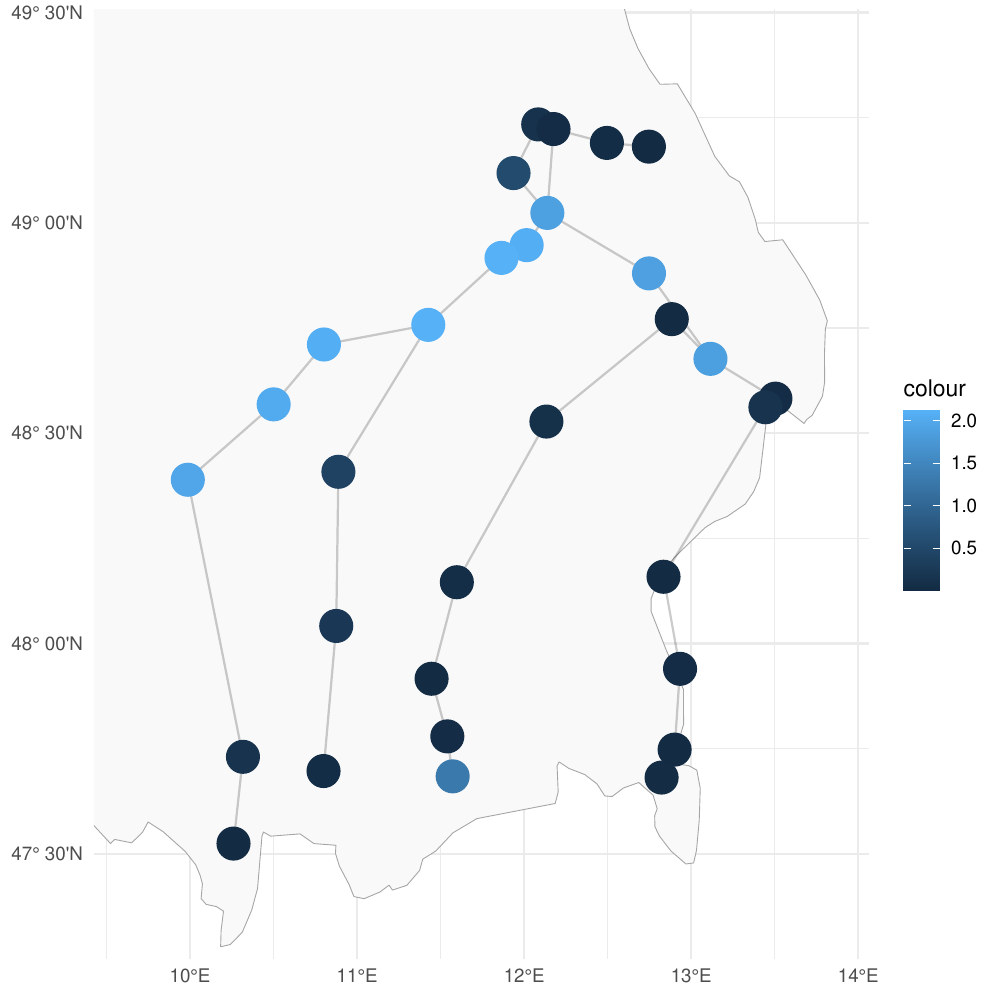}
    \end{subfigure}
    \begin{subfigure}[b]{0.26\textwidth}
      \centering
      \includegraphics[width=\textwidth]{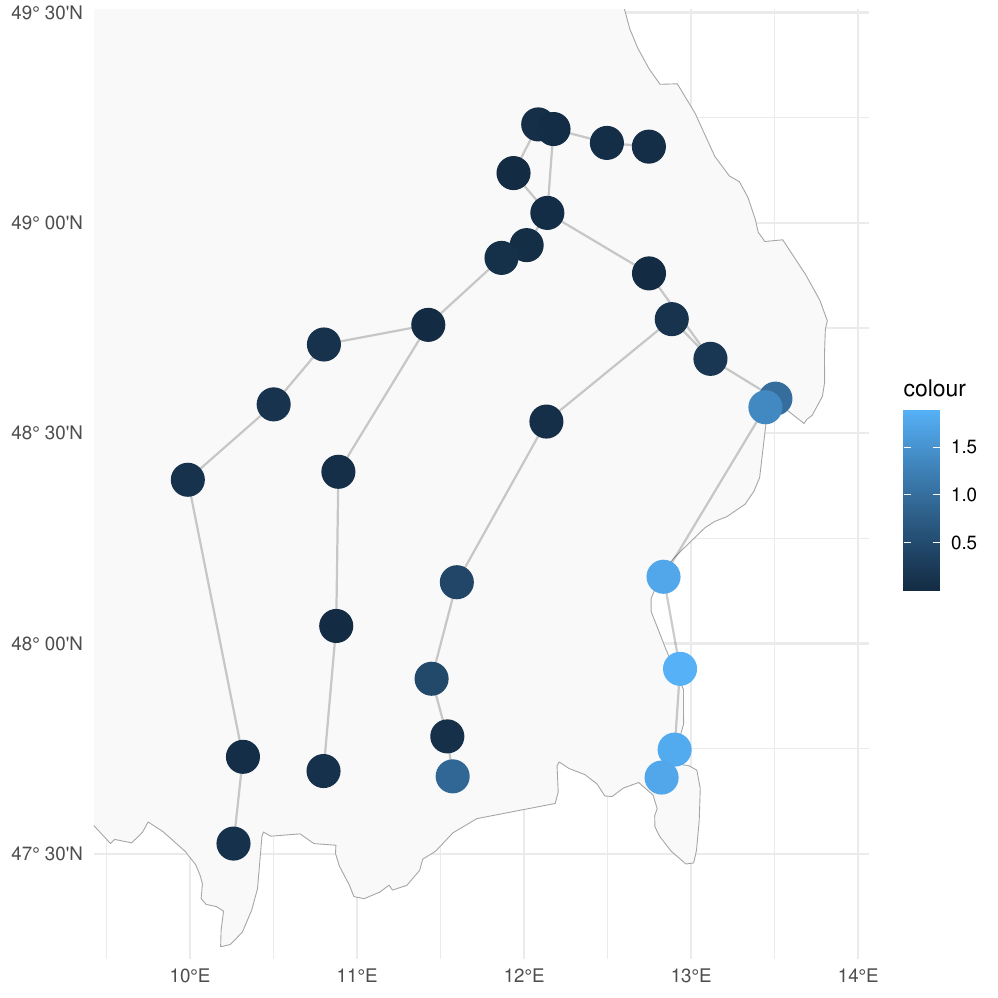}
    \end{subfigure}
    \begin{subfigure}[b]{0.26\textwidth}
      \centering
      \includegraphics[width=\textwidth]{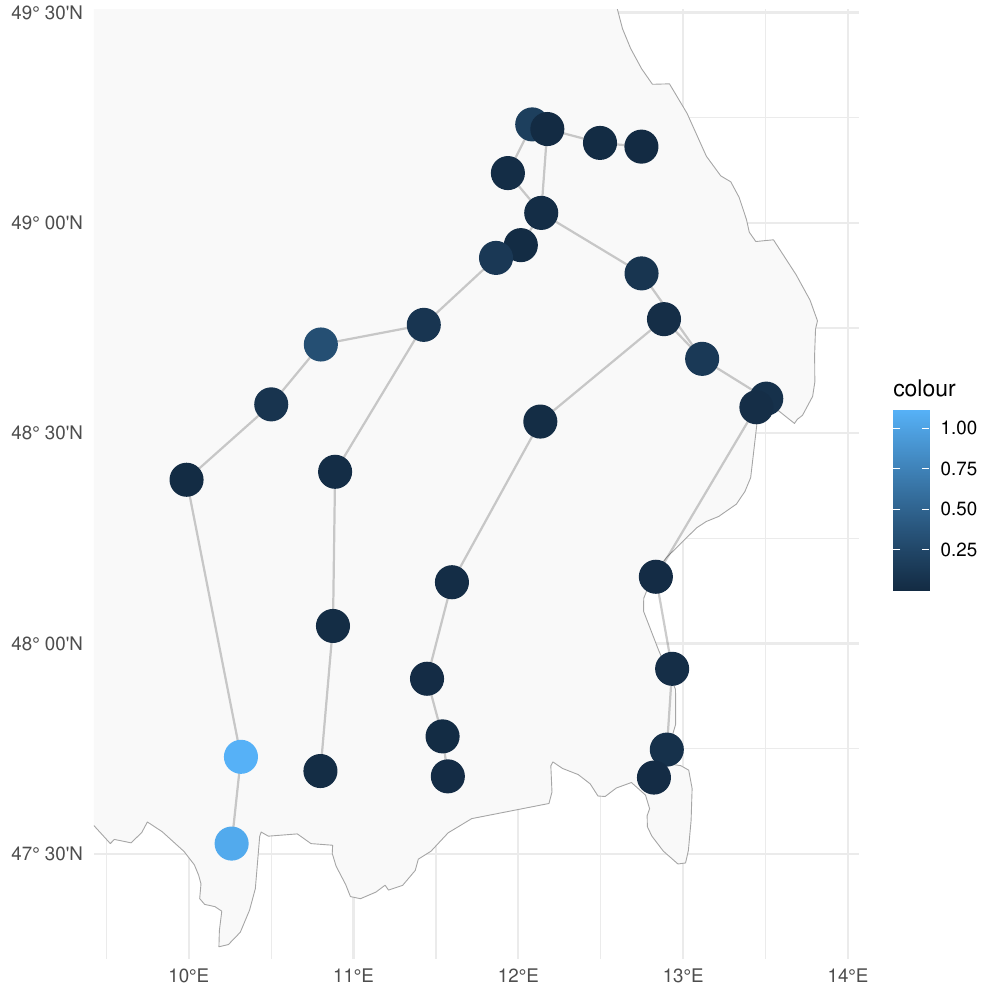}
    \end{subfigure}
    \begin{subfigure}[b]{0.26\textwidth}
      \centering
      \includegraphics[width=\textwidth]{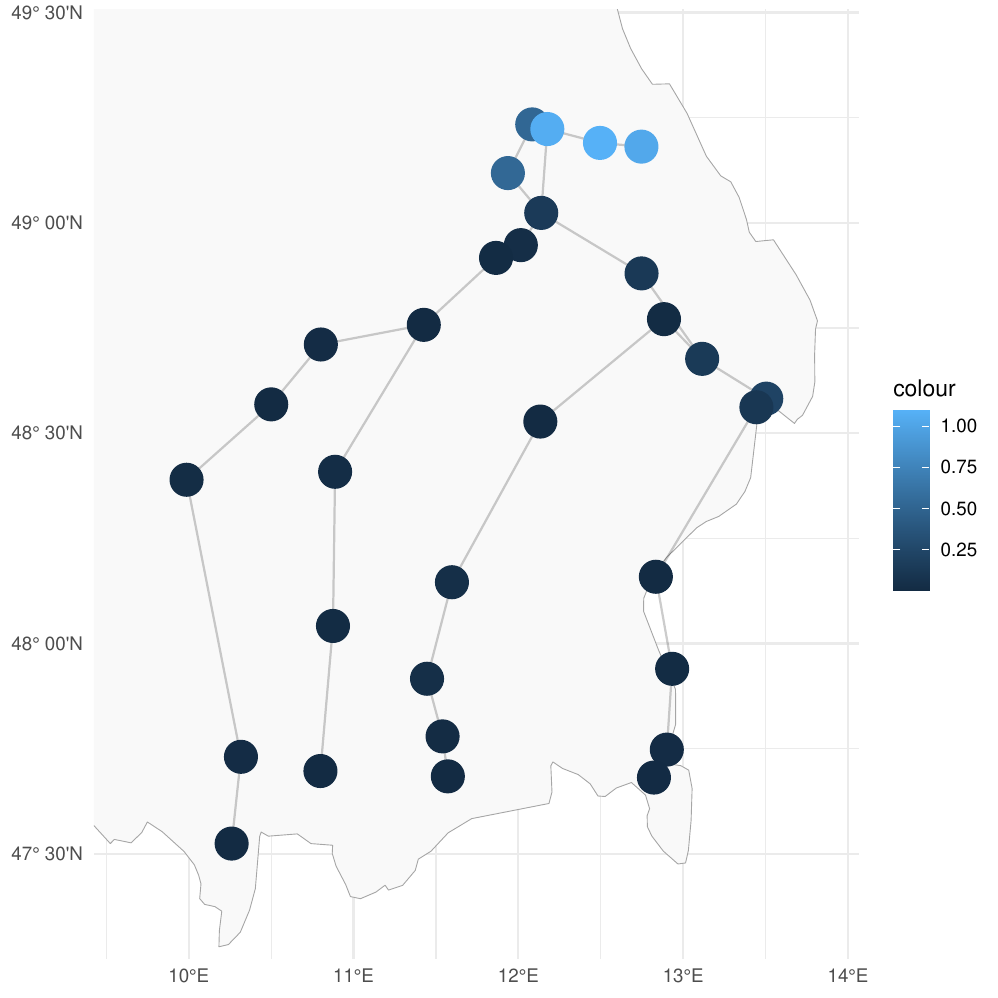}
    \end{subfigure}
    \caption{
        The columns of the estimated $B$ for $p=6$ mapped to their spatial locations ($B_{ij}$ corresponds to station $i$ and encoded state $j$) and colored such that larger $B_{ij}$ has a lighter blue color.
        The top row contains the first three columns and the second row the last three. Note that high values in a column $j$ of $B$ appear on the same riverarm as the point with the highest value in row $j$ of the matrix $W$, compare with Figure~\ref{fig:Wspat_danube}.
        }
    \label{fig:Bspat_danube}
\end{figure}
Similarly, we map the columns of the estimated matrix $B$ to the measurement stations in Figure~\ref{fig:Bspat_danube} to illustrate the generating set of the $\vee$-linear space generated by the columns of $B$. These columns can provide useful insight, 
since the reconstruction $B \diamond W \diamond X$ takes values in the $\vee$-linear combinations of the columns of $B$ almost surely. 
Therefore, these $6$ vectors can be seen as the boundary events when only 
the corresponding entry in $W \diamond X$ is large. 
Observe that these columns illustrated in Figure~\ref{fig:Bspat_danube} not only correspond to river arms, but can also show spatial dependence. In this approach, it differs from the approaches of PCA for extremes studied in \cite{DreesSabourin_pcame} and \cite{CooleyThibaud_ddhde}, where a lower dimensional subspace on which $S$ concentrates is identified. It also differs from the spherical $k$-means clustering for extremes studied in \cite{JanssenWan_kmce}, where the spectral measure is reduced to $k$ points of mass. For a more in-depth comparison we refer to Section~\ref{Suppl:Danube_Comp} in the Supplement, where we apply all methods to the Danube data set and discuss the results. 

We conclude that max-stable PCA is a reasonable tool 
for dimension reduction for this dataset and that further analysis of the encoded state $W \diamond X$ and the entries of the estimated matrix $B$ 
could provide insights into the behavior of extremes for the Danube basin.

{\section{Discussion}

In this work, we showed that a version of PCA using $\vee$-linear combinations and a suitable semimetric for max-stable distributions is a statistically viable procedure that can be used for dimension reduction in multivariate extremes and facilitate a better understanding of high dimensional extreme events in practice.

We demonstrated in Theorem~\ref{thm:spectralrep} that a class of models one might call generalized max-linear factor models given by a matrix $A \in [0, \infty)^{d \times p}$ and a $p$-variate max-stable random vector with $1$-Fréchet margins $Z$ by setting
$X := A \diamond Z$, appears as a model case that is perfectly reconstructable, if $A$ has at most $p$ $\vee$-linearly independent rows. This is in analogy to classic PCA, when the data only takes values in a $p$-dimensional subspace. Note that the rows in the matrix $A$ can potentially all be $\vee$-linearly independent even if $d >> p$, this has for example also been recognized in~\cite{Schlather_sgpca}.

We have shown a general statement in Theorem~\ref{thm:consistency} that minimizing some empirical statistical functional of a spectral measure estimator under reasonable assumptions yields an asymptotic minimizer. This statement entails max-stable PCA as a special case but could also be used to directly deduce a consistency result for other procedures.  Additionally, since we allow for any estimator of the spectral measure, this approach also lends itself to dependent data as long as the estimator of the spectral measure is consistent. Finally, we demonstrated that our procedure works well for finite samples in a simulation study and in the analysis of a benchmark dataset. 

Our theoretical results and our \emph{R}-package can be used directly to facilitate data analysis of multivariate extreme events with moderate to high asymptotic dependence and provide new insights by finding a simpler approximation by the reconstruction $B \diamond W \diamond X$, where $B$ provides $p$ prototypical extreme events that can be seen as building blocks of a good approximation to the data and tell which entries are usually large simultaneously. Additionally, the encoded state $W \diamond X$ can be used as a preprocessing step to reduce the dimension of the data while preserving key features. This might be helpful for applications like estimating the probabilities of extreme events or when fitting a parametric model to the whole dataset of dimension $d$ is infeasible but fitting it to a $p<d$ dimensional model is. Combining this with the matrix $B$, it could be used to create a simple procedure to approximately simulate high dimensional extreme events that is easy to interpret.
}

\FloatBarrier

\setcounter{section}{0}
\setcounter{equation}{0}
\setcounter{figure}{0}
\setcounter{lemma}{0}
\setcounter{proposition}{0}
\setcounter{example}{0}
\setcounter{corollary}{0}
\setcounter{definition}{0}

\renewcommand{\thelemma}{S\arabic{lemma}}
\renewcommand{\theproposition}{S\arabic{proposition}}
\renewcommand{\thecorollary}{S\arabic{corollary}}
\renewcommand{\theexample}{S\arabic{example}}
\renewcommand{\thedefinition}{S\arabic{lemma}}

\renewcommand\thesection{\ifcase\value{section}?\or S.1\or S.2\or S.3\or S.4\or S.5\or S.6\else ?\fi}
\renewcommand\theequation{S\arabic{equation}}
\renewcommand\thefigure{SF\arabic{figure}}

{\section{Supplementary results on max-linear combinations of max-stable random vectors}
\label{subsec:supp_results}}

We give some further motivation for using max-linear maps of max-stable random vectors by taking a closer look at the resulting distribution, in terms of its spectral measure. {The proofs of the statements below can be found in Section~\ref{subsec:supp_proofs}.}

 Our first lemma below combines the main result from~\cite{DeHaan_cmev} and Proposition 2.6 from~\cite{Janssen_tdme}.
\begin{lemma}\label{lem:maxlin_trafo}
  Let $X$ be a $d$-variate max-stable random vector with non-degenerate $1$-Fréchet margins given by a spectral measure $S$ as in~\eqref{Eq:spectral_meas_rep}. 
  Then for $H \in [0, \infty)^{p \times d}$, 
  $H \diamond X$ (cf.\ \eqref{Eq:max_matrix:prod}) is a $p$-variate max-stable random vector with $1$-Fréchet margins 
  and spectral measure
  \begin{equation}\label{eq:trafospecmeas}
    \tilde S(A) = \int_{\{a \in \S^{d-1}_+ \colon \| H \diamond a \| > 0 \}} \mathbbm 1_A \biggl( \frac{ H \diamond a}{ \| H \diamond a \|}\biggl) \| H \diamond a \| \, S(da)
    \quad \forall A \in \mathcal B(\S^{p-1}_+). 
  \end{equation}
\end{lemma}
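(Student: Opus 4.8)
The plan is to compute the joint distribution function of $Y := H \diamond X$ directly and match it to the spectral representation~\eqref{Eq:spectral_meas_rep}, thereby reading off both max-stability and the spectral measure $\tilde S$ at once. Fix $z = (z_1, \ldots, z_p) \in (0,\infty)^p$; since $H \geq 0$ and $X \geq 0$ we have $Y \geq 0$, so the values of the distribution function on $(0,\infty)^p$ together with the indicator $\mathbbm 1_{[0,\infty)^p}$ determine the whole law, and the boundary cases $z_j \le 0$ are handled exactly as in~\eqref{Eq:spectral_meas_rep}.

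First I would rewrite the event $\{Y \le z\}$ as a componentwise upper bound on $X$. Using the definition~\eqref{Eq:max_matrix:prod}, the constraint $(H\diamond X)_j = \bigvee_{l=1}^d H_{jl} X_l \le z_j$ holds if and only if $X_l \le z_j / H_{jl}$ for every $l$ with $H_{jl} > 0$. Intersecting over $j = 1, \ldots, p$ gives $\{Y \le z\} = \{X \le w\}$ with $w_l := \bigwedge_{j=1}^p z_j / H_{jl}$, under the convention $c/0 = \infty$. Hence $\P(Y \le z) = G(w)$, and plugging into~\eqref{Eq:spectral_meas_rep} I would use the elementary identity $1/w_l = \bigvee_{j=1}^p H_{jl}/z_j$ to obtain the tropical rearrangement
\begin{equation*}
\bigvee_{l=1}^d \frac{a_l}{w_l} = \bigvee_{l=1}^d \bigvee_{j=1}^p \frac{H_{jl} a_l}{z_j} = \bigvee_{j=1}^p \frac{1}{z_j} \bigvee_{l=1}^d H_{jl} a_l = \bigvee_{j=1}^p \frac{(H\diamond a)_j}{z_j},
\end{equation*}
which is just the associativity and commutativity of $\vee$ together with the definition of $\diamond$. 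This already yields $\P(Y\le z) = \exp\bigl(-\int_{\S^{d-1}_+} \bigvee_{j=1}^p (H\diamond a)_j/z_j \, S(da)\bigr)$, an exponent-measure form of a max-stable law.

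It remains to transfer the integral from $\S^{d-1}_+$ to $\S^{p-1}_+$. On $\{a : \|H\diamond a\| > 0\}$ I would write $H\diamond a = \|H\diamond a\|\, b$ with $b = (H\diamond a)/\|H\diamond a\| \in \S^{p-1}_+$, so that $\bigvee_j (H\diamond a)_j/z_j = \|H\diamond a\| \bigvee_j b_j/z_j$, while on $\{\|H\diamond a\| = 0\}$ the integrand vanishes. Defining $\tilde S$ as in~\eqref{eq:trafospecmeas}, the standard change-of-variables for a weighted pushforward measure (verified first for indicators, then extended to nonnegative measurable integrands by linearity and monotone convergence) gives $\int_{\S^{p-1}_+} g(b)\,\tilde S(db) = \int_{\{\|H\diamond a\|>0\}} g\bigl((H\diamond a)/\|H\diamond a\|\bigr)\|H\diamond a\|\, S(da)$ for $g \ge 0$; applying this with $g(b) = \bigvee_j b_j/z_j$ turns the exponent into $\int_{\S^{p-1}_+}\bigvee_{j=1}^p b_j/z_j\,\tilde S(db)$, which is precisely the spectral representation~\eqref{Eq:spectral_meas_rep} for $Y$ with spectral measure $\tilde S$. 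I would finish by noting that $\tilde S$ is finite, since $\tilde S(\S^{p-1}_+) = \int \|H\diamond a\|\, S(da) < \infty$ because $a \mapsto \|H\diamond a\|$ is bounded on the compact sphere and $S$ is bounded, and that the $j$th marginal scale equals $\int (H\diamond a)_j\, S(da)$ by~\eqref{eq:scale-par-reps}, confirming $1$-Fréchet margins in the sense of Definition~\ref{Def:maxstable}.

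The routine parts are the event rewriting and the tropical identity; the step requiring the most care will be the measure transformation in~\eqref{eq:trafospecmeas}, specifically justifying the weighted pushforward formula and cleanly handling the null set $\{\|H\diamond a\| = 0\}$ (where the normalized direction is undefined) together with the degenerate-boundary conventions $c/0 = \infty$. None of this is deep, but it is where one must be careful to keep every quantity well-defined.
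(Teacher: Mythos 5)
Your proposal is correct and follows essentially the same route as the paper's proof: both compute the distribution function of $H \diamond X$ on $(0,\infty)^p$ by rewriting the event as a componentwise bound on $X$ (with the convention $c/0=\infty$), apply the tropical rearrangement to obtain the exponent $\int_{\S^{d-1}_+} \bigvee_{j=1}^p (H\diamond a)_j/z_j \, S(da)$, and then restrict to $\{\|H\diamond a\|>0\}$ and perform the weighted change of variables to read off $\tilde S$, concluding with its finiteness from boundedness of $a \mapsto \|H \diamond a\|$ on the compact sphere. Your additional remarks on the null set $\{\|H\diamond a\|=0\}$ and the marginal scale coefficients via~\eqref{eq:scale-par-reps} are consistent with, and slightly more explicit than, the paper's treatment.
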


For the next statement we need to revise some notation. Remember that for a measure $\mu$ on $\mathcal{B}(A)$ for some Borel set $A \subseteq \mathbb{R}^d$, its support is defined as 
\begin{equation}\label{supp_measure}
  \text{supp}(\mu) := \{x \in A \colon \mu(U(x)) > 0 , \quad \mbox{for all open neighborhoods } U(x) \text { of }x \}.
\end{equation}
If $\mu$ is a finite measure, then an important property we will use later on is that the support is equal to the (well-defined) minimal closed set which has measure $\mu(A)$, see~\cite{Parthasarathy_pmms} II.2, Theorem 2.1. 
With this definition in mind we give the following proposition, which, up to our knowledge, has not been shown so far but relates to the 
geometry of max-stable distributions, which was thoroughly investigated in~\cite{Molchanov_cgmsd}. 
\begin{proposition}\label{prop:support}
  Let $X$ be a $d$-variate max-stable random vector with non-degenerate $1$-Fréchet margins given by a spectral measure $S$ as in~\eqref{Eq:spectral_meas_rep}.
  Then, 
  \begin{equation}
    \label{eq:mevsupp}
    \mathrm{supp}(\P^X) = \biggl\{\bigvee_{i = 1}^d \alpha_i s_i \colon  \alpha_i \geq 0, s_i \in \mathrm{supp}(S) \biggl\}. 
  \end{equation}
\end{proposition}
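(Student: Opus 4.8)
The plan is to prove the two set inclusions separately, using the spectral function representation~\eqref{Eq:spectral_fun_rep} to pass between the distribution of $X$ and geometric information about $\mathrm{supp}(S)$. Throughout, write $\mathcal{C} := \{\bigvee_{i=1}^d \alpha_i s_i \colon \alpha_i \geq 0,\ s_i \in \mathrm{supp}(S)\}$ for the claimed support, noting that this is a $\vee$-cone generated by the support of $S$; a first routine observation is that $\mathcal{C}$ is closed (each coordinate is a finite max of nonnegative-scaled elements, and one should check closedness either directly or by a compactness argument using $\S^{d-1}_+$).

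First I would establish $\mathrm{supp}(\P^X) \subseteq \mathcal{C}$. The natural tool is the known stochastic representation of a max-stable vector with $1$-Fréchet margins as a max of a Poisson point process: if $\{(\Gamma_k^{-1}, V_k)\}_k$ is a PPP on $(0,\infty) \times \S^{d-1}_+$ with intensity $d r \times S(dv)$ (after normalizing total mass), then $X \disteq \bigvee_k \Gamma_k^{-1} V_k$. Almost surely each atom $V_k$ lies in $\mathrm{supp}(S)$ and the realized value is a (countable, but a.s.\ locally finite in the relevant sense) $\vee$-combination of scaled points of $\mathrm{supp}(S)$; after truncating to finitely many dominant atoms per coordinate and invoking closedness of $\mathcal{C}$, one sees that $X \in \mathrm{cl}(\mathcal{C}) = \mathcal{C}$ almost surely. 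Since $\mathrm{supp}(\P^X)$ is the minimal closed set of full measure (the characterization cited from~\cite{Parthasarathy_pmms}), it follows that $\mathrm{supp}(\P^X) \subseteq \mathcal{C}$.

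For the reverse inclusion $\mathcal{C} \subseteq \mathrm{supp}(\P^X)$, I would fix a point $x = \bigvee_{i=1}^d \alpha_i s_i$ with $s_i \in \mathrm{supp}(S)$ and show every open neighborhood has positive $\P^X$-mass. The idea is that, by definition of $\mathrm{supp}(S)$, small spherical caps around each $s_i$ carry positive $S$-mass; choosing a PPP configuration with exactly one atom near each $s_i$ at radius $\approx \alpha_i$ and all other atoms small, one produces realizations of $X$ arbitrarily close to $x$, and this event has positive probability because the relevant product of Poisson cap-masses is positive while the complementary "no large atoms elsewhere" event also has positive probability. Making this rigorous amounts to estimating $\P(X \in U(x))$ from below via the avoidance/containment probabilities of the PPP.

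I expect the reverse inclusion to be the main obstacle. The subtlety is that $x$ is built from $d$ specified points $s_i$ but $X$ is a max over an entire random configuration, so one must simultaneously force atoms \emph{near each} $s_i$ \emph{at the right radii} while controlling the contribution of all other atoms in every coordinate — and the coordinatewise maximum can receive "accidental" contributions that push some component above $x_j + \varepsilon$. Handling this cleanly requires a careful neighborhood construction and a lower bound on the joint PPP event, and is where the bulk of the genuine work lies; the forward inclusion and the closedness of $\mathcal{C}$ are comparatively routine.
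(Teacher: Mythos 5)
Your proposal is correct and follows essentially the same route as the paper's proof: the forward inclusion via the Poisson point process representation (the atoms lie in $\mathrm{supp}(S)$ a.s., at most $d$ of them realize the componentwise maximum, and then the minimal-closed-set characterization of the support applies), and the reverse inclusion by forcing an atom into a small box around each $(\alpha_i, s_i)$ while requiring no atoms in the exceedance region of $x + \epsilon \mathbf 1_d$, with positivity of the joint event following from independence of Poisson counts on disjoint sets. The details you defer (closedness of the $\vee$-span, the disjoint-box construction, and passing from points with strictly positive coordinates to general $x$ via closedness of the support) are precisely those the paper's proof carries out.
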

For abbreviation, we borrow the notation
\begin{equation*}
  \vee\mathrm{-span}(\mathrm{supp}(S))
  := \biggl \{ \bigvee_{i=1}^d \alpha_i s_i \colon \alpha_i \geq 0, s_i \in \mathrm{supp}(S) \biggl \}
\end{equation*}
(accordingly $\vee\mathrm{-span}(A)$ is defined for any other subset $A$ of $\mathbb{R}^d$ taking the place of $\mbox{supp}(S)$) from~\cite{Stoev_srmsp} to denote the set on the right hand side of \eqref{eq:mevsupp}.
This allows us to characterize the support of a max-stable distribution with $1$-Fréchet margins transformed by a max-linear map. 
\begin{corollary}\label{cor:maxlin_trafo_img}
  Let $X$ be a $d$-variate max-stable random vector with non-degenerate $1$-Fréchet margins given by a spectral measure $S$ as in~\eqref{Eq:spectral_meas_rep}. Then for $H \in [0, \infty)^{p \times d}$, $p \in \mathbb{N}$,
  the distribution of the random vector $H \diamond X$  satisfies
  \begin{equation}
    \label{eq:spanrep}
    \mathrm{supp}(\P^{H \diamond X}) =  \mbox{cl}\left(\vee\mathrm{-span}\left(\left\{ H \diamond s: s \in \mathrm{supp}(S) \right \} \right)\right). 
  \end{equation}
If $H$ has no zero columns, then this simplifies to
   \begin{equation}
    \label{eq:spanrep:nocl}
    \mathrm{supp}(\P^{H \diamond X}) =  \vee\mathrm{-span}\left(\left\{ H \diamond s: s \in \mathrm{supp}(S) \right \} \right). 
  \end{equation}
\end{corollary}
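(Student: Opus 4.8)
The plan is to avoid the transformed spectral measure $\tilde S$ from Lemma~\ref{lem:maxlin_trafo} (which would force us to check non-degeneracy of the margins of $H \diamond X$) and instead to exploit that $x \mapsto H \diamond x$ is continuous, combined with the support description for $X$ itself in Proposition~\ref{prop:support}. First I would record a general fact: for any continuous $g \colon \R^d \to \R^p$ and any random vector $X$, one has $\mathrm{supp}(\P^{g(X)}) = \mathrm{cl}(g(\mathrm{supp}(\P^X)))$. The inclusion ``$\supseteq$'' holds because if $x \in \mathrm{supp}(\P^X)$ and $V$ is an open neighborhood of $g(x)$, then $g^{-1}(V)$ is an open neighborhood of $x$, so $\P^{g(X)}(V) = \P^X(g^{-1}(V)) > 0$, whence $g(x) \in \mathrm{supp}(\P^{g(X)})$; taking closures gives the claim. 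For ``$\subseteq$'', the set $\mathrm{cl}(g(\mathrm{supp}(\P^X)))$ is closed and has full $\P^{g(X)}$-measure since $g^{-1}(\mathrm{cl}(g(\mathrm{supp}(\P^X)))) \supseteq \mathrm{supp}(\P^X)$, so by minimality of the support among closed sets of full measure (the Parthasarathy characterization recalled above) it contains $\mathrm{supp}(\P^{g(X)})$. Applying this to $g = H \diamond \cdot$ and inserting Proposition~\ref{prop:support} yields $\mathrm{supp}(\P^{H \diamond X}) = \mathrm{cl}(H \diamond \vee\mathrm{-span}(\mathrm{supp}(S)))$.

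The second step is purely algebraic: for $H \in [0,\infty)^{p \times d}$ the map $H \diamond \cdot$ is a homomorphism for the operations $(\vee, \cdot)$, i.e.\ $H \diamond (\alpha x \vee \beta y) = \alpha (H \diamond x) \vee \beta (H \diamond y)$, which follows coordinatewise from $H_{jl}(\alpha x_l \vee \beta y_l) = \alpha H_{jl} x_l \vee \beta H_{jl} y_l$ since $H_{jl}\geq 0$. Distributing this identity through a $\vee$-linear combination of generators gives $H \diamond \vee\mathrm{-span}(\mathrm{supp}(S)) = \vee\mathrm{-span}(\{H \diamond s : s \in \mathrm{supp}(S)\})$. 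Here I would note that the fixed number of generators in the definition of $\vee\mathrm{-span}$ is immaterial: by the tropical Carathéodory-type reduction (for each coordinate keep only one generator attaining the maximum there), every finite $\vee$-combination in $\R^p$ can be rewritten with at most $p$ terms, so both the $d$-term span in $\R^d$ and the $p$-term span in $\R^p$ coincide with the full generated semimodules, which is exactly what makes the homomorphism identity line up. Together with the first step this establishes \eqref{eq:spanrep}.

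For the simplification \eqref{eq:spanrep:nocl} it remains to show that the closure is superfluous when $H$ has no zero column, i.e.\ that $\vee\mathrm{-span}(T)$ with $T := \{H \diamond s : s \in \mathrm{supp}(S)\}$ is already closed. Since $S$ is finite, $\mathrm{supp}(S)$ is a closed subset of the compact sphere $\S^{d-1}_+$ and hence compact, so $T$ is compact as a continuous image. I would then check that the no-zero-column hypothesis forces $0 \notin T$: writing $H \diamond s = \bigvee_{k=1}^d s_k H_{\cdot k}$, any $s \in \S^{d-1}_+$ has some $s_k > 0$, and the corresponding nonzero column $H_{\cdot k}$ makes $H \diamond s \neq 0$. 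To conclude closedness I take a sequence $\bigvee_{i=1}^p \beta_i^{(m)} t_i^{(m)} \to w$ with $t_i^{(m)} \in T$ and $\beta_i^{(m)} \geq 0$; by compactness $t_i^{(m)} \to t_i \in T$ along a subsequence, and since $t_i \neq 0$ while each $\beta_i^{(m)} t_i^{(m)}$ stays bounded (being dominated by the convergent sequence), each $\beta_i^{(m)}$ is bounded, so a further subsequence gives $\beta_i^{(m)} \to \beta_i$ and $w = \bigvee_{i=1}^p \beta_i t_i \in \vee\mathrm{-span}(T)$.

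The main obstacle I anticipate is precisely this closedness argument: the coefficients range over the unbounded set $[0,\infty)$, and the role of the no-zero-column hypothesis is exactly to exclude the degenerate mechanism by which $\beta_i^{(m)} \to \infty$ while $t_i^{(m)} \to 0$ keeps $\beta_i^{(m)} t_i^{(m)}$ bounded — the way in which a $\vee$-span of a compact set can fail to be closed. Keeping the bookkeeping between the $d$-term and $p$-term conventions for $\vee\mathrm{-span}$ honest in the algebraic step is the other point that I would treat carefully rather than by hand-waving.
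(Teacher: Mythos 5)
Your proposal is correct and follows essentially the same route as the paper's proof: push the support of $\P^X$ through the continuous map $x \mapsto H \diamond x$, identify the image of $\vee\mathrm{-span}(\mathrm{supp}(S))$ using bilinearity of $\diamond$ together with the tropical Carath\'{e}odory reduction to $p$ terms, and then obtain \eqref{eq:spanrep:nocl} from compactness of $\{H \diamond s : s \in \mathrm{supp}(S)\}$ and the fact that the no-zero-column hypothesis keeps $\mathbf 0_p$ out of this set. The only difference is that where the paper cites Bourbaki for $\mathrm{supp}(\P^{H \diamond X}) = \mathrm{cl}\left(\{H \diamond x : x \in \mathrm{supp}(\P^X)\}\right)$ and Butkovi\v{c} (Proposition 25) for closedness of the $\vee$-span of a compact set avoiding the origin, you prove both facts directly, and both of your self-contained arguments (the minimality characterization of the support, and the subsequence/boundedness argument for the coefficients $\beta_i^{(m)}$) are sound.
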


\begin{remark} The assumption about the non-zero columns of $H$ is necessary for \eqref{eq:spanrep:nocl} to hold as the following example shows: Let $\|\cdot\|=\| \cdot \|_\infty$ and $\mbox{supp}(S)=\{(x,x^2,1)^T: x \in [0,1]\}$. Furthermore, let 
$$ H=\left(\begin{array}{ccc} 1 & 0 & 0 \\
0 & 1 & 0\end{array}\right),$$
thus violating the assumption about no zero columns. Then,
\begin{equation*} 
(1,0)^T \notin \{a(x,x^2)^T: a\geq 0, x \in [0,1]\}=\vee-\mbox{span}(\{(x,x^2)^T: x \in [0,1]\})
\end{equation*}
and it holds that 
\begin{align*}
\{a(x,x^2)^T: a\geq 0, x \in [0,1]\}
& =\vee-\mbox{span}(\{(x,x^2)^T: x \in [0,1]\})\\
& =\vee-\mbox{span}(\{H \diamond s: s \in \mbox{supp}(S)\}).
\end{align*}
But $n(1/n,1/n^2)^T \to (1,0)^T, n \to \infty$ and so $(1,0)^T \in \mbox{cl}(\vee-\mbox{span}(\{H \diamond s: s \in \mbox{supp}(\mathbb{S}_+^{d-1})\}))$.
\end{remark}
The previous statements show that the marginal distributions and the general structure of the support of a max-stable random vector $X$ with $1$-Fréchet margins are preserved by max-linear maps. Consequently, they are suitable candidates for mapping $X$ to a lower-dimensional space.

{\section{Examples of max-stable PCA}}

{\subsection{Example to illustrate non-uniqueness of max-stable PCA}
\label{subsec:uniqueness}
The following example illustrates a situation in which the max-stable PCA is not unique. We also discuss differences in classical and max-stable PCA regarding the topic of uniqueness. 
\begin{example}
\label{ex:nonunique}
Let $X$ be a bivariate random vector with $1$-Fréchet margins, independent components and marginal scale coefficients $\sigma_1 = \sigma_2 = 1$. In this case, the spectral measure $S$ of $X$ (for both the sum norm and the maximum norm) equals $\delta_{e_1} + \delta_{e_2}$, i.e.\ the sum of two Dirac measures with mass in the unit vectors $e_1$ und $e_2$, respectively. We use this to rewrite the reconstruction error of $X$ for max-stable PCA with $p = 1$ as follows: 
\begin{align}
 \nonumber   \tilde \rho(B \diamond W \diamond X, X) 
    & = \int_{\S^{d-1}_+} \sum_{k=1}^2 \lvert B_k \diamond W \diamond a - a_k \rvert S(da) \\
 \nonumber    & = \sum_{k=1}^2 \lvert B_k \diamond W \diamond e_1 - (e_1)_k \rvert + \lvert B_k \diamond W \diamond e_2 - (e_2)_k \rvert \\
 \label{Eq:recerror_Ex_Uniq}   & = \lvert B_1 W_1 - 1 \rvert + \lvert B_2 W_2 - 1 \rvert + B_1 W_2 + B_2 W_1. 
\end{align}
We want to show that all $B^T, W \in [0,\infty)^2$ which minimize this expression are given by
$$ B=(\lambda,0)^T, \; W=(\lambda^{-1},0) \;\;\; \mbox{and} \;\;\; B=(0,\lambda)^T, \; W=(0,\lambda^{-1}), \;\;\; \mbox{with } \lambda>0.$$
We get from \eqref{Eq:recerror_Ex_Uniq} that all such choices lead to $\tilde{\rho}(B \diamond W \diamond X, X)=1$ and we will show that they are the only possible combinations to attain this value and that it is minimal.  

Note first that in the optimal solution at least one entry of $B$ needs to be positive, since otherwise (for $B=(0,0)^T$) the reconstruction error from \eqref{Eq:recerror_Ex_Uniq} will be equal to 2 and thus the solution cannot be optimal. So, let us first assume that the first component of $B$ is positive. We can then always rescale $B$ and $W$ without changing $B \diamond W$ such that this first entry, $B_1$, equals 1, leading to 
\begin{equation}\label{Eq:simple_1_recerror_Ex_Uniq}
   \tilde \rho(B \diamond W \diamond X, X) = \lvert W_1 - 1 \rvert + \lvert B_2 W_2 - 1 \rvert + W_2 + B_2 W_1 . \end{equation}
 
Now, note that neither $W_1> 1$ nor $B_2W_2 > 1$ can hold for an optimal solution, as a decrease in $W_1, B_2$ or $W_2$ in this scenario will always lead to a decrease in the value of \eqref{Eq:simple_1_recerror_Ex_Uniq}. So, for the optimal solution, we have
\begin{equation}\label{Eq:simple_recerror_Ex_Uniq}
 \tilde \rho(B \diamond W \diamond X, X) = 1-W_1+ 1- B_2 W_2  + W_2 + B_2 W_1 .\end{equation}
 We now distinguish three possible options for $B_2$:
 \begin{enumerate}
 \item[$B_2=1$:] In this case the value in \eqref{Eq:simple_recerror_Ex_Uniq} equals 2, so this cannot be optimal.
 \item[$B_2>1$:] In this case $\lvert W_1 - 1 \rvert + B_2 W_1$ is minimized by $W_1=0$ and $\lvert B_2 W_2 - 1 \rvert + W_2$ is minimized by $W_2=B_2^{-1}$, leading to the reconstruction error $1+B_2^{-1}>1$. Again, this solution cannot be optimal.
 \item[$B_2<1$:] Analogous to the previous case, \eqref{Eq:simple_recerror_Ex_Uniq} is minimized by $W_1=1$ and $W_2=0$, leading to the reconstruction error $1+B_2>1$. This error now becomes minimal for $B_2=0$.
 \end{enumerate}
 So, if the first component of $B$ is positive and standardized to 1, the optimal solution is given by $B=(1,0)^T, \; W=(1,0)$. Analogously, if the second component is $B$ is positive and standardized to 1, the optimal solution is given by $B=(0,1)^T, \; W=(0,1)$. The optimal matrix $H=B \diamond W$ is thus not unique but given by

\begin{equation*}
    H^{(1)} = \begin{pmatrix}
        1 & 0 \\
        0 & 0 
    \end{pmatrix}
    \quad \mbox{ or } \;\;\; 
    H^{(2)} = \begin{pmatrix}
        0 & 0 \\
        0 & 1
    \end{pmatrix}.
\end{equation*}
This is in contrast to the uniqueness properties of classic PCA with $p=1$ for the analogous case of a bivariate Gaussian random vector with i.i.d. entries and variance equal to one. There, the solution is not unique as well, but indeed the projection to \textit{any} one dimensional subspace is optimal, yielding infinitely many optimal projection matrices. 
\end{example}
}

\subsection{Example to illustrate lack of symmetry properties}
We give an example where the optimal matrix pair $(B,W)$ does not satisfy $B=W^T$, highlighting the need to estimate both matrices individually in applications.
\begin{example}\label{ex:nosymmetry}
  Let $Z = (Z_1, Z_2)^T$ be a bivariate random vector with $1$-Fréchet margins, independent components and marginal scale coefficients $\sigma_1 = \sigma_2 = 1$. Set
  \begin{equation}\label{Eq:example_non_symm}
    A =\begin{pmatrix}
      1 & 0  \\
      0 & 1 \\
      \frac 1 2 & \frac 1 2 
    \end{pmatrix}
  \end{equation}
  to define
  \begin{equation*}
    X := A \diamond  Z. 
  \end{equation*}
 {Analogously to \eqref{Eq:recerror_Ex_Uniq} it follows that for any $H \in [0,\infty)^{3 \times 3}$ we have
 $$ \tilde{\rho}(H \diamond X, X)= \tilde{\rho}(H \diamond A \diamond Z, A \diamond Z)=\sum_{i,j=1}^3|(H \diamond A)_{ij}-A_{ij}|.$$ }
  Now we define
  \begin{equation*}
    H := \begin{pmatrix}
      1 & 0 & 0  \\
      0 & 1 & 0 \\
      \frac 1 2 & \frac 1 2 & 0
    \end{pmatrix} = 
    \begin{pmatrix}
      1 & 0  \\
      0 & 1 \\
      \frac 1 2 & \frac 1 2 
    \end{pmatrix} \diamond 
    \begin{pmatrix}
       1 & 0 & 0 \\
       0 & 1 & 0 
    \end{pmatrix}, 
  \end{equation*}
  then it is straightforward to verify 
  \begin{equation*}
    H \diamond A = A,
  \end{equation*}
  and we conclude that for $p=2$ there exist matrices $B, W^T \in [0, \infty)^{3 \times 2}$ such that the reconstruction error is zero.
  Now, assume there exists a \emph{symmetric} matrix $H^s = B \diamond W$
  which also satisfies $\tilde \rho(H^s \diamond X, X ) = 0$, i.e.\ such that
  \begin{equation}
    \label{eq:mlmex}
    H^s \diamond A = A. 
  \end{equation}
  Since $H^s$ is assumed to be symmetric, 
  we have that $H_{ij}^s = H_{ji}^s$ for all $i,j = 1, 2, 3$ 
  and \eqref{eq:mlmex} gives us the following set of equations
  \begin{align*}
    H_{11}^s \vee \frac 1 2 H_{13}^s & = 1, 
    && H_{22}^s \vee \frac 1 2 H_{23}^s = 1, \\
    H_{21}^s \vee \frac 1 2 H_{23}^s & = 0, 
    && H_{12}^s \vee \frac 1 2 H_{13}^s = 0, \\
    H_{31}^s \vee \frac 1 2 H_{33}^s & = \frac 1 2, 
    && H_{32}^s \vee \frac 1 2 H_{33}^s = \frac 1 2. 
  \end{align*}
  From the second row of the equations and by symmetry of $H^s$, it is immediately clear that $H_{12}^s = H_{21}^s = 0$, $H_{13}^s = H_{31}^s = 0 $ and $H_{23}^s = H_{32}^s = 0$. This simplifies the set of equations to 
  \begin{equation*}
    H_{11}^s = 1,
    H_{22}^s = 1, 
    H_{33}^s = 1,
  \end{equation*}
  hence, $H^s = \mathrm{id}_3$. 
  Since the identity matrix cannot be written as $B \diamond W$ for any matrices $B \in [0, \infty)^{3 \times 2}$ and $W \in [0, \infty)^{2 \times 3}$, we found a contradiction and conclude that no symmetric matrix can obtain the optimal reconstruction error in this case. 
\end{example}

\section{Additional results on statistics}
\label{subsec:supp_results_statistics}
The next lemma shows that there always exists a max-stable PCA and moreover that it can be found within a compact set that only depends on the dimensions $d, p$ and the scale parameters of $X$. This allows us to apply Theorem~\ref{thm:consistency} to deduce consistency of max-stable PCA.
\begin{lemma}\label{lem:compactset} 
    Let $X$ be a $d$-variate max-stable random vector with non-degenerate $1$-Fréchet margins.  
    Then there exists a constant $\kappa < \infty$ only depending on $d,p$ and the marginal scale parameters of $X$ such that a global minimizer of 
    \begin{equation} \label{eq:lemma_tf}
      (B,W) \mapsto \sum_{k=1}^d \rho(B_k \diamond W \diamond X, X_k) 
  \end{equation} 
    can be found within the compact set 
    \begin{equation}\label{eq:compset} 
      K = [0, \kappa]^{d \times p} \times [0, 1]^{p \times d}.  
    \end{equation} 
\end{lemma}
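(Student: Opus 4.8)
The plan is to reduce the unbounded domain $[0,\infty)^{d\times p}\times[0,\infty)^{p\times d}$ to the compact box $K$ in two stages, exploiting the scale invariance of $\diamond$ together with the integral representation of the reconstruction error. Writing $H=B\diamond W$ and using $\bigvee_{l} H_{kl}f_l=\bigvee_{l}B_{kl}\,g_l$ with $g_l:=\bigvee_{j=1}^d W_{lj}f_j$, equation~\eqref{metric_representation_sf} of Lemma~\ref{metric_representation} lets me rewrite the objective~\eqref{eq:lemma_tf} as
\begin{equation*}
\Phi(B,W)=\sum_{k=1}^d \int_{[0,1]}\Bigl|\bigvee_{l=1}^p B_{kl}\,g_l(e)-f_k(e)\Bigr|\,de .
\end{equation*}
Set $\sigma_{\min}:=\min_j \sigma_j$, which is strictly positive by non-degeneracy and~\eqref{eq:scale-par-reps}, and $\Sigma:=\sum_{j=1}^d \sigma_j$. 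The trivial choice $B=\mathbf 0_{d\times p}$ yields $\Phi(\mathbf 0_{d\times p},W)=\sum_k \int_{[0,1]} f_k(e)\,de=\Sigma$, so the infimum of $\Phi$ over the full domain is at most $\Sigma$; this is the benchmark I would use to cap the entries.

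The first stage normalizes $W$. For each row $l$ with $m_l:=\max_j W_{lj}>0$, I rescale $W_{l\cdot}\mapsto W_{l\cdot}/m_l$ and $B_{\cdot l}\mapsto m_l B_{\cdot l}$; since every term $B_{il}W_{lj}$ is unchanged, $H=B\diamond W$ and hence $\Phi$ are unchanged. For rows $l$ with $m_l=0$ the function $g_l$ vanishes, so $B_{\cdot l}$ has no effect on $\Phi$ and I simply set it to $\mathbf 0_d$. After this step $W\in[0,1]^{p\times d}$ with the same value of $\Phi$. The second stage bounds $B$ for this fixed, normalized $W$. The key estimate is the reverse triangle inequality applied to each summand: with $h_k:=\bigvee_{l}B_{kl}g_l$,
\begin{equation*}
\int_{[0,1]}|h_k-f_k|\,de \;\ge\; \int_{[0,1]} h_k\,de-\sigma_k \;\ge\; B_{kl}\int_{[0,1]} g_l\,de-\sigma_k
\end{equation*}
for every $l$. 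For a non-zero row $l$ of the normalized $W$ there is an index $j^\ast$ with $W_{lj^\ast}=1$, whence $g_l\ge f_{j^\ast}$ and $\int_{[0,1]} g_l\,de\ge \sigma_{j^\ast}\ge\sigma_{\min}$. Thus if $B_{kl}\sigma_{\min}-\sigma_k>\Sigma$ the $k$-th summand alone already exceeds the benchmark value $\Sigma$, so any minimizer over $B$ (for fixed $W$) must satisfy $B_{kl}\le(\Sigma+\sigma_k)/\sigma_{\min}\le 2\Sigma/\sigma_{\min}=:\kappa$ on all non-zero columns, while the entries on zero columns were already set to $0\le\kappa$. This $\kappa$ depends only on the scale parameters (hence on $d,p$ and the $\sigma_j$), as required.

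Combining the two stages: given any $(B,W)$ I obtain $(B',W')$ with $W'\in[0,1]^{p\times d}$ and $\Phi(B',W')=\Phi(B,W)$, and then minimizing $\Phi(\cdot,W')$ over $B$ produces a point of $K$ whose value does not exceed $\Phi(B,W)$; therefore $\inf_K\Phi=\inf\Phi$ over the full domain. Finally I would establish that $\Phi$ is continuous on $K$ — the map $(B,W)\mapsto \bigvee_l B_{kl}g_l(e)$ is continuous in the entries for each $e$ and is dominated by an integrable function on $K$, so dominated convergence gives continuity of each integral — and invoke the extreme value theorem to conclude that the infimum over the compact set $K$ is attained, yielding a global minimizer in $K$. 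I expect the main obstacle to be the bookkeeping around the order of operations: normalizing $W$ first may inflate $B$, so the coercivity bound for $B$ must be argued \emph{after} fixing the normalized $W$, together with the careful treatment of zero rows of $W$ (and the corresponding irrelevant, freely-zeroed columns of $B$) so that the reduction to $K$ never increases the reconstruction error.
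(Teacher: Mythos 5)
Your proof is correct and follows essentially the same route as the paper's: both exploit the scale invariance of $\diamond$ to normalize $W$ into $[0,1]^{p\times d}$ without changing $B\diamond W$, derive a coercivity bound on the remaining entries by comparing against a cheap benchmark reconstruction, and conclude via continuity plus compactness. The only differences are cosmetic — the paper first bounds $\|B\diamond W\|_{\infty}$ (using the benchmark matrix $M$ that copies the first $p$ coordinates, giving the sharper constant $\kappa=\bigl(\sum_{k=p+1}^{d}\sigma_k+\sigma_{\max}\bigr)/\sigma_{\min}$) and then rescales so that $\tilde B$ inherits this bound, whereas you rescale first and bound the entries of $B$ directly against the benchmark $B=\mathbf 0_{d\times p}$, yielding the larger but equally valid $\kappa=2\bigl(\sum_{j=1}^{d}\sigma_j\bigr)/\sigma_{\min}$.
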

We give a short Corollary related to Theorem~\ref{thm:consistency} to formalize the consistency of max-stable PCA.
\begin{corollary}\label{cor:cons}
  The parameter set 
  \begin{equation*}
    K = [0, \kappa]^{d \times p} \times [0,1]^{p \times d}
  \end{equation*}
  from \eqref{eq:compset} and the cost function
  \begin{equation*}
    c \colon K \times \S^{d-1}_+ \to \R, \quad (B,W, a) \mapsto \sum_{k=1}^d  \lvert  B_k \diamond W \diamond a - a_k \rvert, 
  \end{equation*}
  satisfy the conditions of Theorem~\ref{thm:consistency}. 
\end{corollary}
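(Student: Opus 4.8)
The plan is to verify the two hypotheses of Theorem~\ref{thm:consistency} that concern the specific parameter set and cost function arising in max-stable PCA, namely that $K$ is a compact subset of some Euclidean space and that $c$ is continuous. The remaining, probabilistic hypotheses of the theorem --- finiteness of $S$ and the weak convergence in probability of $\hat{S}_n$ in \eqref{eq:weakconv_prob} --- are not properties of $K$ or $c$ at all; they are supplied separately by the choice of spectral measure and the consistency of its estimator \eqref{eq:est_spectralmeasure} discussed before the theorem, so I would simply cite those and concentrate the argument on $K$ and $c$.

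For compactness, I would identify a matrix pair $(B,W)\in[0,\infty)^{d\times p}\times[0,\infty)^{p\times d}$ with the point of $\R^{2dp}$ obtained by stacking the entries of $B$ and then those of $W$. Under this identification $K$ becomes the box $[0,\kappa]^{dp}\times[0,1]^{pd}$, a finite product of compact intervals, which is compact by the Heine--Borel theorem; thus the hypothesis $K\subseteq\R^m$ compact holds with $m=2dp$.

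For continuity of $c$, the key point is to unfold the definition of the max-linear product rather than appeal to continuity of ordinary matrix multiplication, since $\diamond$ is not linear. Writing out the two nested applications of $\diamond$ yields
\[
B_k \diamond W \diamond a = \bigvee_{m=1}^p \bigvee_{l=1}^d B_{km} W_{ml} a_l,
\]
a finite maximum over the index pairs $(m,l)$ of the maps $(B,W,a)\mapsto B_{km}W_{ml}a_l$. Each of these is a product of coordinate projections and hence jointly continuous on $K\times\S^{d-1}_+$; a finite maximum of continuous functions is continuous, so $(B,W,a)\mapsto B_k\diamond W\diamond a$ is continuous. Composing with the continuous maps $t\mapsto t-a_k$ and $t\mapsto\lvert t\rvert$, and then summing over the finitely many indices $k=1,\ldots,d$, shows that $c$ is continuous, completing the verification.

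The only point that requires care --- and the reason the statement is not entirely immediate --- is precisely that $\diamond$ must be handled through its explicit max-of-products form; once that expression is written down, every remaining step reduces to the standard facts that finite maxima, absolute values, and finite sums preserve continuity, so no genuine obstacle remains.
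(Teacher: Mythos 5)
Your proposal is correct and follows essentially the same route as the paper's own proof, which simply observes that $K$ is compact and that $c$ is continuous as a composition of continuous functions; you merely spell out the details (Heine--Borel for the box $K$, and the explicit expansion $B_k \diamond W \diamond a = \bigvee_{m=1}^p\bigvee_{l=1}^d B_{km}W_{ml}a_l$ as a finite maximum of products of coordinates) that the paper leaves implicit. Your remark that the probabilistic hypotheses on $S$ and $\hat S_n$ are external to the corollary also matches the paper's framing, where those are supplied separately by the spectral-measure estimator.
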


\section{Proofs}
\subsection{Proofs of main results}

\begin{proof}[Proof of Lemma~\ref{metric_representation}]
By Lemma~\ref{lem:maxlin_trafo}, the vector $(Z_1,Z_2)^T:=(\bigvee_{i= 1}^d b_i X_i, \bigvee_{i = 1}^d c_i X_i)^T$ is bivariate max-stable with 1-Fr\'{e}chet margins. Thus, there exist two spectral functions $g_1, g_2 \in L_+^1([0,1])$ such that 
$$ P(Z_1\leq z_1, Z_2 \leq z_2)=\exp \left( - \int_{[0,1]} \bigvee_{i=1}^2 \frac {g_i(e)}{z_i} \, de \right) \mathbbm 1_{[0, \infty)^2}((z_1,z_2)^T)$$
for all $(z_1,z_2)^T \in \mathbb{R}^2$. From this we derive that the spectral functions of the max-stable random vector $(Z_1, Z_2, Z_1 \vee Z_2)^T$ are given by $g_1, g_2, g_1 \vee g_2 \in L_+^1([0,1])$.
The definition of $\rho$ in \eqref{eq:univmetric} together with \eqref{eq:scale-par-reps} and the fact that
  \begin{equation}
  \label{eq:absvalrep}
    x \vee y - x + x \vee y - y = \lvert x - y \rvert  
  \end{equation}
   for all $x,y \in \mathbb{R}$ then leads to
  \begin{align*}
  \rho(Z_1,Z_2) = \int_{[0,1]} \lvert g_1(e) - g_2(e) \rvert \, de 
  & = \int_{[0,1]} g_1(e) \vee g_2(e) - g_1(e) + g_1(e) \vee g_2(e) - g_2(e) \, de \\
  & = 2 \int_{[0,1]} g_1(e) \vee g_2(e) \, de - \int_{[0,1]} g_1(e) de - \int_{[0,1]} g_2(e) \, de \\
  & = 2 \sigma_{Z_1 \vee Z_2} - \sigma_{Z_1} - \sigma_{Z_2}, 
  \end{align*}
  where $\sigma_{Z_1 \vee Z_2}$ denotes the scale coefficient of the $1$-Fréchet random variable $Z_1 \vee Z_2$ and $\sigma_{Z_1}, \sigma_{Z_2}$ denote the scale coefficients of $Z_1$ and $Z_2$, respectively.
This allows us to write
  \begin{equation*} 
    \rho \left(\bigvee_{i= 1}^d b_i X_i, \bigvee_{i = 1}^d c_i X_i \right) 
    = 2 \sigma_{\bigvee_{i=1}^d b_i X_i  \vee c_i X_i} - \sigma_{\bigvee_{i=1}^d b_i X_i} - \sigma_{ \bigvee_{i=1}^d c_i X_i}. 
    \end{equation*}
    To show~\eqref{metric_representation}, we use the first equation in \eqref{eq:scale-par-reps} together with~\eqref{Eq:univ_maxlinS} to derive the three scale coefficients and obtain
    \begin{align*}
    \rho \left(\bigvee_{i= 1}^d b_i X_i, \bigvee_{i = 1}^d c_i X_i \right) 
    & = 2 \int_{\S^{d-1}_+} \bigvee_{i=1}^d (b_i \vee c_i) a_i
    - \bigvee_{i=1}^d b_i a_i
    - \bigvee_{i=1}^d c_i a_i \, S(da) \\
    & = \int_{\S^{d-1}_+} \left \lvert \bigvee_{i=1}^d b_i a_i - \bigvee_{i=1}^d c_i a_i \right \rvert \, S(da),  
  \end{align*} 
  where in the last step we used~\eqref{eq:absvalrep} again. The last equality~\eqref{metric_representation_sf} now follows by deriving the scale coefficients using~\eqref{eq:scale-par-reps} and~\eqref{Eq:univ_maxlinf} and calculating by the same arguments as above. We finally arrive at 
  \begin{align*}
  \rho \left(\bigvee_{i= 1}^d b_i X_i, \bigvee_{i = 1}^d c_i X_i \right) 
    & = 2 \int_{\S^{d-1}_+} \bigvee_{i=1}^d (b_i \vee c_i) f_i(e) \, de - \bigvee_{i=1}^d b_i f_i(e) 
    - \bigvee_{i=1}^d c_i f_i(e) \, de \\
    & = \int_{\S^{d-1}_+} \left \lvert \bigvee_{i=1}^d b_i f_i(e) - \bigvee_{i=1}^d c_i f_i(e) \right \rvert \, de. 
  \end{align*}
 \end{proof}

In order to prove Theorem~\ref{thm:spectralrep}], we need some technical setup and adapt some notation from tropical linear algebra to $L_+^1([0,1])$, facilitating our analysis of spectral functions. 
\begin{definition}
\label{def:maxlinindep}
If for a family of functions $f_1, \ldots, f_d \in L^1_+([0,1])$ there exists $i \in  \{1, \ldots, d\}$ and constants $\alpha_j \geq 0$, $j = 1, \ldots, d$ such that the following relation holds:  
\begin{equation*}
  f_i \aeeq \bigvee_{j \neq i} \alpha_j f_j, 
\end{equation*}
then we call $f_1, \ldots, f_d$ \textbf{$\vee$-linearly dependent}. If $f_1, \ldots, f_d$ are not $\vee$-linearly dependent, we call $f_1, \ldots, f_d$ \textbf{$\vee$-linearly independent}.
\end{definition}
\begin{remark}\label{rem:props_independent} If $f_1, \ldots, f_d \in L^1_+([0,1])$ are $\vee$-linearly independent and there exist coefficients $\alpha_j \geq 0, j=1, \ldots, d$ such that
$$ f_1\aeeq \bigvee_{j=1}^d \alpha_j f_j,$$
then both $\alpha_1<1$ and $\alpha_1>1$ lead to a contradiction to our assumptions and so necessarily $\alpha_1=1$.
\end{remark}
We are interested in the sets obtained by taking finite $\vee$-linear combinations of functions from $L^1_+([0,1])$ and denote by 
\begin{equation}
  \vee\mathrm{-span}(\{f_1, \ldots, f_d\}) := \left\{g \in L_+^1([0,1]) \colon g \aeeq \bigvee_{i=1}^d \alpha_i f_i, \alpha_i \geq 0 \right\}
\end{equation}
the set of functions that coincide with a $\vee$-linear combination of $f_1, \ldots, f_d$ almost everywhere. 

\begin{lemma}\label{lemma:basisuniqueL1}
  Let $f_1, \ldots, f_d, g_1, \ldots, g_q \in L_+^1([0,1])$ with $q \leq d$ be such that $g_1, \ldots, g_q$ are $\vee$-linearly independent and
  $$ \vee\mathrm{-span}(f_1, \ldots, f_d)=\vee\mathrm{-span}(g_1, \ldots, g_q). $$ 
  Then for each $j=1, \ldots, q$ there exist $c_j > 0, i_j \in \{1, \ldots, d\}$ such that
  \begin{equation*}
  g_j \aeeq c_j f_{i_j}.
  \end{equation*}
\end{lemma}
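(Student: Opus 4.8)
The plan is to exploit the two-way spanning relationship between the families $\{f_i\}$ and $\{g_k\}$ together with the independence of the $g_k$, reducing the whole statement to a two-sided inequality (a squeeze) for each fixed $j$. Since $g_j \in \vee\mathrm{-span}(f_1, \ldots, f_d)$, I would first fix a representation
\[ g_j \aeeq \bigvee_{i=1}^d \alpha_i f_i, \qquad \alpha_i \geq 0, \]
and, since every $f_i \in \vee\mathrm{-span}(g_1, \ldots, g_q)$, representations
\[ f_i \aeeq \bigvee_{k=1}^q \beta_{ik} g_k, \qquad \beta_{ik} \geq 0. \]
Because only finitely many functions are involved, all these almost-everywhere identities hold simultaneously off a single null set, so I may substitute the second batch into the first and rearrange the finite suprema pointwise to obtain
\[ g_j \aeeq \bigvee_{k=1}^q \gamma_k g_k, \qquad \gamma_k := \bigvee_{i=1}^d \alpha_i \beta_{ik}. \]

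Next I would invoke the $\vee$-linear independence of $g_1, \ldots, g_q$. By Remark~\ref{rem:props_independent}, the coefficient of $g_j$ in such a self-representation must satisfy $\gamma_j = 1$. Since $\gamma_j = \bigvee_{i=1}^d \alpha_i \beta_{ij}$ is a maximum over finitely many nonnegative reals, it is attained: there is an index $i^\ast$ with $\alpha_{i^\ast}\beta_{i^\ast j} = 1$, and in particular both $\alpha_{i^\ast} > 0$ and $\beta_{i^\ast j} > 0$.

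Finally I would run the squeeze. On one hand, $\alpha_{i^\ast} f_{i^\ast}$ is a single term of the supremum representing $g_j$, so $\alpha_{i^\ast} f_{i^\ast} \aele g_j$. On the other hand, $\beta_{i^\ast j} g_j \aele f_{i^\ast}$ as a term in the representation of $f_{i^\ast}$; multiplying by $\alpha_{i^\ast}$ and using $\alpha_{i^\ast}\beta_{i^\ast j}=1$ yields $g_j \aele \alpha_{i^\ast} f_{i^\ast}$. Combining both inequalities gives $g_j \aeeq \alpha_{i^\ast} f_{i^\ast}$, so the claim holds with $c_j := \alpha_{i^\ast} > 0$ and $i_j := i^\ast$. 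The only genuinely delicate point—the part I would treat as the main obstacle—is the almost-everywhere bookkeeping: one must make the substitution valid off a common null set, apply the independence remark to bona fide pointwise a.e.\ identities, and note that independence forces each $g_j$ to be positive on a set of positive measure so that the coefficients are meaningful. Once these are handled, everything else is a finite, order-preserving manipulation of suprema.
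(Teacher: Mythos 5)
Your proposal is correct and follows essentially the same route as the paper's proof: both start from the two-way representations, substitute one into the other to obtain a self-representation $g_j \aeeq \bigvee_{l=1}^q \gamma_l g_l$, invoke Remark~\ref{rem:props_independent} to force $\gamma_j = 1$, and pick an index $i^\ast$ at which the finite maximum $\bigvee_{i=1}^d \alpha_i \beta_{ij} = 1$ is attained. Your concluding two-sided squeeze ($\alpha_{i^\ast} f_{i^\ast} \aele g_j$ directly from the representation of $g_j$, and $g_j \aele \alpha_{i^\ast} f_{i^\ast}$ from that of $f_{i^\ast}$) is a marginally more direct finish than the paper's, which instead expands $\Gamma_{j i_j} f_{i_j}$ in terms of the $g_l$ and bounds the cross terms, but the argument is the same in substance.
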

Lemma~\ref{lemma:basisuniqueL1} tells us that if two families of functions generate the same $\vee$-span and one of the families is $\vee$-linearly independent, then it is, up to scaling, already contained in the other family. This extends a result from~\cite{Butkovic_mls}, Corollary 3.3.11, to our infinite dimensional setting and is a key argument for our characterization of perfectly reconstructable random vectors.

\begin{proof}[Proof of Lemma~\ref{lemma:basisuniqueL1}]
  Because both $\vee$-spans of the families $f_1, \ldots, f_d$ and $g_1, \ldots, g_q$ coincide, we have that for each $g_j$ and $f_i$, there exist coefficients $\Gamma_ {ji}, \Lambda_{il} \geq 0$, such that 
  \begin{align}
    g_j & \aeeq \bigvee_{i=1}^d \Gamma_{ji} f_i, \label{eq:proof1}\\
    f_i & \aeeq \bigvee_{l=1}^q \Lambda_{il} g_l. \label{eq:proof2} 
    \end{align}
    Let now $j \in \{1, \ldots, q\}$ be fixed and replace the $f_i$ in~\eqref{eq:proof1} by the representation of~\eqref{eq:proof2}, so that we obtain 
    \begin{equation*}
        g_j \aeeq  \bigvee_{i=1}^d \Gamma_{ji} \left( \bigvee_{l=1}^q  \Lambda_{il}  g_l \right) \aeeq \bigvee_{l=1}^q \left( \bigvee_{i=1}^d \Gamma_{ji} \Lambda_{il} \right) g_l. 
    \end{equation*}
    The $\vee$-linear independence of $g_1, \ldots, g_q$ implies by Remark~\ref{rem:props_independent} that for every $j = 1, \ldots, q$, there exists an $i_j \in \{1, \ldots, d\}$ such that $\Gamma_{j {i_j}} \Lambda_{i_j j} = 1$ and furthermore the inequality 
    \begin{equation}
    \label{eq:proof3}
        \bigvee_{l \neq j} \left( \bigvee_{i=1}^d \Gamma_{ji} \Lambda_{il} \right) g_l \leq g_j
    \end{equation}
    holds almost everywhere. This leads to
    \begin{equation*}
        \Gamma_{j i_j} f_{i_j} 
        \aeeq \Gamma_{j i_j} \bigvee_{l=1}^q \Lambda_{i_j l} g_l 
        \aeeq \Gamma_{j i_j} \Lambda_{i_j j} g_j \vee \bigvee_{l \neq j} \Gamma_{j i_j} \Lambda_{i_j l} g_l
        \aeeq g_j \vee \bigvee_{l \neq j} \Gamma_{j i_j} \Lambda_{i_j l} g_l. 
    \end{equation*}
    In the right hand side of the last expression, the second term is in fact negligible, since $\Gamma_{j i_j} \Lambda_{i_j l}\leq \bigvee_{i=1}^d \Gamma_{ji} \Lambda_{il}$ and so by~\eqref{eq:proof3} 
    \begin{equation*}
        \bigvee_{l \neq j} \Gamma_{j i_j} \Lambda_{i_j l} g_l \leq \bigvee_{l \neq j} \left( \bigvee_{i=1}^d \Gamma_{ji} \Lambda_{il} \right) g_l \leq g_j.
    \end{equation*}
    Thus, we have shown that 
    $$  \Gamma_{j i_j} f_{i_j} \aeeq g_j$$
    and so the statement follows for $c_j := \Gamma_{j i_j}$. Since we have chosen $\Gamma_{j i_j}$ to satisfy $\Gamma_{j {i_j}} \Lambda_{i_j j} = 1$ we can furthermore conclude that $c_j>0$.
\end{proof}

\begin{proof}[Proof of Theorem~\ref{thm:spectralrep}]
\phantom{}
\begin{itemize}
 \item[$(ii) \Rightarrow (i)$]  
 Let $(ii)$ hold and set $(B,W)$ 
  as in~\eqref{eq:perfrecBW}. 
	Note that $W \diamond B = \mathrm{id}_{p}$, hence, we obtain
	\begin{equation*}
		B \diamond W \diamond X \disteq B \diamond W \diamond B \diamond Y = B \diamond Y \disteq X, 
	\end{equation*}
 which implies
  \begin{equation*}
    \tilde \rho(B \diamond W \diamond X, X) = \tilde \rho(B \diamond Y, B \diamond Y) = 0.
  \end{equation*}
  This shows (i) and that $(B,W)$ in \eqref{Eq:perfect_B_W} can indeed be chosen as in~\eqref{eq:perfrecBW}.
   \item[$(i) \Rightarrow (ii)$] 
  Assume that $(B,W) \in [0, \infty)^{d \times p} \times [0, \infty)^{p \times d}$ exists with $\tilde \rho(B \diamond W \diamond X, X) = 0$ and let $f = (f_1, \ldots, f_d)$ denote a vector of spectral functions from the representation \eqref{Eq:spectral_fun_rep} of the distribution function of $X$.
 It follows from $\tilde \rho(B \diamond W \diamond X, X ) = 0$ and~\eqref{metric_representation_sf} that
  \begin{equation}
    \label{eq:perfrec_int}
    \int_{[0,1]} \sum_{k=1}^d \lvert B_k \diamond W \diamond f(e) - f_k(e) \rvert \, de = 0. 
  \end{equation}
  Hence, the functions $f_k$ and $B_k \diamond W \diamond f$ coincide Lebesgue-almost everywhere. 
  Now, define functions 
  \begin{equation*}
    g_j \colon [0, 1] \to [0, \infty),  e \mapsto g_j(e):= W_j \diamond f(e), \quad j = 1, \ldots p. 
  \end{equation*}
  By the definition of $g_j$ and using~\eqref{eq:perfrec_int}, we get the following two relations
  \begin{align}
\nonumber    g_j \in & \left \{ \bigvee_{l=1}^d \beta_l f_l: \; \beta_l \geq 0 \right \} \subseteq L_+^1([0,1]), & j = 1, \ldots, p, \\
\label{Eq:fi_in_gs}    f_i \in & \left \{ \bigvee_{j=1}^p \alpha_j g_j: \; \alpha_j \geq 0 \right \} \subseteq L_+^1([0,1]),  & i = 1, \ldots, d,
  \end{align}
  which implies that
   $$ \left \{ \bigvee_{l=1}^d \beta_l f_l: \; \beta_l \geq 0 \right \} =  \left \{ \bigvee_{l=1}^d \beta_l \bigvee_{j=1}^p \alpha_{j,l} g_j: \; \beta_l, \alpha _{j,l} \geq 0 \right \} = \left \{ \bigvee_{j=1}^p \alpha_j g_j: \; \alpha_j \geq 0 \right \}.$$
   In case that the $p$ functions $g_1, \ldots, g_p$ are not $\vee$-linearly independent, we can recursively eliminate functions which can be written as a $\vee$-linear combination of the remainders, thereby leading (after suitable reordering of indices) to $q \leq p$ $\vee$-linearly independent functions $g_1, \ldots, g_q$ such that 
    \begin{equation}\label{Eq:all_gensets_same} \left \{ \bigvee_{j=1}^q \alpha_j g_j: \; \alpha_j \geq 0 \right \} = \left \{ \bigvee_{j=1}^p \alpha_j g_j: \; \alpha_j \geq 0 \right \} = \left \{ \bigvee_{l=1}^d \beta_l f_l: \; \beta_l \geq 0 \right \}.\end{equation}
  Then Lemma~\ref{lemma:basisuniqueL1} applied to the $\vee$-linearly independent $g_1, \ldots, g_q$ yields that there exist $q$ functions among $f_1, \ldots, f_d$ which are rescaled versions of $g_1, \ldots, g_q$. Apply again a suitable reordering to let this be $f_1, \ldots, f_q$, so that
    \begin{equation}\label{Eq:gs_scaled_fs}  g_i \aeeq  c_i f_i,\;\;\; i=1, \ldots, q \end{equation}
   for some suitable $c_i>0, i=1, \ldots, q$.  Combine \eqref{Eq:fi_in_gs}, \eqref{Eq:all_gensets_same} and \eqref{Eq:gs_scaled_fs} to see that 
   \begin{equation*} 
   f_i \in \left \{ \bigvee_{l=1}^q \beta_l f_l: \, \beta_l \geq 0 \right \}= \left \{ \bigvee_{l=1}^p \beta_l f_l: \, \beta_l \geq 0 \right \}, \quad i = 1, \ldots, d.
   \end{equation*}
  Hence, the $(d-p)$ spectral functions $f_{p+1}, \ldots, f_d$ are $\vee$-linear combinations of $f_{1}, \ldots, f_{p}$. This implies that there exists a matrix $\Lambda \in [0, \infty)^{(d-p) \times p}$ such that 
  \begin{equation*}
      f_i \aeeq \bigvee_{j=1}^p \Lambda_{(i - p), j} f_j, i = p+1, \ldots, d
  \end{equation*} 
  and we calculate
  \begin{align*}
  \P((X_{p+1}, \ldots, X_d)^T \leq z) 
  & = \exp \left( - \int_{[0,1]} \bigvee_{i=p+1}^d \frac{f_i(e)}{z_i} \, de \right) \mathbbm 1_{[0, \infty)^{d-p}}(z) \\
  & = \exp \left( - \int_{[0,1]} \bigvee_{i=p+1}^d \bigvee_{j=1}^p \frac{\Lambda_{(i-p),j} f_j(e)}{z_i} \, de \right) \mathbbm 1_{[0, \infty)^{d-p}}(z) \\
  & = \P(\Lambda \diamond (X_1, \ldots, X_p)^T \leq z).
    \end{align*}
    Therefore, the representation of $(X_{p+1}, \ldots, X_d)^T$ by spectral functions is completely determined by the spectral functions of $X_1, \ldots, X_p$ and the matrix $\Lambda$. 
    Now set $Y=(X_1, \ldots, X_p)^T$ to achieve \eqref{eq:maxlinrep} and thereby (ii). 
  \end{itemize}  
\end{proof}

\begin{proof}[Proof of Corollary~\ref{cor:mlf_rep}]
For the first implication, let $X$ be perfectly reconstructable. Then by Theorem~\ref{thm:spectralrep}, we have that $X$, up to permutation of entries, is in distribution of the form ~\eqref{eq:maxlinrep}.
This matrix in~\eqref{eq:maxlinrep} has exactly $p$ $\vee$-linearly independent rows given by the block matrix $\mathrm{id}_p$ and~\eqref{eq:genmaxlinfact} holds with this matrix, $l = p$ and $Z=Y$. 

For the converse statement, let the distribution of $X$ be given by~\eqref{eq:genmaxlinfact} and without loss of generality, let the first $p$ rows of $A$ be such that they can be used to construct the remaining $d-p$ rows by $\vee$-linear combinations, i.e. there exists a matrix $\Lambda \in [0,\infty)^{(d-p) \times p}$ such that $(A_i)_{i=p+1,\ldots,d}=\Lambda \diamond (A_i)_{i=1, \ldots, p}$. Next, define a $p$-variate random vector $Y$ by 
  \begin{equation*}
    Y_j := X_j=A_{j} \diamond Z, \quad j = 1, \ldots, p.
  \end{equation*}
  The vector $Y$ is clearly max-stable with $1$-Fréchet margins and for $X_i, i = p+1, \ldots, d$, we can write
  \begin{equation*}
    X_i= A_i \diamond Z = \left( \bigvee_{j=1}^p \Lambda_{(i-p), j} A_{j} \right) \diamond Z = \bigvee_{j=1}^p \Lambda_{(i-p),j} \left( A_{j} \diamond Z \right) = \bigvee_{j=1}^p \Lambda_{(i-p),j} Y_j.
  \end{equation*}
  Therefore, $X$ has a representation in distribution given by~\eqref{eq:maxlinrep}. Theorem~\ref{thm:spectralrep} thus yields that $X$ is perfectly reconstructable. 
\end{proof}

\begin{proof}[Proof of Theorem~\ref{thm:consistency}]
  
  \emph{(i)}
  The convergence \eqref{eq:weakconv_prob} and the continuity of $c$ imply that the random variables $\int_{\S^{d-1}_+} c(\theta, a) \, \hat S_n(da)$ are $\mathcal{A}, \mathbb{B}$-measurable for each $\theta \in K, n \in \mathbb{N}$. Furthermore, for all finite measures $\mu$ on $(\S^{d-1}_+, \mathcal B(\S^{d-1}_+))$ the function 
  \begin{equation}\label{eq:cintcont} \theta \mapsto  \int_{\S^{d-1}_+} c(\theta, a) \, \mu(da) \end{equation}
  is continuous due to continuity of $c$, compactness of $\S^{d-1}_+$ and since $\mu$ is finite. The function which maps $(\omega,\theta) \in \Omega \times K$ to 
  $$ \int_{\S^{d-1}_+} c(\theta, a) \, \hat S_n(da) $$ 
  is thus a Carath\'{e}odory function and by the Measurable Maximum Theorem of \cite{AliprantisBorger_ida}[Theorem 18.19] there exists a random sequence $(\hat \theta_n)_{n \in \N} \in K$ satisfying \eqref{eq:riskminimizers}. 

   \emph{(ii)} For the second step, note that the function $(\theta, a) \mapsto c(\theta, a)$ is uniformly continuous on the compact set $K \times \S^{d-1}_+$. Thus, for each $\epsilon>0$ there exist $\delta>0, l \in \mathbb{N}$ and $\theta_1, \ldots, \theta_l \in K$ such that $B(\theta_i,\delta):=\{\theta \in K: \|\theta-\theta_i\|<\delta\}, i=1, \ldots, l$ is a cover of $K$ and 
   \begin{equation}\label{eq:modcont} \sup_{\theta \in B(\theta_i,\delta), a \in \S^{d-1}_+}|c(\theta,a)-c(\theta_i,a)|<\epsilon \end{equation} 
   for all $i=1,\ldots, l$. Now choose $\theta \in K$ and let $i \in \{1, \ldots, l\}$ be such that $\theta \in B(\theta_i,\delta)$. Then \eqref{eq:modcont} yields
     \begin{align*}
 &  \biggl \lvert \int_{\S^{d-1}_+} c(\theta, a) \, \hat S_n(da)  - \int_{\S^{d-1}_+} c(\theta, a) \, S(da) \biggl \rvert \\
     \leq & \left\lvert 
    \int_{\S^{d-1}_+} c(\theta, a) \, \hat S_n(da) - \int_{\S^{d-1}_+} c(\theta_i, a) \, \hat S_n(da)\right \rvert + \left \lvert\int_{\S^{d-1}_+} c(\theta_i, a) \, \hat S_n(da) - \int_{\S^{d-1}_+} c(\theta_i, a) \, S(da)\right \rvert \\
  + & \left \lvert\int_{\S^{d-1}_+} c(\theta_i, a) \, S(da) - \int_{\S^{d-1}_+} c(\theta, a) \, S(da)\right \rvert \\
     \leq & \, \epsilon \hat S_n(\S^{d-1}_+) +\left \lvert\int_{\S^{d-1}_+} c(\theta_i, a) \, \hat S_n(da) - \int_{\S^{d-1}_+} c(\theta_i, a) \, S(da)\right \rvert + \epsilon S(\S^{d-1}_+).
   \end{align*}
   Thus, 
   \begin{align*} & \sup_{\theta \in K} \biggl \lvert \int_{\S^{d-1}_+} c(\theta, a) \, \hat S_n(da)  - \int_{\S^{d-1}_+} c(\theta, a) \, S(da)\biggl \rvert \\
   \leq & \epsilon ( \hat S_n(\S^{d-1}_+) + S(\S^{d-1}_+))+ \max_{i=1, \ldots, l} \left \lvert\int_{\S^{d-1}_+} c(\theta_i, a) \, \hat S_n(da) - \int_{\S^{d-1}_+} c(\theta_i, a) \, S(da)\right \rvert,
   \end{align*}
   and since the first and the second term converge in probability to $2 \epsilon S(\S^{d-1}_+)$ and 0, respectively, by letting $\epsilon \to 0$ we conclude
   \begin{equation}\label{eq:uniformconv} \sup_{\theta \in K} \biggl \lvert \int_{\S^{d-1}_+} c(\theta, a) \, \hat S_n(da)  - \int_{\S^{d-1}_+} c(\theta, a) \, S(da) \biggl \rvert \xrightarrow{P} 0 
   \end{equation}
   as $n \to \infty$. 

   To show now~\eqref{eq:ermresult}, note first that there exists a $\theta^* \in K$ that satisfies
   \begin{equation*}
    \int_{\S^{d-1}_+} c(\theta^{\ast}, a) \, S(da) = \inf_{\theta \in K} \int_{\S^{d-1}_+} c(\theta, a) \, S(da)
   \end{equation*}
   since $K$ is compact and the function defined in \eqref{eq:cintcont} (with $\mu$ replaced by $S$) is continuous. Then for any sequence $(\hat \theta_n)_{n \in \N} \in K$ satisfying \eqref{eq:riskminimizers} we have
   \begin{align*}
       0 & \leq \int_{\S^{d-1}_+} c(\hat \theta_n,a) \, S(da) - \int_{\S^{d-1}_+} c(\theta^{\ast}, a) \, S(da) \\
       & =  \int_{\S^{d-1}_+} c(\hat \theta_n,a) \, S(da) -  \int_{\S^{d-1}_+} c(\hat \theta_n,a) \,  \hat S_n(da) + \int_{\S^{d-1}_+} c(\theta^{\ast}, a) \, \hat S_n(da) - \int_{\S^{d-1}_+} c(\theta^{\ast}, a) \, S(da) \\
       & \;\;\;\; + \int_{\S^{d-1}_+} c(\hat \theta_n,a) \, \hat S_n(da) - \int_{\S^{d-1}_+} c(\theta^{\ast}, a) \, \hat S_n(da) \\
       &\leq \int_{\S^{d-1}_+} c(\hat \theta_n,a) \, S(da) -  \int_{\S^{d-1}_+} c(\hat \theta_n,a) \, \hat S_n(da) + \int_{\S^{d-1}_+} c(\theta^{\ast}, a) \, \hat S_n(da) - \int_{\S^{d-1}_+} c(\theta^{\ast}, a) \, S(da) \\
       & =: I + II.
   \end{align*}
   By \eqref{eq:uniformconv} both $I$ and $II$ converge to 0 in probability and so the statement follows. 
\end{proof}

\subsection{Proofs of supplementary results from Section~\ref{subsec:supp_results}}
\label{subsec:supp_proofs}

\begin{proof}[Proof of Lemma~\ref{lem:maxlin_trafo}]
  The max-linear map defined by $H$ is continuous, therefore measurable and $H \diamond X$ is thus a well-defined $p$-variate random vector, 
  the distribution function of which we denote by $\tilde G$. 
  For $z \in (0, \infty)^p$, we get from~\eqref{Eq:spectral_meas_rep} (and with the convention that $z_i/0:=\infty$ for $z_i>0$) that 
\begingroup
\allowdisplaybreaks
\begin{align*}
    \tilde G(z)
    & = \P \left(\bigvee_{j=1}^d H_{ij} X_j \leq z_i, i = 1, \ldots, p \right) \\
    & = \P \left( X_j \leq \frac {z_i} {H_{ij}}, i = 1,\ldots,p, \, j = 1,\ldots, d \right) \\
    & = \P \left(X_j \leq \bigwedge_{i=1}^p  \frac{z_i}{H_{ij}}, j = 1,\ldots,d \right) \\
    & = \exp \left( - \int_{\S^{d-1}_+} \bigvee_{j=1}^d \bigvee_{i=1}^p \frac {H_{ij} a_j} {z_i} \, S(da) \right) \\
    & = \exp \left( - \int_{\S^{d-1}_+} \bigvee_{i=1}^p \frac {H_i \diamond a} {z_i} \, S(da) \right). 
  \end{align*}
\endgroup
  We can rewrite the expression in the exponent as
  \begin{align*}
    \int_{\S^{d-1}_+}  \bigvee_{i=1}^p \frac {H_i \diamond a} {z_i} \, S(da) 
    & = \int_{ \{ a \in \S^{d-1}_+ \colon \| H \diamond a \| > 0\}}  \bigvee_{i=1}^p \frac {H_i \diamond a} {z_i} \, S(da) \\
    & = \int_{ \{ a \in \S^{d-1}_+ \colon \| H \diamond a \| > 0\}} \bigvee_{i=1}^p \frac {\frac {H_i \diamond a}{\| H \diamond a \|}} {z_i} \| H \diamond a \| \, S(da) \\
    & = \int_{\S^{p-1}_+}\bigvee_{i=1}^p \frac {\tilde a_i}{z_i} \, \tilde S(d \tilde a), 
  \end{align*}
  where $\tilde S$ is the measure appearing in~\eqref{eq:trafospecmeas} and the last equality follows by change of variable.  
  Since $\sup_{a \in \S^{d-1}_+}\|H \diamond a\|<\infty$, we have furthermore that $\tilde S$ is a finite measure on $\S^{p-1}_+$, and therefore a valid spectral measure. 
  This gives us that 
  \begin{equation*}
    \tilde G(z) = \exp \left( - \int_{\S^{p-1}_+} \bigvee_{i=1}^p \frac {a_i} {z_i} \, \tilde S(da) \right).
  \end{equation*}
  Clearly, this is the desired form of a max-stable distribution with $1$-Fréchet margins. 
\end{proof}
For the proof of Proposition~\ref{prop:support} we first recapitulate a useful representation of $X$ in terms of the points of a Poisson point processes, see \cite{DeHaan_srmsp}. For an introduction to Poisson point processes, see \cite{resnick_erp}, \cite{resnick_htp} or \cite{Kallenberg_rm}. 
\begin{lemma}[cf.\ Corollary 9.4.5 in \cite{DeHaanFerreira_evti}]\label{lem:PPP_rep} Let $X$ be a $d$-variate max-stable random vector with non-degenerate $1$-Fréchet margins given by a spectral measure $S$ as in~\eqref{Eq:spectral_meas_rep}. Furthermore, let $(U_k, A_k)_{k \in \N}$ be a Poisson point process on $\R \times \S^{d-1}_+$ 
  with intensity measure $u^{-2} \mathbbm 1_{(0, \infty)}(u) du \times S(da)$. Then, 
  \begin{equation*}
    X \disteq \bigvee_{k \in \N} U_k A_k. 
  \end{equation*}
\end{lemma}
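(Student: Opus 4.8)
The plan is to identify the distribution of $M := \bigvee_{k \in \N} U_k A_k$ by computing its distribution function directly and matching it to $G$ in~\eqref{Eq:spectral_meas_rep}, exploiting the avoidance (void) probabilities of the underlying Poisson point process. First I would record that $M$ is a well-defined $[0,\infty)^d$-valued random vector: each coordinate $M_j = \sup_{k} U_k (A_k)_j$ is a measurable functional of the point process, and its almost sure finiteness will follow a posteriori once the distribution function is computed and seen to be a proper distribution on $(0,\infty)^d$.

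The key observation is that, for $z \in (0,\infty)^d$, the event $\{M \leq z\}$ is an avoidance event. Indeed, since the supremum runs over all atoms, $M \leq z$ holds if and only if every atom $(U_k, A_k)$ satisfies $U_k (A_k)_j \leq z_j$ for all $j$, i.e.\ if and only if the process places no point in the region
\begin{equation*}
  A_z := \bigl\{ (u,a) \in (0,\infty) \times \S^{d-1}_+ \colon u\, a_j > z_j \text{ for some } j \bigr\}.
\end{equation*}
Writing $N$ for the counting measure and $\mu = u^{-2}\mathbbm 1_{(0,\infty)}(u)\,du \times S(da)$ for the intensity measure, the void probability formula for Poisson point processes gives $\P(M \leq z) = \P(N(A_z)=0) = \exp(-\mu(A_z))$.

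It then remains to evaluate $\mu(A_z)$, which is a short computation via Tonelli's theorem. For fixed $a$, the condition $u\,a_j > z_j$ for some $j$ is equivalent to $u > \bigwedge_{j=1}^d z_j/a_j$ under the convention $z_j/0 = \infty$; since $\|a\|=1$ forces at least one positive coordinate, this threshold is positive and finite. Hence the inner integral is $\int_{\bigwedge_{j} z_j/a_j}^{\infty} u^{-2}\,du = \bigvee_{j=1}^d a_j/z_j$, and therefore
\begin{equation*}
  \mu(A_z) = \int_{\S^{d-1}_+} \bigvee_{j=1}^d \frac{a_j}{z_j}\, S(da),
\end{equation*}
which is finite because $S$ is a finite measure and the integrand is bounded on the compact sphere. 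This is exactly the exponent in~\eqref{Eq:spectral_meas_rep}, so $\P(M \leq z) = G(z)$ for all $z \in (0,\infty)^d$.

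Finally I would dispatch the boundary: the case $z \notin [0,\infty)^d$ is trivial because $M \geq 0$, and the remaining points with some $z_j = 0$ follow by monotone limits from the interior, so that the two distribution functions agree everywhere and $M \disteq X$. I expect the only delicate point to be the bookkeeping in correctly describing the avoidance region $A_z$ together with the conventions for coordinates $a_j = 0$; once $A_z$ is pinned down, the identification with the spectral exponent is a one-line integral, and the rest is routine.
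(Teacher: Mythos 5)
Your proposal is correct and follows essentially the same route as the paper's proof: both identify $\{\bigvee_k U_k A_k \leq z\}$ as the void event of the Poisson process on the region $\{(u,a)\colon ua \nleq z\}$, apply the avoidance probability formula, and evaluate the intensity of that region via the inner integral $\int_{\bigwedge_j z_j/a_j}^{\infty} u^{-2}\,du = \bigvee_j a_j/z_j$ to recover the spectral exponent. Your additional care about the conventions for $a_j=0$, the finiteness of the exponent, and the boundary points with $z_j=0$ is sound and only makes explicit what the paper leaves implicit.
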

\begin{proof} Denote by $N =\sum_{k=1}^\infty \delta_{(U_k,A_k)}$ the random measure associated with $(U_k, A_k)_{k \in \mathbb{N}}$. Then, for any $z \in [0, \infty)^d$, it holds that
  \begin{align*}
    \P \left(\bigvee_{k \in \N} U_k A_k \leq z \right)
    & = \P \left( N\bigl(\{ (u, a) \in (0, \infty) \times \S^{d-1}_+ \colon u a \nleq z \} \bigl) = 0 \right)\\
    & = \exp \left(-\int_{\S^{d-1}_+} \int_0^{\infty} \frac 1 {u^2}\mathbbm 1_{\{ u a \nleq z \}} \, du \, S(da) \right) \\
    & = \exp \left(-\int_{\S^{d-1}_+} \int_0^{\infty} \frac 1 {u^2}\mathbbm 1_{ \bigl\{u > \bigwedge_{i=1}^d \frac{z_i} {a_i}\bigl\} } \, du \, S(da) \right)\\
    & = \exp \left(-\int_{\S^{d-1}_+} \int_{\bigwedge_{i=1}^d \frac{z_i} {a_i}}^{\infty} \frac 1 {u^2} \, du \, S(da) \right) \\
    & = \exp \left(-\int_{\S^{d-1}_+} \bigvee_{i=1}^d \frac{a_i} {z_i} \, S(da) \right)  \\
    & = \P(X \leq z). 
  \end{align*}
\end{proof}
\begin{proof}[Proof of Proposition~\ref{prop:support}]
We show \eqref{eq:mevsupp} by double inclusion. 

\emph{1.)} To show the first inclusion $\mathrm{supp}(\P^X) \subseteq \vee\mathrm{-span}(\mathrm{supp}(S))$, use the point process representation of Lemma~\ref{lem:PPP_rep} and
  note that $\{U_k, k \in \mathbb{N}\} \subseteq [0, \infty)$ and $\{A_k, k \in \mathbb{N}\} \subseteq \mathrm{supp}(S)$ almost surely, since 
  $S((\mathrm{supp}(S))^c)=0$ by~\cite{Parthasarathy_pmms} II.2, Theorem 2.1. 
  Additionally, since almost surely only finitely many $U_k, k \in \mathbb{N}$, exceed a given positive threshold, at most $d$ points from $(U_k, A_k)_{k \in \N}$ suffice to represent the componentwise maximum of $\bigvee_{k \in \N} U_k A_k$, 
  yielding $\bigvee_{k \in \N} U_k A_k \in \vee\mathrm{-span}(\mathrm{supp}(S)) $ almost surely. 
  Since $X \disteq \bigvee_{k \in \N} U_k A_k$, we thus get
  \begin{equation*}
    \P^X(\vee\mathrm{-span}(\mathrm{supp}(S)) ) = \P \left(\bigvee_{k \in \N} U_k A_k \in \vee\mathrm{-span}(\mathrm{supp}(S)) \right) = 1. 
  \end{equation*}
  This yields $\mathrm{supp}(\P^X) \subseteq \vee\mathrm{-span}(\mathrm{supp}(S))$, by the aforementioned equivalent definition of the support as minimal closed set of full measure, see~\cite{Parthasarathy_pmms} II.2, Theorem 2.1.  

\emph{2.)} To show the reverse inclusion $\vee\mathrm{-span}(\mathrm{supp}(S)) \subseteq \mathrm{supp}(\P^X)$, let first $x$ be in $\vee\mathrm{-span}(\mathrm{supp}(S))$, which implies that there exist $\alpha_i \geq 0$ and $s_i \in \mathrm{supp}(S)$,  $i = 1, \ldots, d$, 
  such that 
  \begin{equation}\label{Eq:span-rep}
    x = \bigvee_{i=1}^d \alpha_i s_i. 
  \end{equation} As long as $x > \mathbf 0_d$, we can choose the above representation in a way such that $\alpha_i s_i, i=1, \ldots, d,$ are pairwise different, which we will assume in the following. Now, for every open neighborhood $U(x)$ of $x$, there exists an $\epsilon > 0$ such that
  the open rectangle  $(x - \epsilon \mathbf 1_d, x + \epsilon \mathbf 1_d)$ is contained in $U(x)$. 
  Moreover, by simple continuity arguments, there exists a $\delta > 0$ such that
  \begin{equation*}
    \left( \bigvee_{i=1}^d (\alpha_i - \delta)  (s_i - \delta \mathbf 1_d), \bigvee_{i=1}^d (\alpha_i + \delta) (s_i + \delta \mathbf 1_d) \right) \subseteq (x - \epsilon \mathbf 1_d, x + \epsilon \mathbf 1_d) \subseteq U(x). 
  \end{equation*}
  Define now the sets
  \begin{align*}
    B_{i} & := (\alpha_i - \delta, \alpha_i + \delta) \times \left( (s_i - \delta \mathbf 1_d, s_i + \delta \mathbf 1_d) \cap \S^{d-1}_+\right) \subseteq \mathbb{R} \times \S^{d-1}_+
  \end{align*}
  for $i=1, \ldots, d$ and note that by our assumption about the representation \eqref{Eq:span-rep}, these sets are pairwise disjoint if $\delta$ is chosen small enough, which we will again assume in the following. 
  
  Next, for the point process representation of $X$ from Lemma~\ref{lem:PPP_rep}, we evaluate the void probabilities of the random measure $N=\sum_{k=1}^\infty \delta_{(U_k, A_k)}$ on the sets $B_i$, i.e. $P(N(B_i)=0), i=1, \ldots, d$. We have for each $i = 1, \ldots, d$ that
  \begin{align}
    & \P(N(B_i)=0)\nonumber \\
    = & \P \left(N \left((\alpha_i - \delta, \alpha_i + \delta) \times \left( (s_i - \delta \mathbf 1_d, s_i + \delta \mathbf 1_d) \cap \S^{d-1}_+\right) \right) = 0 \right) \nonumber \\
    = & \exp \left( - \int_{0 \vee (\alpha_i - \delta)} ^{\alpha_i + \delta} \int_{ (s_i - \delta \mathbf 1_d, s_i + \delta \mathbf 1_d) \cap \S^{d-1}_+} \frac 1 {u^2} \, S(da) \, du \right) \nonumber \\
    = & \exp \left( - S\left( (s_i - \delta \mathbf 1_d, s_i + \delta \mathbf 1_d) \cap \S^{d-1}_+ \right) \int_{0 \vee(\alpha_i - \delta)}^{\alpha_i + \delta} \frac 1 {u^2} \, du \right). \nonumber
  \end{align}
  Since $s_i \in \mathrm{supp}(S)$, the set $(s_i - \delta \mathbf 1_d, s_i + \delta \mathbf 1_d) \cap \S^{d-1}_+$ has positive measure with respect to $S$ and 
  $(\alpha_i - \delta, \alpha_i + \delta)$ has positive (not necessary finite) measure with respect to $u^{-2} du$, hence it holds that $P(N(B_i)=0)<1$ or, equivalently, 
$$
    \P(N(B_i) > 0) >0, \;\;\; i=1, \ldots, d.
$$
  Since the sets $B_i, i=1, \ldots, d,$ were designed to be disjoint, we can conclude from the properties of a Poisson point process that 
  \begin{equation} \label{eq:pos_hittingprob} \P(N(B_i)>0, i=1, \ldots, d)=\prod_{i=1}^d \P(N(B_i) > 0)>0. \end{equation}
  
  To conclude our argument, define the exceedance set
  \begin{equation*}
    E(x, \epsilon) := \{ (u,a)^T \in \R \times \S^{d-1}_+ \colon u a \nleq x +\epsilon \mathbf{1}_d \}. 
  \end{equation*}
  
   Since $x +\epsilon \mathbf 1_d > 0$ componentwise, by Lemma~\ref{lem:PPP_rep} it holds that 
  \begin{equation}\label{Eq:exceedance_pos}
\P(N(E(x, \epsilon)) = 0) = \P \left(\bigvee_{k \in \N} U_k A_k \leq x + \epsilon \mathbf 1_d \right) = G(x + \epsilon \mathbf 1_d) > 0. 
  \end{equation}
  Finally, observe that 
\begin{equation}\label{Eq:PPP-inclusion} \{N(B_i) > 0, i=1, \ldots, d \} \cap  \{N(E(x, \epsilon)) = 0 \}  \subseteq \left \{ \bigvee_{k \in \N} U_k A_k \in U(x)\right \}, \end{equation}
  where the upper interval bound holds due to $\{N(E(x, \epsilon)) = 0 \}$ and the lower interval bound holds by $\{N(B_i) > 0, i=1, \ldots, d \}$, see Figure~\ref{fig:proofsets} for an illustration of this fact for $d=2$. 

  \begin{figure}[h!tb]
  \centering
  \resizebox{0.65\columnwidth}{!}{
    \begin{tikzpicture}
    \draw[style=help lines] (0,0) grid (6, 6);

    \fill[black] (0, 0) node[below left]{0}; 
    \fill[black] (1, 0) node[below] {1}; 
    \fill[black] (2, 0) node[below] {2}; 
    \fill[black] (3, 0) node[below] {3}; 
    \fill[black] (4, 0) node[below] {4}; 
    \fill[black] (5, 0) node[below] {5}; 
    \fill[black] (6, 0) node[right] {$x_1$}; 

    \fill[black] (0, 1) node[left] {1}; 
    \fill[black] (0, 2) node[left] {2}; 
    \fill[black] (0, 3) node[left] {3}; 
    \fill[black] (0, 4) node[left] {4}; 
    \fill[black] (0, 5) node[left] {5}; 
    \fill[black] (0, 6) node[above] {$x_2$}; 

\draw[fill = gray] (0,1) -- (1,0) -- (0,0) -- cycle; 


\fill[gray!20, fill opacity = 0.9] (0,5) -- (6, 5) -- (6,6) -- (0, 6) -- cycle;
\fill[gray!20, text = black, fill opacity = 0.9] (6,6) -- (6, 0) -- (4.5, 0) -- (4.5, 6) -- cycle;
\fill[black] (3.4,5.5) node[above right] {\fontsize{6}{12} \selectfont Exceedance set $E(x, \epsilon)$};

\draw[fill=gray! 50, thick] (3.5,4) circle (1.5cm);

\draw[fill = red] (3.92, 0.98) -- (2.94, 1.96) -- (3.06, 2.04) -- (4.08, 1.02) -- cycle node[below] {\fontsize{6}{12} \selectfont $B_2$};
\fill[black] (3.5, 1.5) circle (2pt);

\draw[fill = red] (0.49, 4.41) -- (1.47, 3.43) -- (1.53, 3.57) -- (0.51, 4.59) -- cycle node[left] {\fontsize{6}{12} \selectfont $B_1$};
\fill[black] (1, 4) circle (2pt); 

\draw[fill = gray] (2.5, 5) -- (4.5, 5) -- (4.5, 3) -- (2.5, 3) -- cycle node[below right] {\fontsize{4}{3}\selectfont \hspace{-0.15cm} $(x - \epsilon \mathbf 1_d, x + \epsilon \mathbf 1_d)$}; 

\draw[fill = orange] (2.94, 3.43) -- (2.94, 4.59) -- (4.08, 4.59) -- (4.08, 3.43) -- cycle node[above right]
{\fontsize{4}{12}\selectfont };

\fill [black] (3.5,4) circle (1pt) node[above] {\fontsize{4}{3}\selectfont $x$};
\fill (3.1, 5.2) node[right] {\fontsize{6}{3}\selectfont$U(x)$}; 
\fill (-0.1,-0.1) node[above right] {\fontsize{5}{3}\selectfont  $1$-sphere}; 

\draw (0, 5) -- (2.5, 5); 
\draw (4.5, 0) -- (4.5, 3); 

\draw[red] (0.51, 4.59) -- (2.94, 4.59); 
\draw[red] (1.47, 3.43) -- (2.94, 3.43); 

\draw[red] (4.08, 1.02) -- (4.08, 3.43); 
\draw[red] (2.94, 1.96) -- (2.94, 3.43); 

\draw[red] (0.1, 0.9) -- (0.49, 4.41);
\draw[red] (0.3, 0.7) -- (1.47, 3.43);

\draw[red] (0.8, 0.2) -- (3.92, 0.98);
\draw[red] (0.6, 0.4) -- (2.94, 1.96);

\end{tikzpicture}
}
\caption{
    A visualization of the sets appearing in the proof for $d = 2$ and $x = (3.5,4)$ in Cartesian coordinates. 
    The neighbourhood $U(x)$ is given by a circle of radius $r = 1.5$ around $x$ and
    $\epsilon = 1$ defines the dark grey rectangle $(x - \epsilon  \mathbf 1_2, x + \epsilon \mathbf 1_2)$. 
    Let $s_1 = (\frac 7 {10}, \frac 3 {10}), s_2 = (\frac 1 5, \frac 4 5)$, then we write $x = \alpha_1 s_1 \vee \alpha_2 s_2$
    with $\alpha_1 =  \alpha_2 = 5$ and setting $\delta = \frac 1 {10}$ yields the sets $B_1$ and $B_2$, where the graphic shows the preimage of those sets under the polar coordinate transform.
    The orange set contains all coordinatewise maxima of one point from the set $B_1$ and one point from the set $B_2$. 
    The exceedance set $E(x, \epsilon)$ is the light grey $L$ shaped set. 
}
\label{fig:proofsets}
\end{figure}
  Now, since $B_1, \ldots, B_d, E(x, \epsilon)$ are pairwise disjoint and because of \eqref{eq:pos_hittingprob}, \eqref{Eq:exceedance_pos} and \eqref{Eq:PPP-inclusion}, we finally arrive at
  \begin{align*}
    \P^X(U(x))=\P\left(\bigvee_{k \in \N} U_k A_k \in U(x) \right) 
  & \geq \P \left(\{N(B_i) > 0, i=1, \ldots, d\} \cap \{N(E(x, \epsilon)) = 0 \} \right) \\
  & = \P(N(B_i) > 0, i=1, \ldots, d) \, \P(N( E(x, \epsilon)) = 0) \\
  & > 0. 
  \end{align*}
Thus, each $x \in \vee\mathrm{-span}(\mathrm{supp}(S))\cap (0, \infty)^d$ is an element of $\mathrm{supp}(\P^X)$. Since the support is a closed set, the same statement follows for all $x \in \vee\mathrm{-span}(\mathrm{supp}(S))$, and thus we have shown both inclusions. 
\end{proof}

\begin{proof}[Proof of Corollary~\ref{cor:maxlin_trafo_img}]
It follows from~\cite{Bourbaki_int}, Chapter \Romannum{5}, §6 Corollary 4 that \begin{equation}\label{Eq:support_Bourbaki}
  \mathrm{supp}(\P^{H \diamond X}) = cl \left( \left\{ H \diamond x: x \in \mathrm{supp}(\P^X) \right\} \right). 
  \end{equation}
  Proposition~\ref{prop:support} gives an explicit representation for $\mathrm{supp}(\P^X)$. The bilinearity of $\diamond$ allows us to rewrite 
  \begin{align}
\nonumber \left\{ H \diamond x, x \in \mathrm{supp}(\P^X) \right\} 
\nonumber    & = \left\{ H \diamond \bigvee_{i=1}^d \alpha_i s^i: \alpha_i \geq 0, s^i \in \mathrm{supp}(S) \right\} \\
\nonumber    & =  \left\{ \bigvee_{i=1}^d \alpha_i (H \diamond s^i): \alpha_i \geq 0, s^i \in \mathrm{supp}(S) \right\} \\
\nonumber    & =  \left\{ \bigvee_{i=1}^p \alpha_i (H \diamond s^i): \alpha_i \geq 0, s^i \in \mathrm{supp}(S) \right\}\\
\label{Eq:H_diamond_S}    & =  \vee\mathrm{-span}(\{H \diamond s: s \in \mathrm{supp}(S)\}), 
  \end{align}
 where we used in the pen-ultimate equality that $\bigvee_{i=1}^d \alpha_i (H \diamond s^i)$ can always be represented as a maximum over $p$ components, since the dimension of this vector is $p$. Now, \eqref{Eq:support_Bourbaki} and \eqref{Eq:H_diamond_S} give \eqref{eq:spanrep}.

For the remainder of this proof, we assume that $H$ has no zero columns. In this case there exists some $\epsilon>0$ such that $\vee_{i=1}^p H_{ij}>\epsilon$ for all $j=1, \ldots, d$ and thus for any $s=(s_1, \ldots, s_d)^T \in \mathbb{S}^{d-1}_+$ it follows that 
\begin{align*}
\|H \diamond s\|_\infty = \bigvee_{i=1}^p \bigvee_{j=1}^d H_{ij}s_j =\bigvee_{j=1}^d  s_j \bigvee_{i=1}^p  H_{ij} \geq \epsilon \bigvee_{j=1}^d s_j=\epsilon \|s \|_\infty.
\end{align*}
By equivalence of norms there exists some $c>0$ such that $\|s \|_\infty>c$ for all $s \in \mathbb{S}^{d-1}_+$ and so 
 $$\| H \diamond s\|_\infty > \epsilon \cdot c>0$$
 for all $s \in \mathbb{S}^{d-1}_+$. The set $\{H \diamond s, s \in \mathrm{supp}(S)\}$ does therefore not include $\mathbf{0}_d$ and is compact (since $\mathrm{supp}(S)$ is compact and $s \mapsto H \diamond s$ is continuous). Thus, by~\cite{Butkovic_geb} Proposition 25, the set $\vee\mathrm{-span}(\{H \diamond s, s \in \mathrm{supp}(S)\})$ is closed and the right hand sides of \eqref{eq:spanrep} and \eqref{eq:spanrep:nocl} coincide, which finishes the proof. 
\end{proof}

\subsection{Proofs of supplementary results from Section~\ref{subsec:supp_results_statistics}}
 \begin{proof}[Proof of Lemma~\ref{lem:compactset}]
  We split the proof into two parts. First, we derive in \eqref{eq:constrep} the value of $\kappa$ such that for any global minimizer $(B,W)$, the largest entry of $H:=B \diamond W$ is bounded by $\kappa$. In the second step we show that for any such matrix $H$, there exist rescaled matrices $(\tilde B, \tilde W)$ in the set \eqref{eq:compset} such that $H=\tilde{B} \diamond \tilde{W}$. Together with the continuity of the map~\eqref{eq:lemma_tf}, this yields the existence of a global minimizer.  
  
  \emph{1.} 
  Let $(B^*, W^*)$ be a global minimizer of~\eqref{eq:lemma_tf} and set $H^* := B^* \diamond W^*$. We start by finding an upper bound on $\tilde \rho(H^* \diamond X, X)$ by choosing 
  \begin{equation*}
    M := \begin{pmatrix}
      \mathrm{id}_{p} \\
      \mathbf 0_{(d-p) \times p}
    \end{pmatrix} 
    \diamond \begin{pmatrix}
      \mathrm{id}_{p} & \mathbf 0_{p \times (d - p)}
    \end{pmatrix}
       = \begin{pmatrix}
      \mathrm{id}_{p} & \mathbf 0_{p \times (d-p)} \\
      \mathbf 0_{(d-p) \times p}  & \mathbf 0_{(d-p) \times (d-p)} 
    \end{pmatrix},
  \end{equation*}
  Now we use Lemma~\ref{metric_representation} and \eqref{eq:scale-par-reps} to calculate
  \begin{align}
  \nonumber  \tilde \rho(M \diamond X, X) 
   \nonumber   & = \int_{\S^{d-1}_+} \sum_{k=1}^d \lvert M_{k} \diamond a - a_k \rvert \, S(da) \\
   \nonumber   & = \int_{\S^{d-1}_+} \sum_{k=1}^p \lvert e_k \diamond a - a_k \rvert + \sum_{k=p+1}^d \lvert \mathbf 0_d \diamond a - a_k \rvert \, S(da) \\
    \nonumber  & = \int_{\S^{d-1}_+} \sum_{k=p+1}^d a_k  \, S(da) \\
   \label{Eq:sigma_upper_bound} & = \sum_{k=p+1}^d \sigma_k. 
  \end{align}
  This means that $\tilde \rho(H^* \diamond X, X) \leq \sum_{k=p+1}^d \sigma_k$. 
  Now let $i, j \in \{1, \ldots, d\}$ be such that $\| H^\ast \|_{\infty}=H_{ij}$, then
  \begin{align}
   \nonumber  \tilde \rho(H^\ast \diamond X, X) 
    \nonumber & = \int_{\S^{d-1}_+} \sum_{k=1}^d \lvert H^\ast_k \diamond a - a_k \rvert \, S(da) \\
    \nonumber & \geq \int_{\S^{d-1}_+} \lvert H^\ast_i \diamond a - a_i \rvert \, S(da) \\
    \nonumber & \geq \int_{\S^{d-1}_+}  H^\ast_i \diamond a - a_i \, S(da) \\
    \nonumber & \geq \int_{\S^{d-1}_+}  H^\ast_{ij} a_j - a_i  \, S(da) \\
   \label{Eq:optimal_sigma_bound} & = \| H^\ast \|_{\infty} \sigma_j - \sigma_i 
  \end{align}
  where in the last equality we used again \eqref{eq:scale-par-reps}. As we assumed $H^* = B^* \diamond W^*$ for a global minimizer $(B^\ast, W^\ast)$, we get from \eqref{Eq:sigma_upper_bound} and \eqref{Eq:optimal_sigma_bound} that
  \begin{equation*}
\| H^* \|_{\infty} \sigma_j - \sigma_i \leq \tilde \rho(H^* \diamond X, X) \leq \sum_{k=p+1}^d \sigma_k. 
  \end{equation*}
  Since all scale coefficients are positive by assumption, rearranging now gives us the following inequality
  \begin{equation}
   \label{Eq:H_opt_bound} \| H^* \|_{\infty} \leq \frac 1 {\sigma_j} \left( \sum_{k=p+1}^d \sigma_k + \sigma_i \right).
  \end{equation}
  We can now find an upper bound independent of the indices $i$ and $j$ by defining $\sigma_{\mathrm{min}} := \bigwedge_{i=1}^d \sigma_i > 0$ and $\sigma_{\mathrm{max}} := \bigvee_{i=1}^d \sigma_i$ and the fact that 
  \begin{equation}
   \label{Eq:min_max_sigma_bound}   \frac 1 {\sigma_j} \left( \sum_{k=p+1}^d \sigma_k + \sigma_i \right) \leq \frac 1 {\sigma_{\mathrm{min}}} \left( \sum_{k=p+1}^d \sigma_k + \sigma_{\mathrm{max}} \right). 
  \end{equation}Setting 
  \begin{equation}
    \label{eq:constrep}
    \kappa := \frac 1 {\sigma_{\mathrm{min}}} \left( \sum_{k=p+1}^d \sigma_k + \sigma_{\mathrm{max}} \right) < \infty, 
  \end{equation}
  and using \eqref{Eq:H_opt_bound} and \eqref{Eq:min_max_sigma_bound}, we see that
  \begin{equation}
    \label{eq:hbound}
    H^* \in [0, \kappa]^{d \times d}. 
  \end{equation}
  for all optimal $H^\ast=B^\ast \diamond W^\ast$. 
  
  \emph{2.} 
  For any matrix pair $(B, W)\in [0, \infty)^{d \times p} \times [0, \infty)^{p \times d}$ and $H := B \diamond W$, such that $\| H \|_{\infty} \leq \kappa$, 
  we define a diagonal matrix $D^{(1)} \in [0, \infty)^{p \times p}$ given by the max-norms of the rows of $W$ as follows
  \begin{equation*}
    D^{(1)}_{ij} := \begin{cases}
      \| W_{i} \|_{\infty}, & \text{if } \, i = j, \\
      0, & \text{else}. 
    \end{cases}
  \end{equation*}
  From this matrix $D^{(1)}$, 
  we define another diagonal matrix $D^{(2)} \in [0, \infty)^{p \times p}$ by
  \begin{equation*}
    D^{(2)}_{ij} := \begin{cases}
      (D_{ij}^{(1)})^{-1}, & \text{if} \,i = j, D^{(1)}_{ij} > 0, \\
      0, & \text{else}.
    \end{cases}
  \end{equation*}
  Then by construction of the matrices $D^{(1)}$ and $D^{(2)}$, it holds that
  \begin{equation}\label{eq:proofequality}
    H = B \diamond W = B \diamond D^{(1)} \diamond D^{(2)} \diamond W. 
  \end{equation}
  Define now $\tilde W := D^{(2)} \diamond W$ and $\tilde B := B \diamond D^{(1)}$ and note that $\tilde{W} \in [0,1]^{p \times d}$ and 
  \begin{align*}
    \| \tilde B \|_{\infty} 
    & = \| B \diamond D^{(1)} \|_{\infty} \\
    & = \bigvee_{k=1}^d \bigvee_{j=1}^p B_{kj} \| W_j \|_{\infty} \\
    &  = \bigvee_{k=1}^d \bigvee_{j=1}^p \bigvee_{l=1}^d B_{kj} W_{jl} \\
    & = \| B \diamond W \|_{\infty} \\
    & \leq \kappa,
  \end{align*}
  such that $\tilde B \in [0, \kappa]^{d \times p}$. Thus, there exists a global minimizer $(B^*, W^*)$ in $[0, \kappa]^{d \times p} \times [0,1]^{p \times d}$, because the set is compact and the map~\eqref{eq:lemma_tf} is continuous. 
\end{proof}
\begin{proof}[Proof of Corollary~\ref{cor:cons}]
  Clearly, the given set is compact, and the cost function $c$ 
  is a continuous function, since it is the composition of continuous functions, so the result follows from Theorem~\ref{thm:consistency}.
\end{proof}

\section{Additional material for the simulation studies and data analysis}
\subsection{Implementation}\label{subsec:implementation}

We provide a ready to use implementation of max-stable PCA for the \emph{R} programming language at~\url{www.github.com/FelixRb96/maxstablePCA}, where we find an approximate local minimizer of the reconstruction error using non-linear optimization techniques, since we are not able to analytically compute solutions except for very simple cases. For ease of implementation, we provide the $\| \cdot \|_1$ and $\| \cdot \|_{\infty}$ norms in our $R$-package, but for better comparability of results, we restrict ourselves to only the $\| \cdot \|_{\infty}$ norm here. We now discuss some properties related to the optimization of the empirical reconstruction error. As a function of \( (B, W) \), the empirical reconstruction error is generally not convex; however, it is a Lipschitz continuous function that is piecewise linear. Furthermore, the following minimization problem yields an upper bound on the minimizer of the empirical reconstruction error 
\begin{align*}
  & \min_{B_{kj}, k = p+1, \ldots, d, j = 1, \ldots, p} \sum_{k=p+1}^d \int_{\S^{d-1}_+} a_k - \bigvee_{j=1}^p B_{kj} a_j \, \hat S_n(da), \\
  \quad \mathrm{s.t.}\quad  &  \max_{k = p+1, \ldots, d,j = 1, \ldots, p, a \in \mathrm{supp}(\hat{S}_n)} B_{kj} a_j \leq a_k,  B_{kj} \geq 0.
\end{align*}
In the above problem, it can be shown that both the objective function and the constraint function are convex. Therefore, this is a convex optimization problem that provides an upper bound on the optimal empirical reconstruction error.

Even though the function we minimize is not convex, the empirical results presented in the simulation study and data analysis yield good outcomes and encourage the use of our non-linear optimization techniques. We chose to use sequential least squares quadratic programming~\cite{Kraft_slsqp, NLopt} as our optimization algorithm, despite the reconstruction loss not being differentiable everywhere, because it performed best in simulations compared to other out-of-the-box optimizers available in \emph{R}. It converges reasonably fast, and ready-to-use implementations exist. 

There are also algorithms with theoretical convergence guarantees to a local minimizer for our problem that could be used instead, see for example~\cite{Burke_gsa} or~\cite{Bagirov_nso}. However, to our knowledge, these alternatives lack fast implementations in \emph{R} and depend heavily on good starting values for the problem at hand.

To ensure reasonable results with sequential least squares programming, we uniformly sample a predefined number of points $(B, W)$ from the set $[0,1]^{d \times p} \times [0,1]^{p \times d}$ and then transform the columns of $B$ such that $\| B_{\cdot j} \|_{\infty} = 1$ for all columns $j = 1, \ldots, p$. We initialize the optimization algorithm with the sampled value that achieves the smallest empirical reconstruction error. Since the optimization problem is not convex, it is recommended to inspect the result and possibly rerun this procedure multiple times to achieve a good fit. We found that rerunning the simulation studies below with different random seeds will usually lead to the same results up to simultaneous permutations of rows and columns in the result $(\hat B,\hat W)$, if enough candidates for starting values are tested. 

\subsection{Estimated matrices of simulation studies in Section~\ref{sec:experiments}}\label{Sec:estimated_matrices}

We provide below the estimated matrices of the simulation studies from Sections~\ref{Subsubsec:maxlinear}, \ref{Subsubsec:logistic} and \ref{Subsubsec:latentlogistic}, rounded to two digits for ease of presentation. 
First, for Section~\ref{Subsubsec:maxlinear}, the estimated matrices for the max-linear models of $X^{(1)}$ and $X^{(2)}$, respectively, from Example~\ref{ex:mlmexample} are given by 
{\begin{align}
\label{eq:app_mlmest1}
    \hat B_{mlm}^{(1)} & = \begin{pmatrix} 
    1.00 & 0.23 & 0.03 \\
    0.01 & 1.00 & 0.13 \\
    0.03 & 0.06 & 1.00 \\
    0.61 & 0.18 & 0.57 \\
    0.58 & 0.31 & 0.53
    \end{pmatrix},
    && \hat W_{mlm}^{(1)} = \begin{pmatrix} 
    1.00 & 0.26 & 0.10 & 0.12 & 0.11 \\
    0.06 & 1.00 & 0.12 & 0.09 & 0.02 \\
    0.07 & 0.10 & 1.00 & 0 & 0.22
    \end{pmatrix} \end{align}
    \begin{align}\label{eq:app_mlmest2}
    \hat B_{mlm}^{(2)} & = \begin{pmatrix}
    0.54 & 0.69 & 0 \\
    0 & 1.00 & 0 \\
    0.24 & 0.09 & 1.08 \\
    1.00 & 0 & 0 \\
    0.35 & 0.16 & 1.00
    \end{pmatrix}, 
    && \hat W_{mlm}^{(2)} = \begin{pmatrix}
    0 & 0 & 0.08 & 1.00 & 0 \\
    0 & 1.00 & 0 & 0 & 0 \\
    0.03 & 0.03 & 0.17 & 0.43 & 1.00
    \end{pmatrix}.
\end{align}
For the logistic models from Section~\ref{Subsubsec:logistic} with $\beta_1=0.2, \beta_2=0.5, \beta_3=0.8$, the estimated matrices for $p = 3$ are given by 
\begin{align}
\label{eq:app_log1}
    \hat B_{log}^{(1)} & = \begin{pmatrix}
    1.00 & 0.29 & 0.20 \\
    0.16 & 0.15 & 1.00 \\
    0.29 & 0.91 & 0.66 \\
    0.75 & 0.71 & 0.81 \\
    0.16 & 1.00 & 0.32
    \end{pmatrix}, 
    && \hat W_{log}^{(1)} = \begin{pmatrix}
    1.00 & 0.17 & 0.22 & 0.29 & 0.18 \\
    0.16 & 0.26 & 0.50 & 0.13 & 1.00 \\
    0.12 & 1.00 & 0.13 & 0.23 & 0.19
    \end{pmatrix} \\
    \label{eq:app_log2}
    \hat B_{log}^{(2)} & = \begin{pmatrix}
    0.00 & 0 & 1.00 \\
    1.00 & 0 & 0.01 \\
    0.40 & 0.14 & 0.55 \\
    0.48 & 0.26 & 0.10 \\
    0.01 & 1.00 & 0
    \end{pmatrix}, 
    && \hat W_{log}^{(2)} = \begin{pmatrix}
    0 & 1.00 & 0.02 & 0 & 0.01 \\
    0.01 & 0.01 & 0.01 & 0.01 & 1.00 \\
    1.00 & 0 & 0.01 & 0.01 & 0
    \end{pmatrix} \\
    \label{eq:app_log3}
        \hat B_{log}^{(3)} & = \begin{pmatrix}
        0.09 & 0.09 & 0.01 \\
        1.00 & 0 & 0 \\
        0.08 & 0.06 & 0.02 \\
        0 & 1.00 & 0 \\
        0 & 0 & 1.00
    \end{pmatrix}, 
    & & \hat W_{log}^{(3)} = \begin{pmatrix}
    0 & 1.00 & 0.01 & 0 & 0 \\
    0 & 0 & 0 & 1.00 & 0 \\
    0 & 0 & 0 & 0 & 1.00
    \end{pmatrix}.
\end{align}
Finally, for the simulation of the generalized max-linear model of Section~\ref{Subsubsec:latentlogistic}, the estimated matrices are given by 
\begin{equation}
\label {eq:app_genmlm}
    \hat B_{gen-mlm} = \begin{pmatrix}
    1.00 & 0.11 \\
    0.00 & 1.00 \\
    0.85 & 0.53 \\
    0.65 & 0.76
    \end{pmatrix}, \quad  
    \hat W_{gen-mlm} = \begin{pmatrix}
    1.00 & 0.17	& 0.14 & 0.11 \\
    0.02 & 1.00	& 0.03 & 0.08
    \end{pmatrix}.
\end{equation}
We summarize our interpretation of the estimated matrices in the following remark. 
\begin{remark}\label{Remark:Estimated_Matrices}
    \begin{enumerate}[(i)]
    \item In all cases, the matrix $W$ has only one dominant entry per row, and no dominant entries share the same column. This means that the encoded state $W \diamond X$ tends to select $p$ distinct components of the original random vector $X$ as the encoding in our examples, whether $X$ is perfectly reconstructable or not. 
    \item Consequently, $W \diamond X$ has almost the same distribution as $(X_{k_1}, \ldots, X_{k_p})$, for some indices $k_1, \ldots, k_p \in \{1, \ldots, d\}$ all different. Accordingly, the rows $k_1, \ldots, k_p$ of the matrix $B$ have one dominant entry such that the corresponding marginal again appears in the right place. 
    \item If the random vector $X$ is perfectly reconstructable, the remaining rows of $B$ are the $\vee$-linear combinations of Theorem~\ref{thm:spectralrep} up to numerical and statistical approximation error and scaling. 
    \item If $X$ is not perfectly reconstructable, the remaining rows of $B$ are $\vee$-linear combinations of the encoded states that tend to display higher values when the dependence of the corresponding marginal to the encoded states is strong, and gets close to zero when the dependence is weak.
    \end{enumerate}
\end{remark}
}
{\subsection{Additional plots for the simulation studies}

We provide plots for the simulation studies from Sections~\ref{Subsubsec:maxlinear}, \ref{Subsubsec:logistic} and \ref{Subsubsec:latentlogistic} to visually compare reconstruction against simulated data in bivariate margin plots for the max-linear models in Figure~\ref{fig:mlmplots}, for the logistic models in Figure~\ref{fig:logisticplots} and for the perfectly reconstructable generalized max-linear model in Figure~\ref{fig:genmlm}. If the data are from a model as in Theorem~\ref{thm:spectralrep}, the model finds a perfect reconstruction, and if the model is not perfectly reconstructable, the fit looks much better if the dependence in the data is high. 
\begin{figure}[h!tb]
    \centering
    \begin{subfigure}{.45\textwidth}
      \centering
      \includegraphics[width=\textwidth]{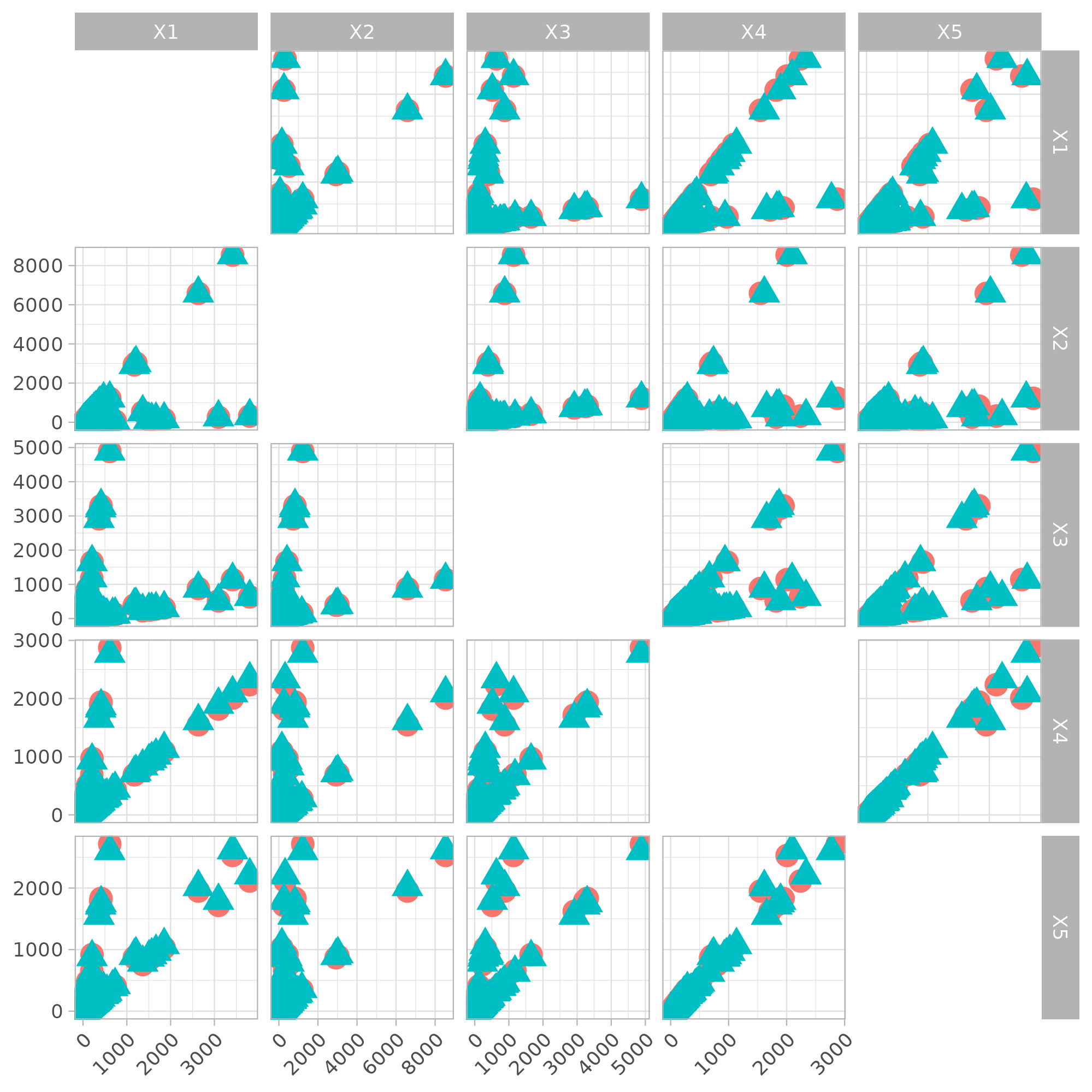}
      \caption{Realization of max-linear model with matrix $A^{(1)}$. Columns are permuted such that the upper $3 \times 3$ submatrix is given by the index of the column of the maximal entry in each row of $\hat W$, as seen in~\eqref{eq:app_mlmest1}.}
    \end{subfigure}
    \hfill
    \begin{subfigure}{.45\textwidth}
      \centering
      \includegraphics[width=\textwidth]{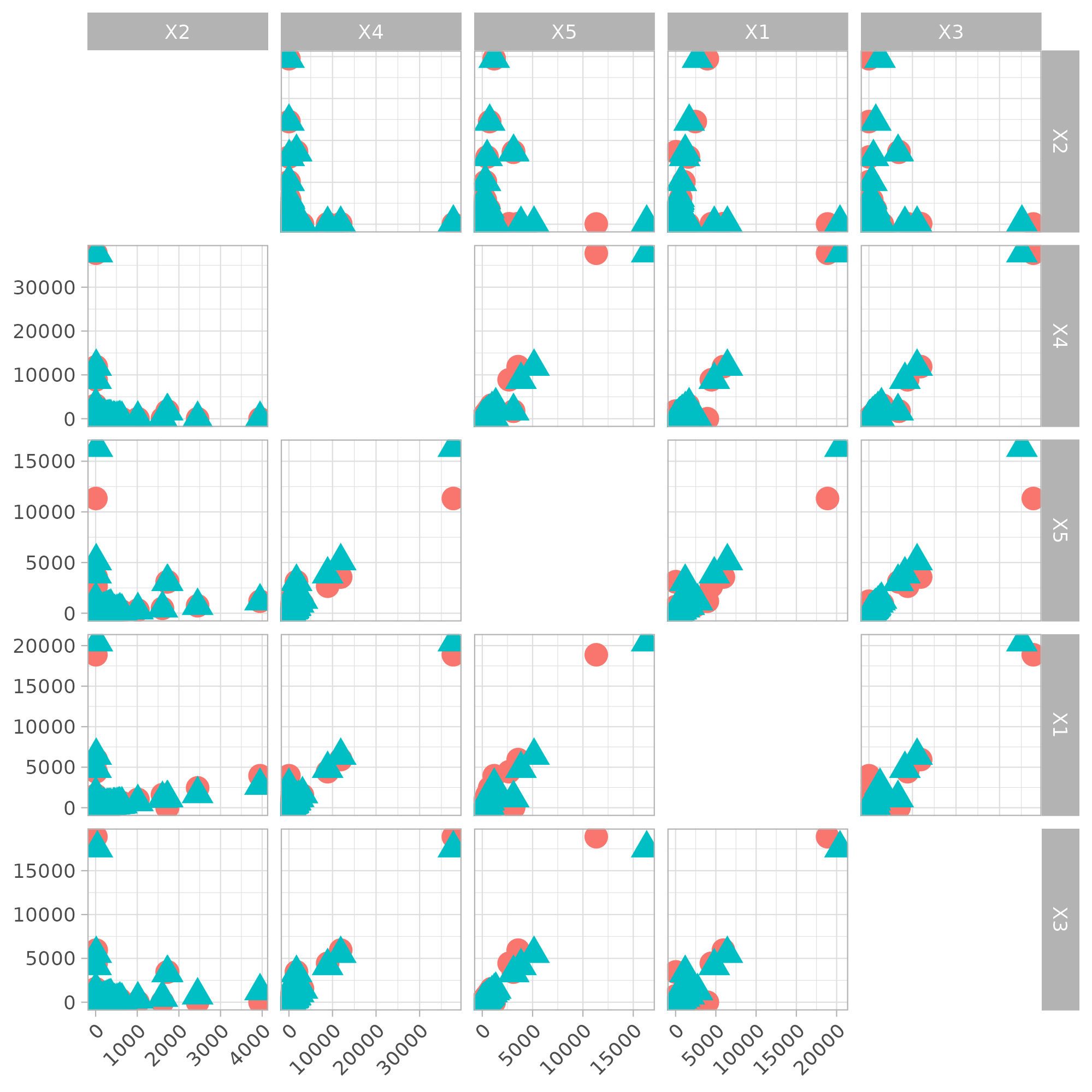}
      \caption{Realization of max-linear model with matrix $A^{(2)}$. Columns are permuted such that the upper $3 \times 3$ submatrix is given by the index of the column of the maximal entry in each row of $\hat W$, as seen in~\eqref{eq:app_mlmest2}.}
    \end{subfigure}
    \caption{Pairplots for the two max-linear models. 
    Sample of 10000 i.i.d. realizations (red) of max-linear model with $1$-Fréchet margins and all scale coefficients equal to one
      given by matrix $A^{(i)}, i = 1,2$. Plotted against the reconstruction (blue) of max-stable PCA with $p = 3$.}\label{fig:mlmplots}
  \end{figure}
\begin{figure}[h!bt]
    \centering
    \begin{subfigure}{0.32\textwidth}
      \centering
      \includegraphics[width=\textwidth]{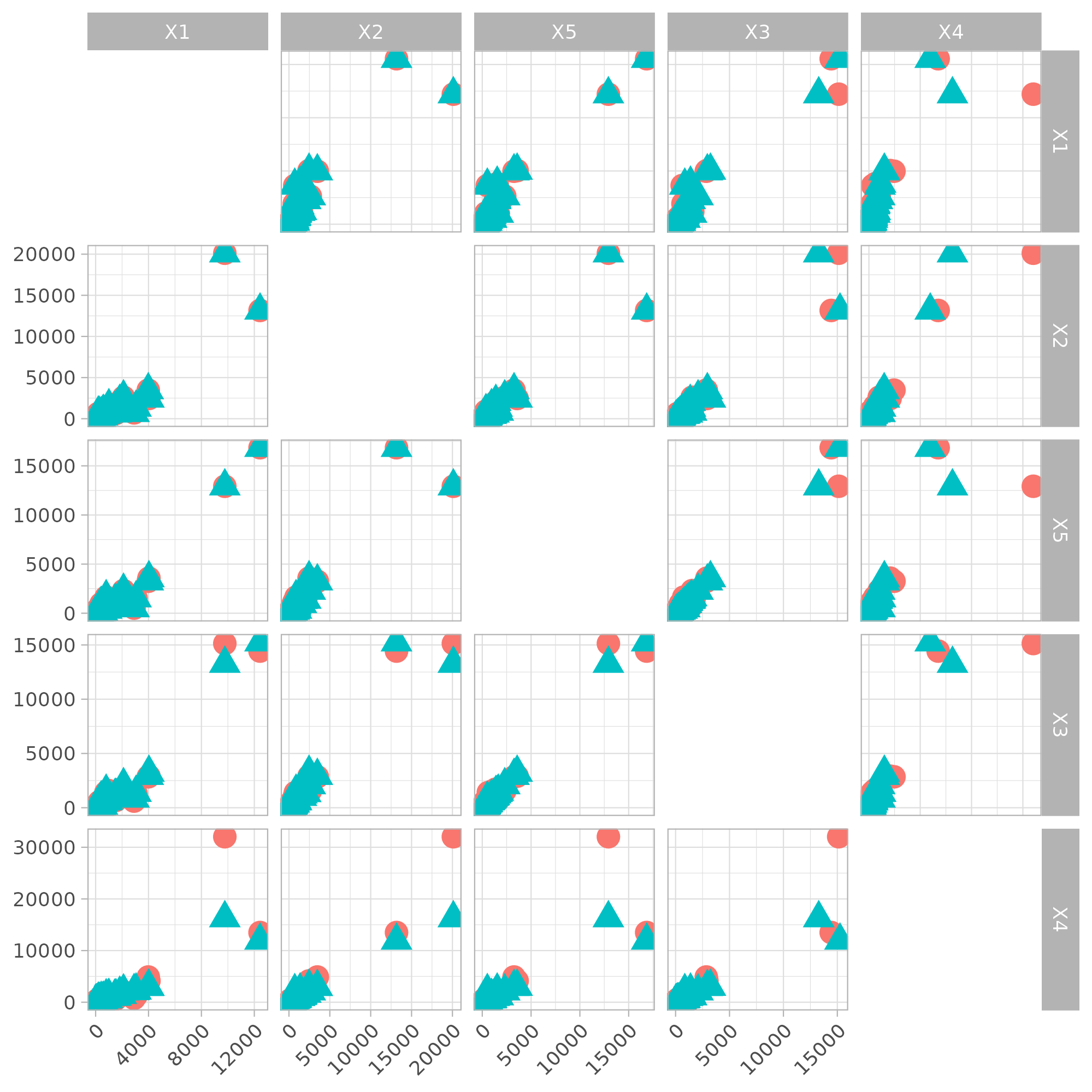}
      \caption{Realization of logistic model with $\beta = 0.2$. Columns are permuted such that the upper $3 \times 3$ submatrix is given by the index of the column of the maximal entry in each row of $\hat W$, as seen in~\eqref{eq:app_log1}.}
    \end{subfigure}
    \hfill
    \begin{subfigure}{0.32\textwidth}
      \centering
      \includegraphics[width=\textwidth]{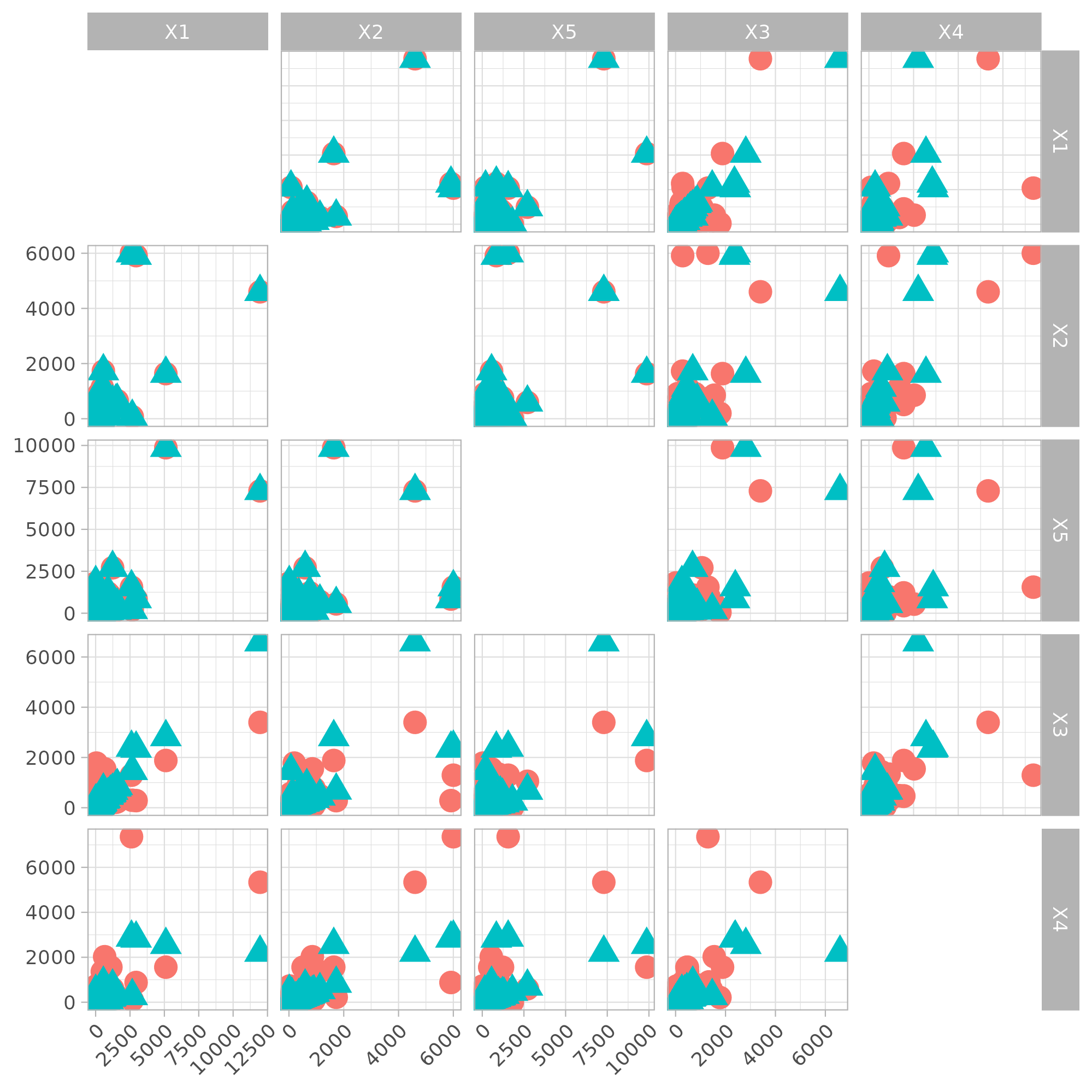}
      \caption{Realization of logistic model with $\beta = 0.5$. Columns are permuted such that the upper $3 \times 3$ submatrix is given by the index of the column of the maximal entry in each row of $\hat W$, as seen in~\eqref{eq:app_log2}.}
    \end{subfigure}
    \hfill
    \begin{subfigure}{0.32\textwidth}
      \centering
      \includegraphics[width=\textwidth]{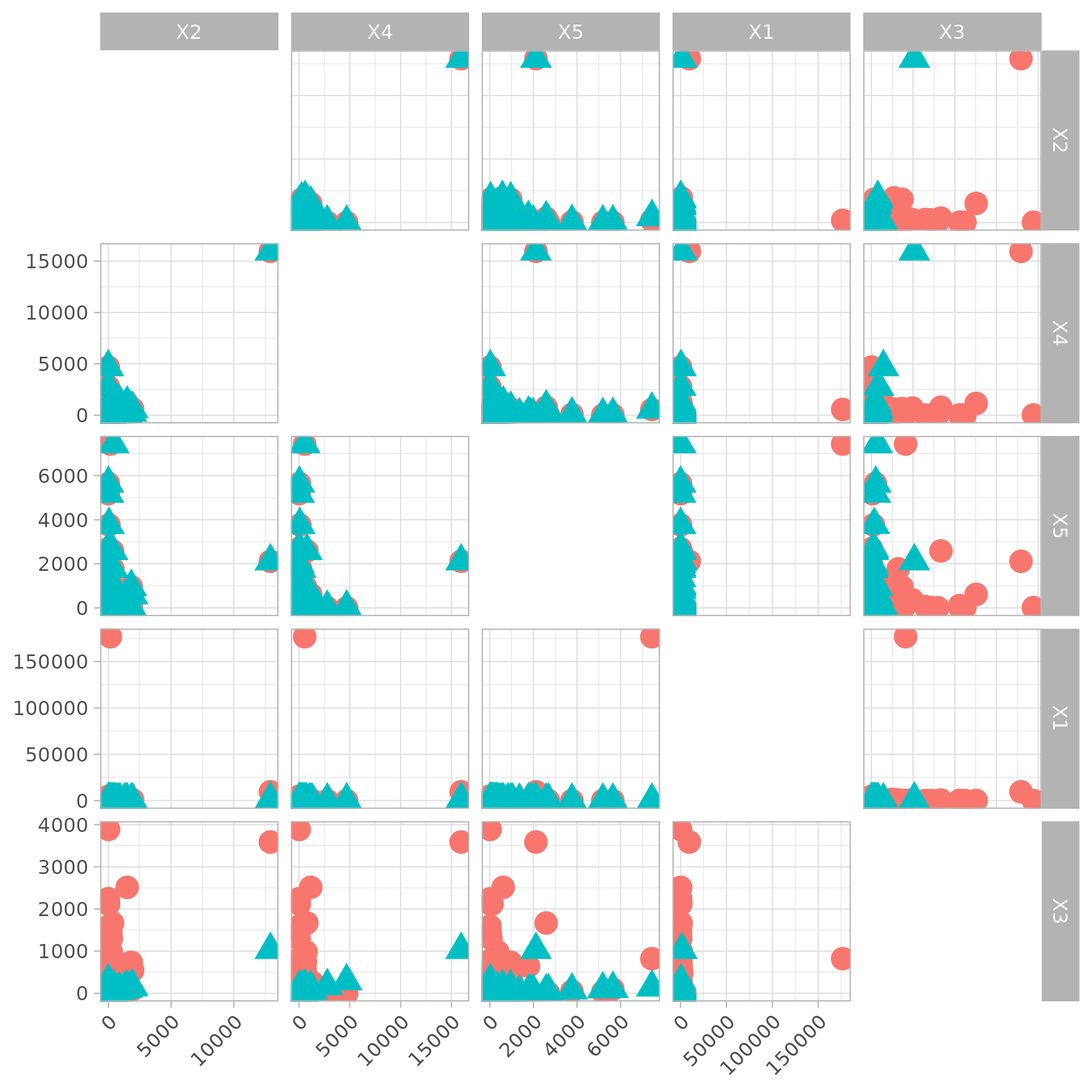}
      \caption{Realization of logistic model with $\beta = 0.8$. Columns are permuted such that the upper $3 \times 3$ submatrix is given by the index of the column of the maximal entry in each row of $\hat W$, as seen in~\eqref{eq:app_log3}.}
    \end{subfigure}
    \caption{Pairplots for the three logistic models. 
    Sample of 10000 i.i.d. realizations (red) of logistic model with $1$-Fréchet margins and all scale coefficients equal to one
      given by different choices of $\beta$. Plotted against the reconstruction (blue) of max-stable PCA with $p = 3$. 
      }\label{fig:logisticplots}
  \end{figure}
  \begin{figure}
      \centering
      \includegraphics[width=0.4\textwidth]{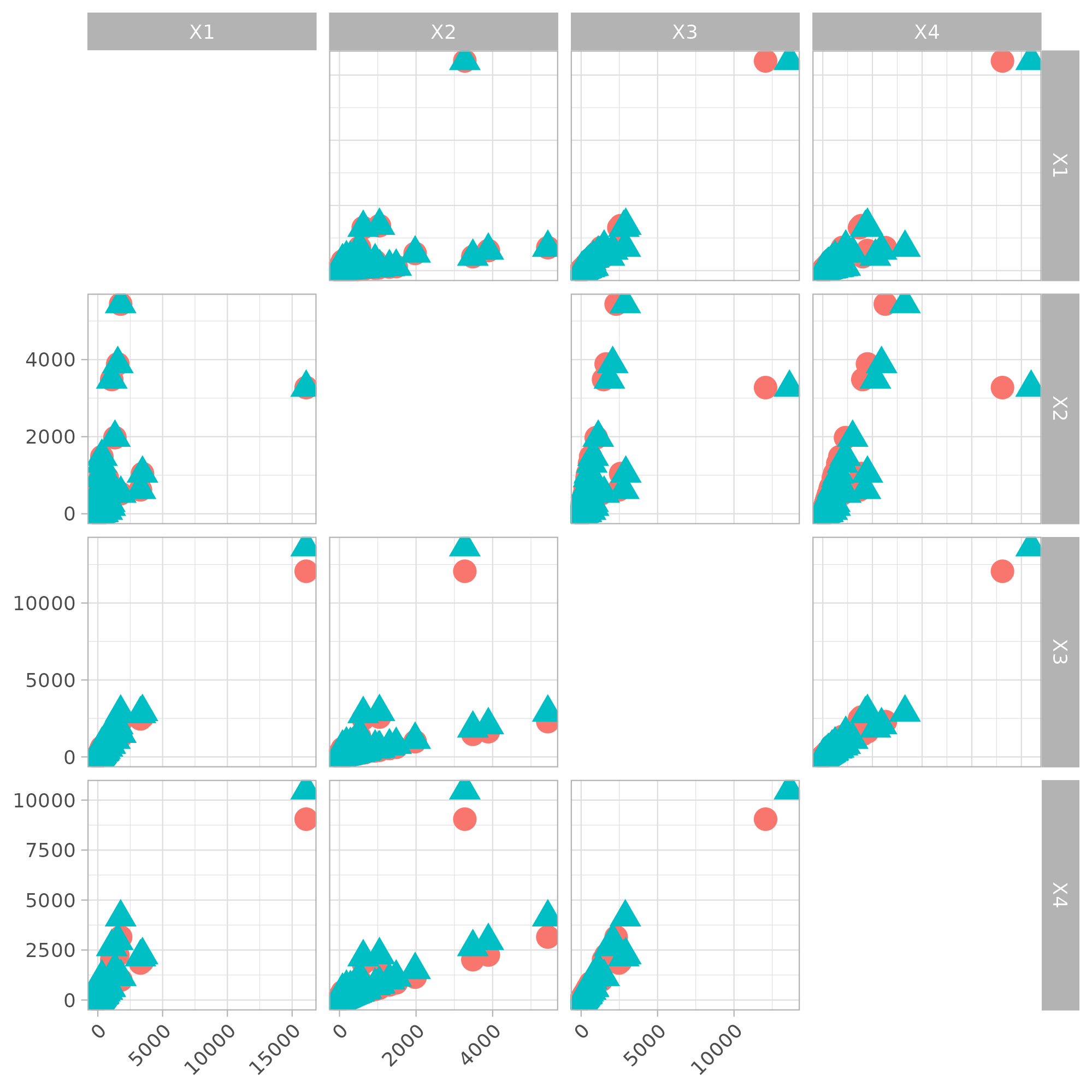}
      \caption{Realization of a sample of the model from Section~\ref{Subsubsec:latentlogistic} (red) against the reconstruction by max-stable PCA for $p = 2$ (blue). Columns are permuted such that the upper $3 \times 3$ submatrix is given by the index of the column of the maximal entry in each row of $\hat W$, as seen in~\eqref{eq:app_genmlm}.}
      \label{fig:genmlm}
  \end{figure}

\subsection{Additional plots for the Danube dataset}

To create an elbow plot for the max-stable PCA, we run the optimization procedure for $p = 1, \ldots 12$ for 50 different starting values and use the fit for each $p$ that achieved the lowest reconstruction error. We report the elbow plot of the reconstruction errors with the best fit in Figure~\ref{fig:elbowplot_danube}. 
We provide a plot of the colored Danube river arms and a bivariate marginplot over the maxima of the river arms to show that max-stable PCA with $p = 6$ indeed captures the key characteristics of each river arm in Figure~\ref{fig:pairplot_danube}.
\begin{figure}[h!tb]
    \centering
    \begin{subfigure}{0.4\textwidth}
      \centering
      \includegraphics[width=\textwidth]{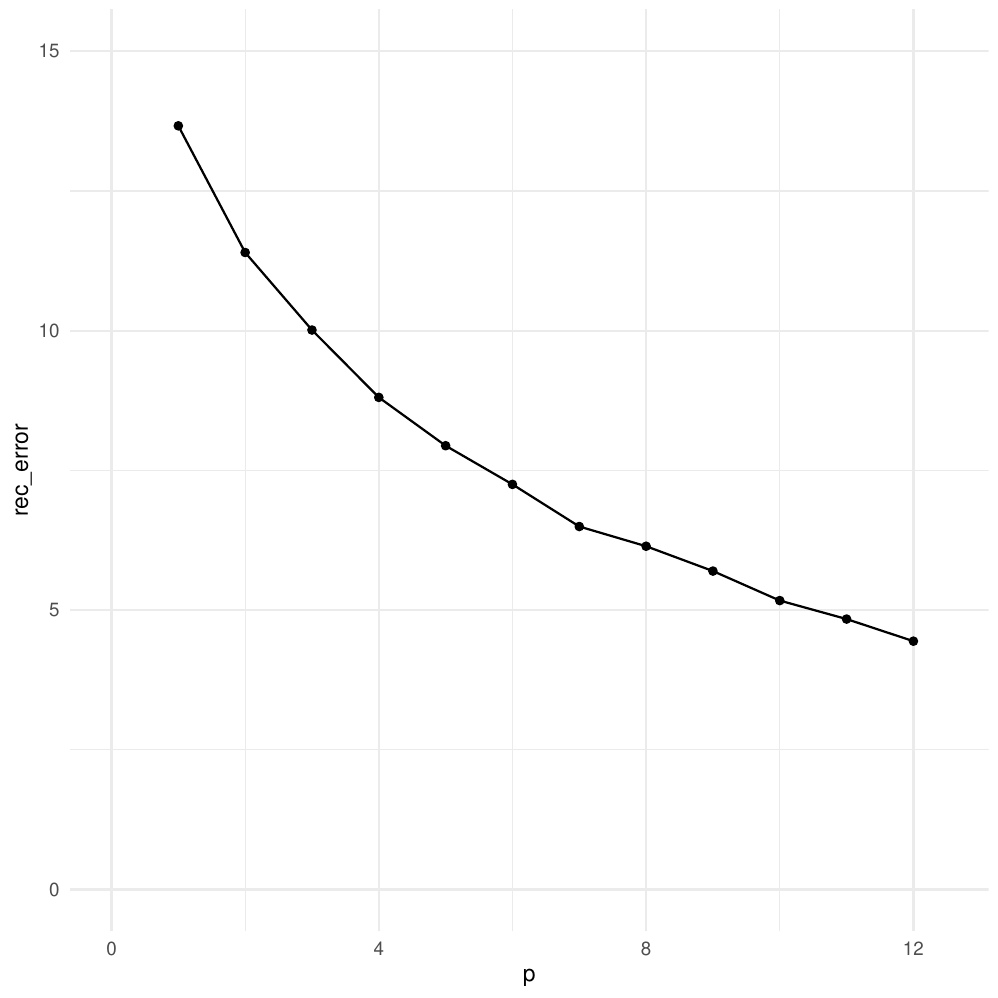}
    \end{subfigure}
    \caption{The reconstruction error values of the max-stable PCA for $p = 1, \ldots, 12$. 
    We used $50$ different starting values to find the best reconstruction. 
    }
    \label{fig:elbowplot_danube}
\end{figure}
\begin{figure}[h!bt]
\centering
\begin{subfigure}{0.45\textwidth}
      \centering
      \includegraphics[width=\textwidth]{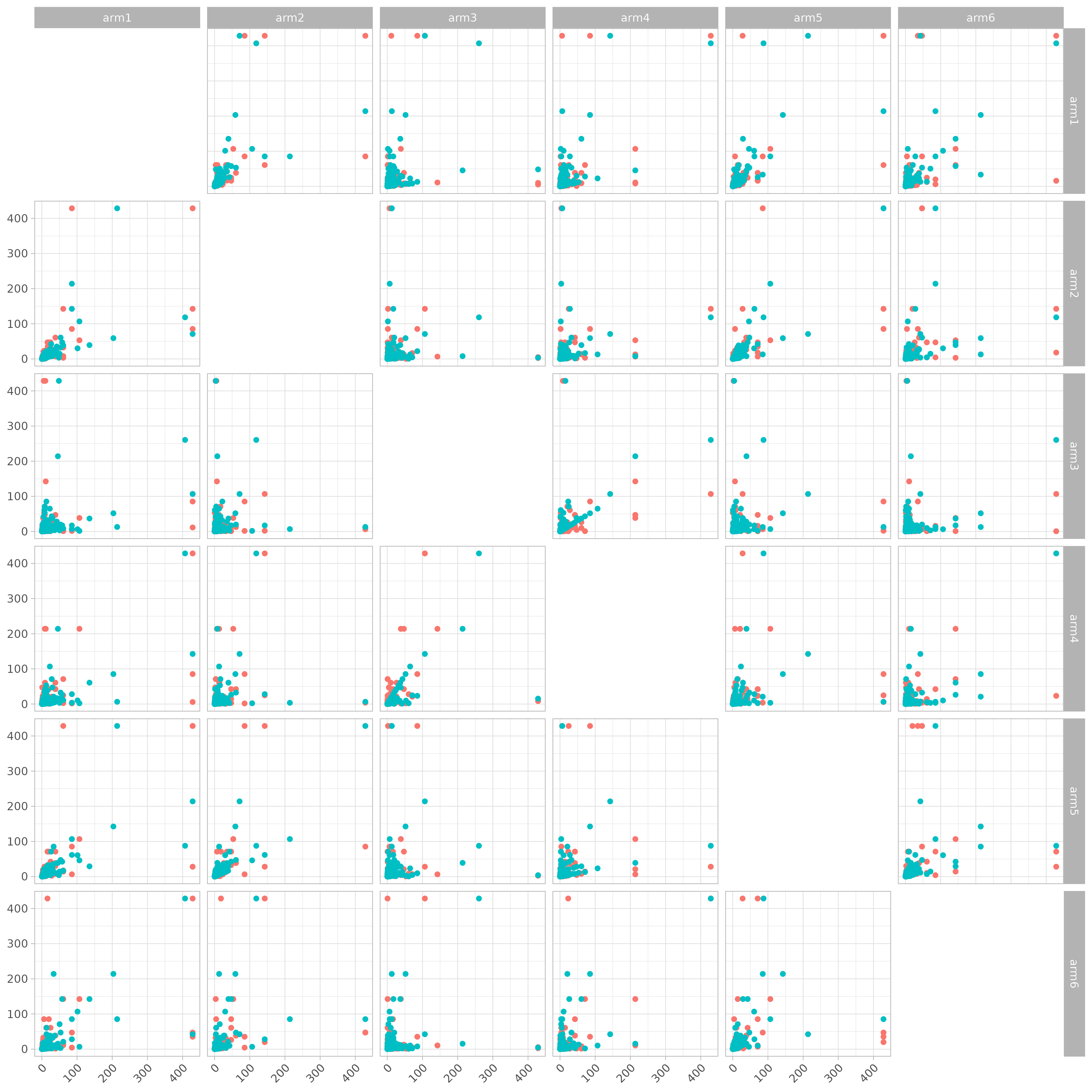}
    \end{subfigure}
    \hfill
    \begin{subfigure}{0.45\textwidth}
      \centering
      \includegraphics[width=\textwidth]{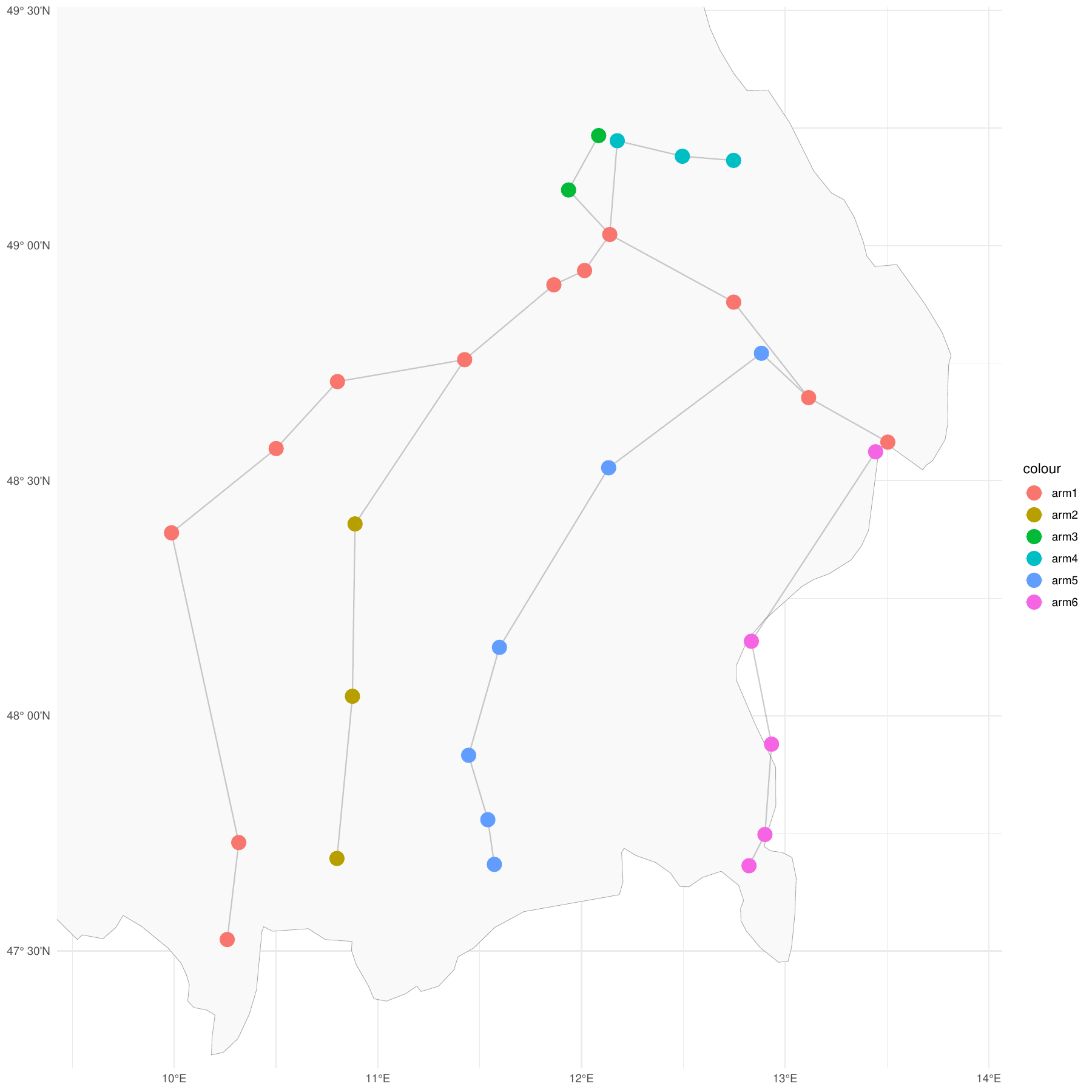}
    \end{subfigure}
    \caption{Left: Data of the maxima over each river arm (red) versus the maxima of each river arm of the reconstruction by max-stable PCA for $p = 6$ on $1$-Fréchet scale. Right: The stations colored by the river arm they belong to. Note that this is only the river network of stations appearing and other inflows are not taken into account.}\label{fig:pairplot_danube}
\end{figure}

{\section{Comparison to other dimension reduction techniques}\label{Suppl:Danube_Comp}

To compare our approach to other dimension reduction approaches for extremes, we applied the methodologies of~\cite{CooleyThibaud_ddhde, DreesSabourin_pcame, JanssenWan_kmce} to the Danube dataset. 

\subsection{PCA for multivariate extremes as proposed by Drees and Sabourin (\cite{DreesSabourin_pcame})}

An approach to PCA for multivariate extremes was proposed in~\cite{DreesSabourin_pcame}, where for a regularly varying random vector $X$ the aim is to identify a lower dimensional subspace on which the spectral measure concentrates. To this end, for data $X_1, \ldots, X_n$, assumed to be i.i.d.\ copies of $X$, only points $
X_i$ that satisfy $\| X_i \| > t$ for a suitably large threshold $t > 0$ are considered. The data is rescaled to the unit sphere by setting $\theta_i := X_i / \| X_i \|$ for the selected exceedances and then a classical PCA is performed with the resulting data matrix $\Theta$. Statistical guarantees for the procedure are provided in the sense of consistency results and convergence rates, for details see~\cite{DreesSabourin_pcame} or~\cite[Section 4.2]{Engelke_ssme} for a short summary. 

Just like our procedure, this approach solves a minimization problem. However, in \cite{DreesSabourin_pcame} the optimal value of the objective function can be expressed in terms of the eigenvalues of the matrix $\Theta \Theta^T$. In order to find a suitable dimension, an elbow plot of the objective function can be helpful. 

We apply the PCA procedure to the Danube dataset to visually inspect the elbow plot provided in Figure~\ref{fig:elbow_danube_ds}. The elbow plot suggests that the spectral measure might be concentrated on a lower dimensional subspace around dimension 5. In order to facilitate comparison to our approach, we choose $p=6$ and plot the vectors of principal components in Figure~\ref{fig:pcs_ds}. It should be noted, however, that the interpretation of dimension in \cite{DreesSabourin_pcame} and our approach differs: In \cite{DreesSabourin_pcame} one searches for a lower dimensional space on which the spectral measure is concentrated, whereas in our approach we look for a minimal number of relevant components which are sufficient to reconstruct the rest. 
\begin{figure}[h!bt]
\centering
\begin{subfigure}{0.4\textwidth}
      \centering
      \includegraphics[width=\textwidth]{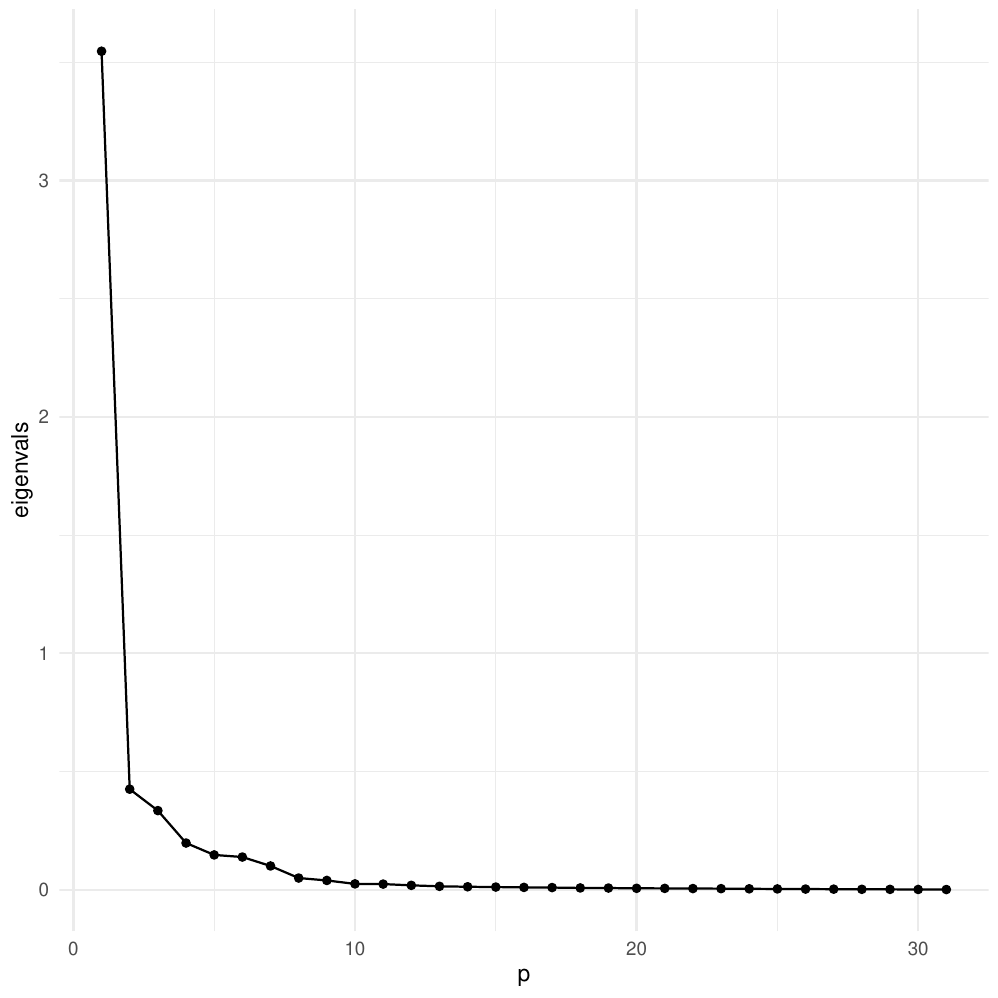}
    \end{subfigure}
    \caption{An elbow plot of the eigenvalues of the covariance matrix of the extreme data rescaled to the sphere from the procedure by Drees and Sabourin (\cite{DreesSabourin_pcame}). Note that arguably around $p = 5,6$ the curve flattens out significantly.}\label{fig:elbow_danube_ds}
\end{figure}
\begin{figure}[h!bt]
\centering
\begin{subfigure}{0.3\textwidth}
      \centering
      \includegraphics[width=\textwidth]{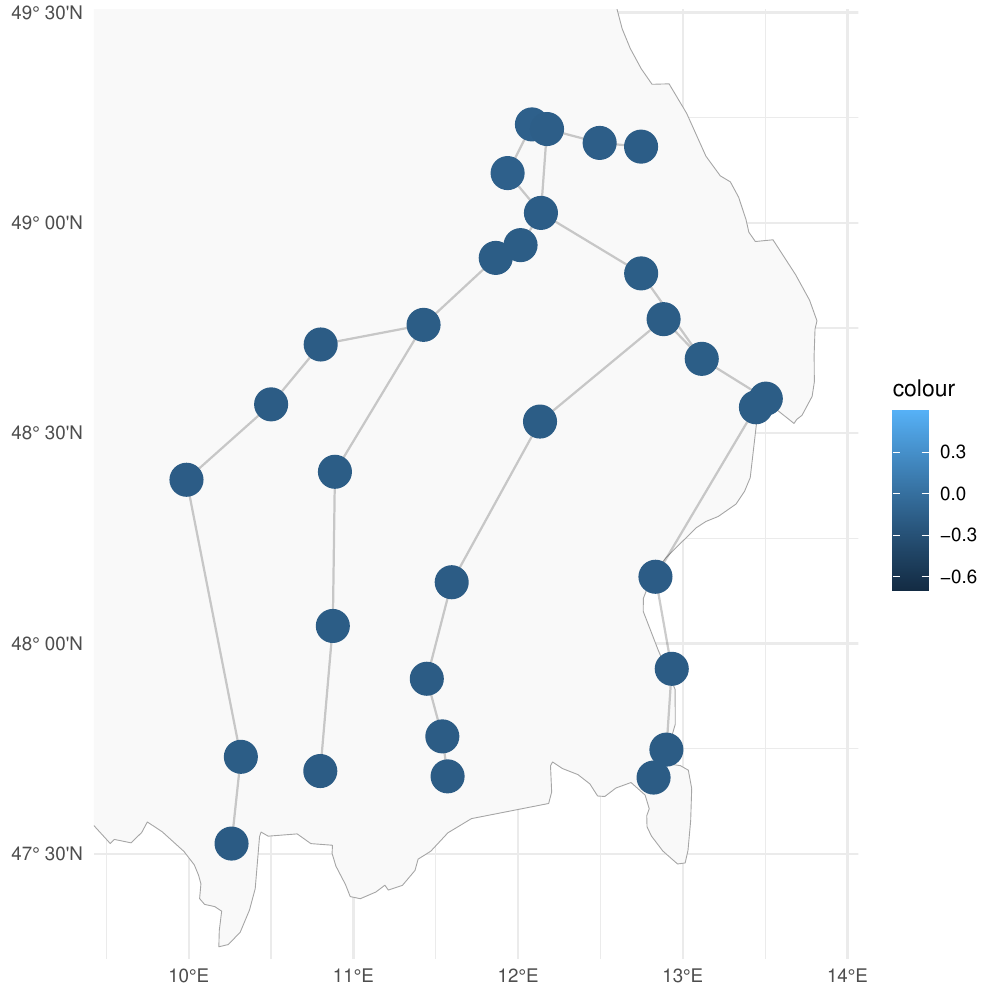}
    \end{subfigure}
    \hfill\begin{subfigure}{0.3\textwidth}
      \centering
      \includegraphics[width=\textwidth]{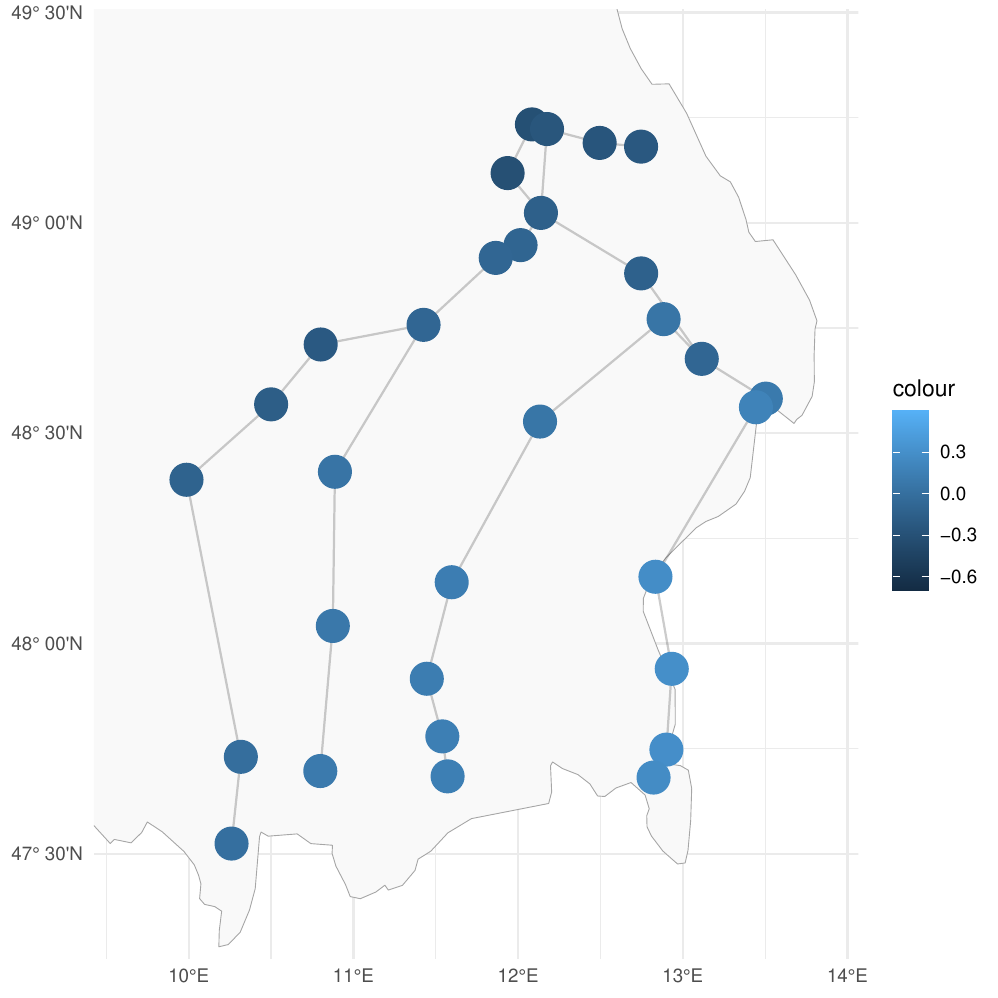}
    \end{subfigure}
    \hfill\begin{subfigure}{0.3\textwidth}
      \centering
      \includegraphics[width=\textwidth]{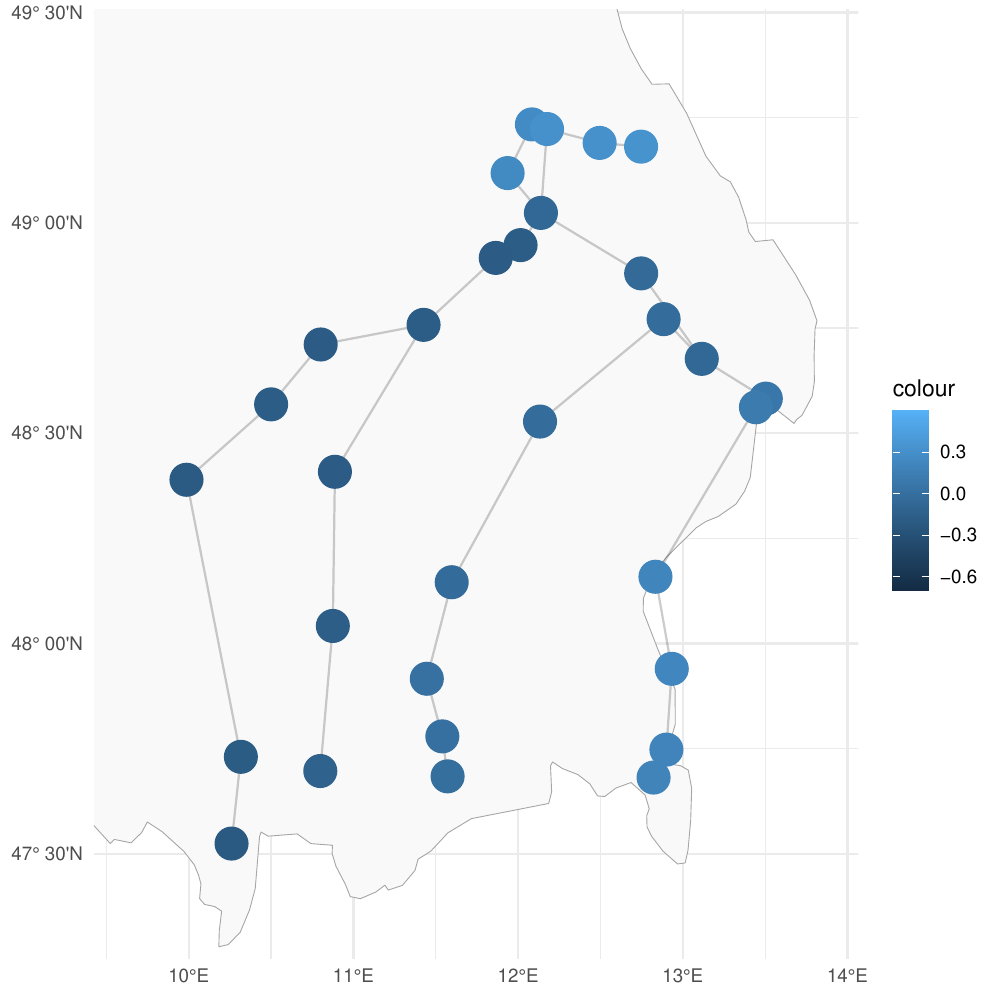}
    \end{subfigure}
    \hfill
    \begin{subfigure}{0.3\textwidth}
      \centering
      \includegraphics[width=\textwidth]{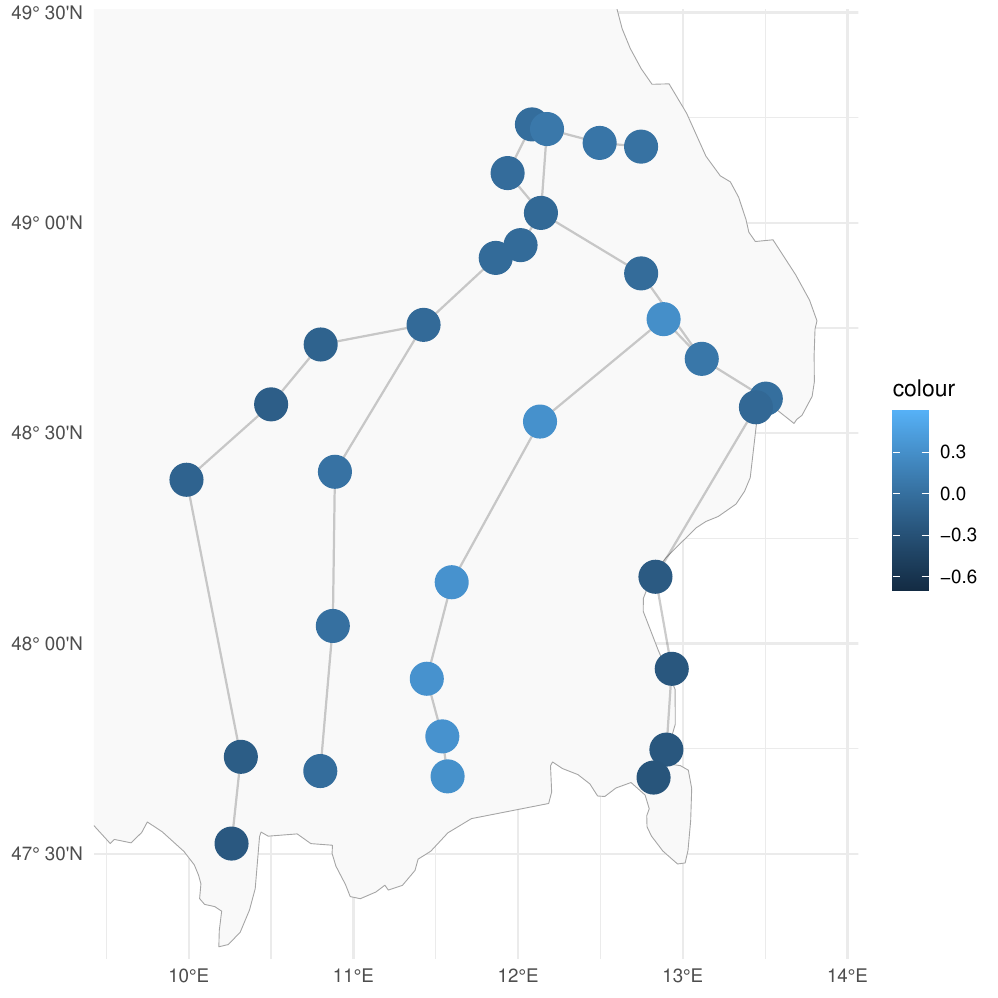}
    \end{subfigure}
    \hfill
    \begin{subfigure}{0.3\textwidth}
      \centering
      \includegraphics[width=\textwidth]{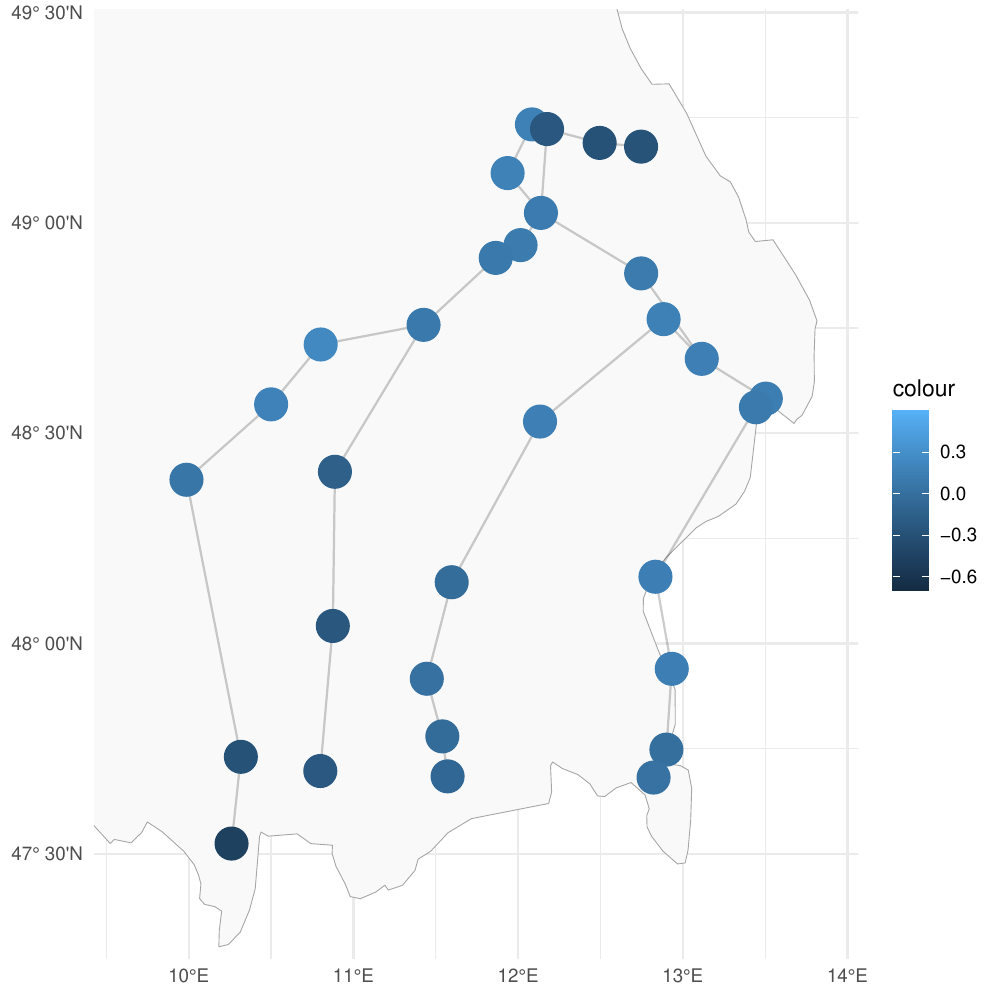}
    \end{subfigure}
    \hfill
    \begin{subfigure}{0.3\textwidth}
      \centering
      \includegraphics[width=\textwidth]{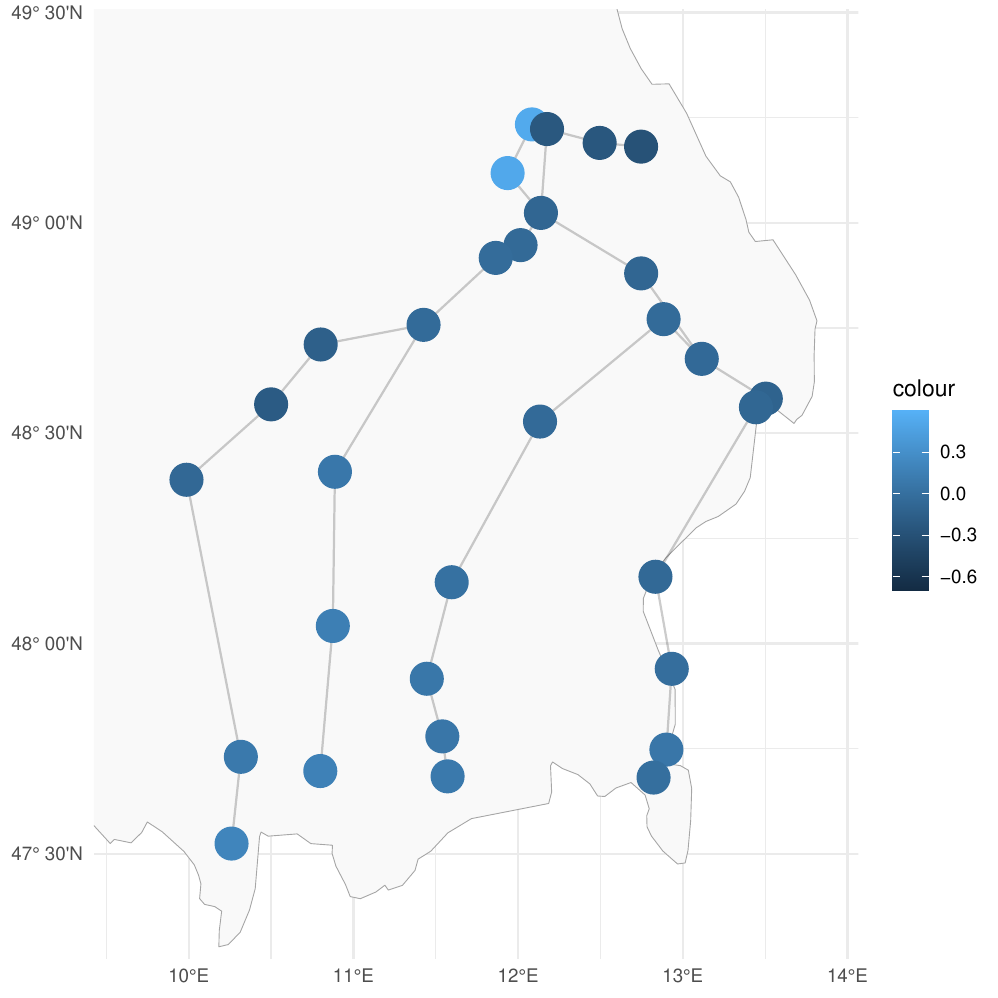}
    \end{subfigure}
    \caption{The first six principal components of the procedure by Drees and Sabourin, first three from left to right in the first row and fourth to sixth component in the second row, with each entry of the vector mapped to the corresponding measurement station and colored from dark blue for negative values to light blue for positive values. Note that the first principal component is almost constant and starting from the second component, it looks like the components concentrate on different river arms.}\label{fig:pcs_ds}
\end{figure}

\subsection{PCA for extremes as proposed by Cooley and Thibaud (\cite{CooleyThibaud_ddhde})}

An approach to PCA for extremes that, like ours, deals with the restricted support of max-stable distributions with Fréchet margins has been proposed in~\cite{CooleyThibaud_ddhde} and applied to data in ~\cite{JiangCooley_pcae}. We briefly summarize the procedure to highlight similarities and differences. Given i.i.d. data $X_1, \ldots, X_n$ from the max-domain of attraction of some multivariate extreme value distribution, transformed to $2$-Fréchet margins, the aim is to find a reconstruction of the data by determining a matrix $U \in \R^{d \times p}$ such that for an invertible map $t$ (for details see~\cite[Section 3]{CooleyThibaud_ddhde}), the reconstruction is given by 
\begin{equation}
    R = U t(U^T t^{-1}(X)), \quad t(\cdot) := \log(1 + \exp(\cdot)). 
\end{equation}
The matrix $U$ is obtained in practice by calculating the eigenvectors to the $p$ largest eigenvalues of the tail pairwise dependence matrix $\Sigma \in \R^{d \times d}$ given by 
\begin{equation}
    \label{eq:tpdm}
    \Sigma_{ij} := \int_{\S^{d-1}_+} a_i a_j \, \hat S_n(da), \quad i,j = 1, \ldots, d, 
\end{equation}
where $\hat S_n$ is an estimator of the spectral measure $S$. 

Similarly to the approach by~\cite{DreesSabourin_pcame} and our work, we can give elbow plots for the eigenvalues of the tail pairwise dependence matrix as a heuristic approach to find a reasonable number of principal components, see Figure~\ref{fig:elbow_danube_cy}. Again, for comparison we choose $p=6$ and plot the vectors of principal components in Figure~\ref{fig:pcs_cooley}. Due to strong similarities between the empirical versions of $\Sigma$ and $\Theta \Theta^T$ from the previous section, the principal components are very similar to those of Drees and Sabourin. Accordingly, a similar comment applies for the comparison to our method, in that the approach by Cooley and Thibaud captures a lower dimensional embedding of the spectral measure, whereas our approach aims at identifying a small number of driving factors of this spectral measure.

\begin{figure}[h!bt]
\centering
\begin{subfigure}{0.4\textwidth}
      \centering
      \includegraphics[width=\textwidth]{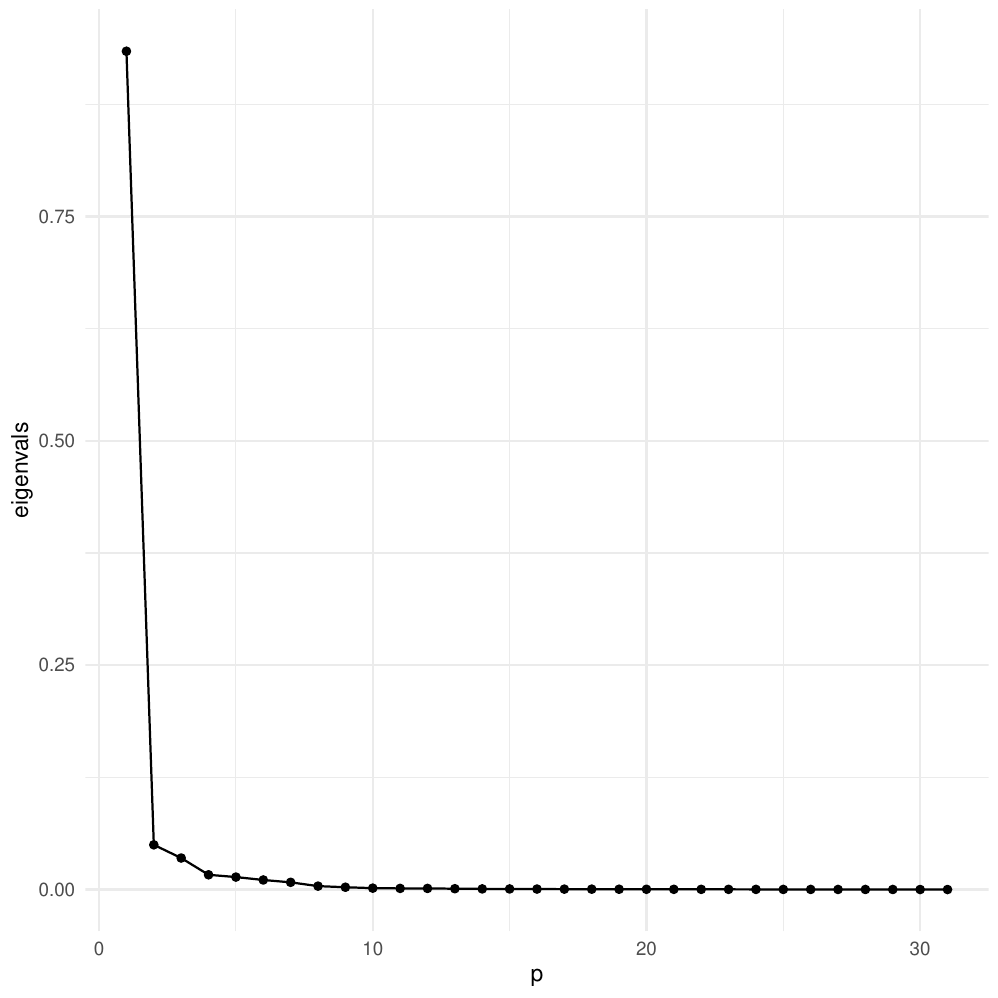}
    \end{subfigure}
    \caption{Elbow plots of the eigenvalues of the TPDM. The left plot would suggest to cut off at just one component or at $p = 4$ components. }\label{fig:elbow_danube_cy}
\end{figure}
\begin{figure}[h!bt]
\centering
\begin{subfigure}{0.3\textwidth}
      \centering
      \includegraphics[width=\textwidth]{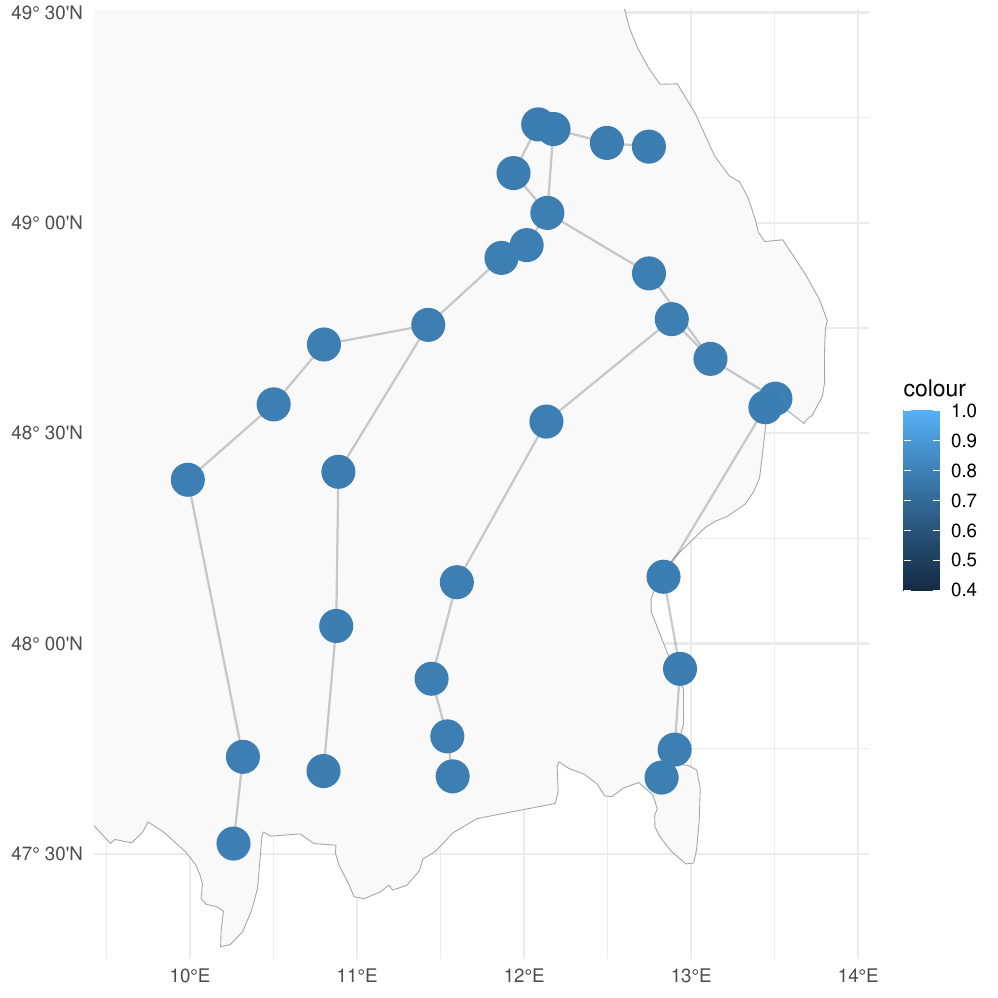}
    \end{subfigure}
    \hfill\begin{subfigure}{0.3\textwidth}
      \centering
      \includegraphics[width=\textwidth]{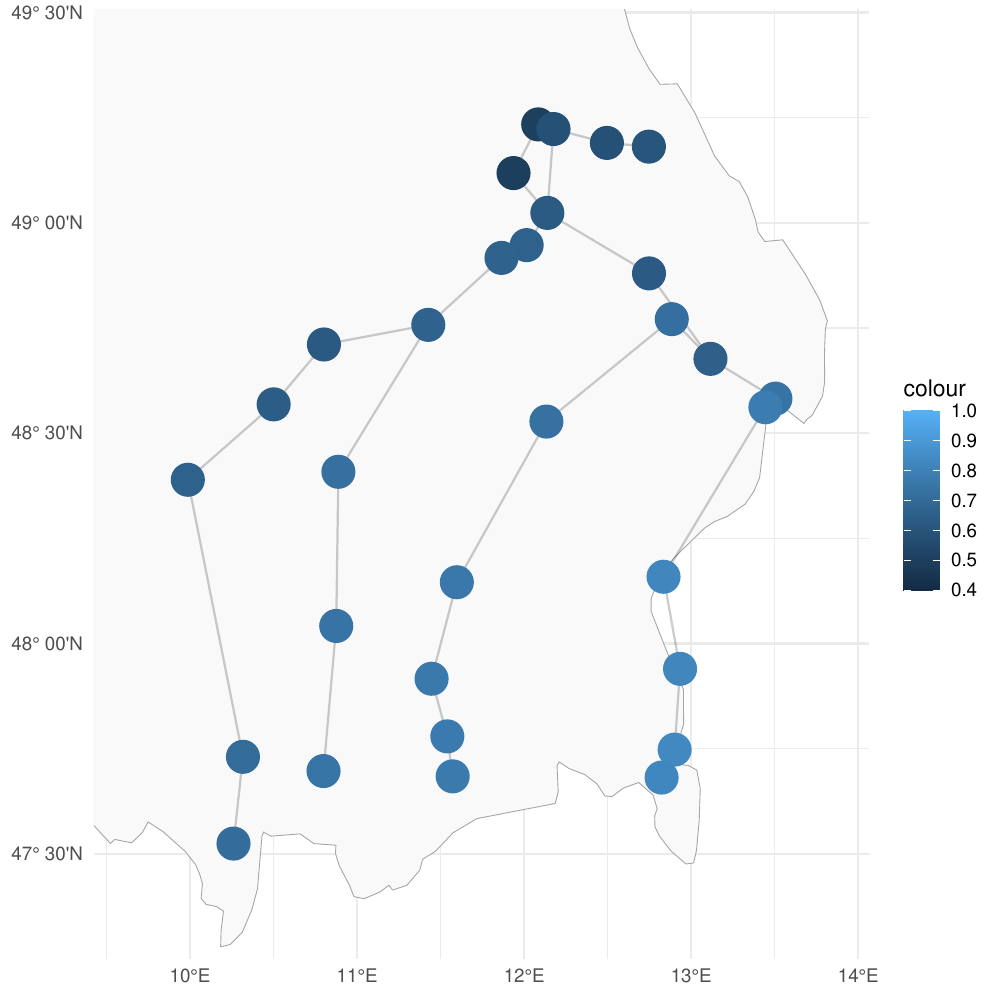}
    \end{subfigure}
    \hfill\begin{subfigure}{0.3\textwidth}
      \centering
      \includegraphics[width=\textwidth]{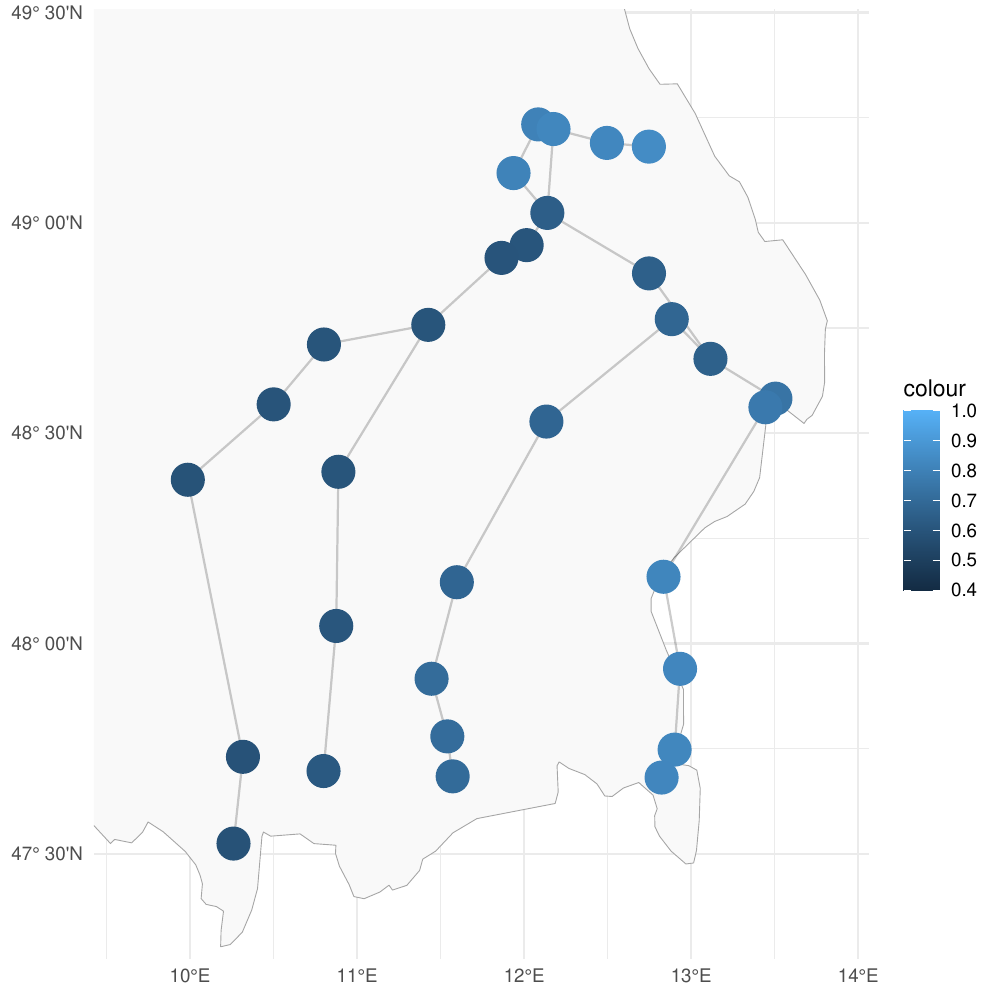}
    \end{subfigure}
    \hfill
    \begin{subfigure}{0.3\textwidth}
      \centering
      \includegraphics[width=\textwidth]{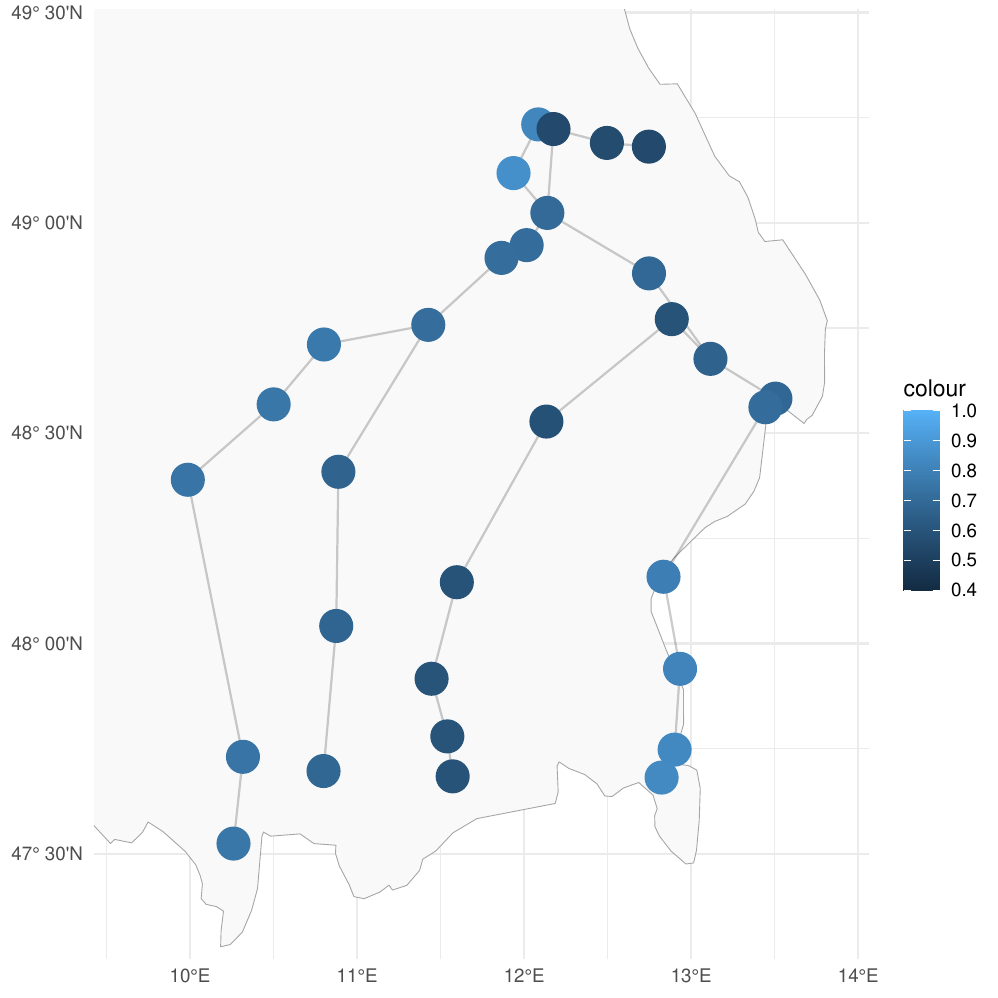}
    \end{subfigure}
    \hfill
    \begin{subfigure}{0.3\textwidth}
      \centering
      \includegraphics[width=\textwidth]{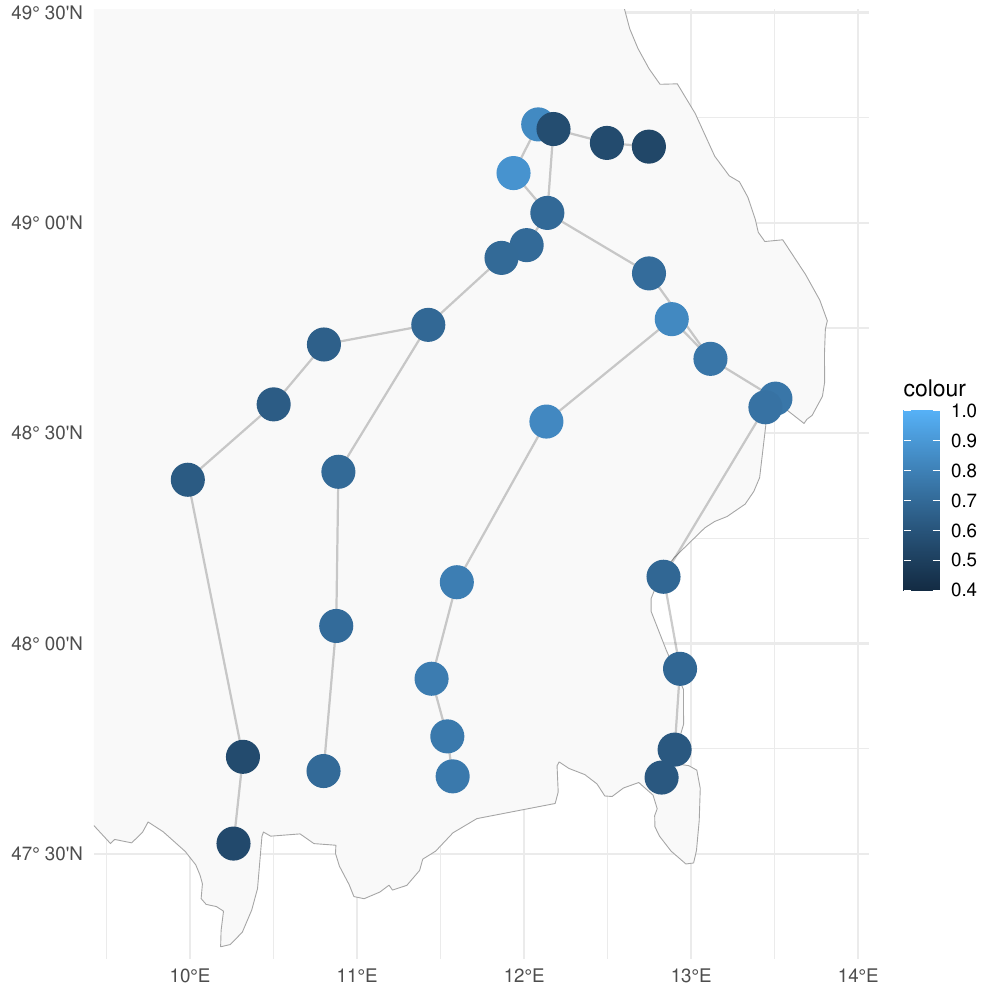}
    \end{subfigure}
    \hfill
    \begin{subfigure}{0.3\textwidth}
      \centering
      \includegraphics[width=\textwidth]{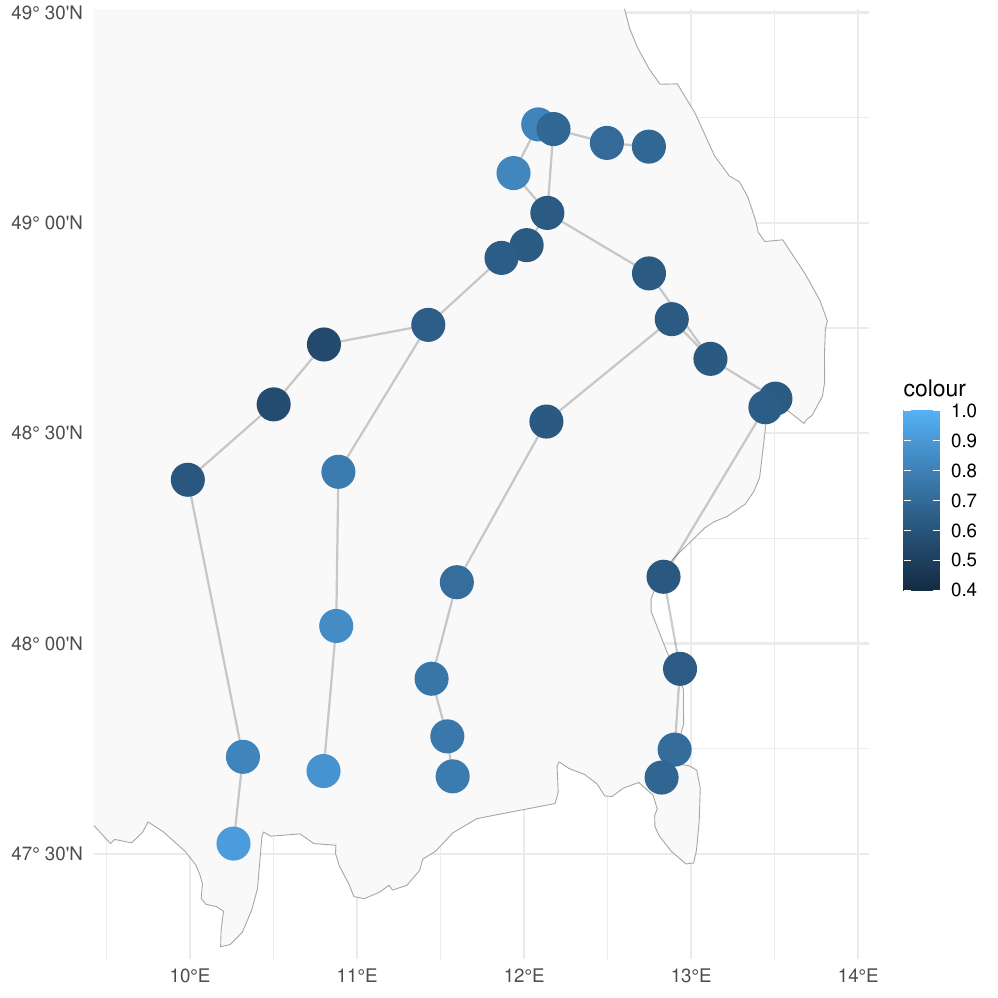}
    \end{subfigure}
    \caption{The first 6 principal components of the procedure by Cooley and Thibaud, plotted to their respective measurement station on the Danube river network. The principal components are transformed by the transformation map $t$ as proposed by the original work to ensure that they are positive. The first component again seems to be an average, this agrees with previous analyses using this version of PCA for extremes and the other components concentrate on different spatial domains.}\label{fig:pcs_cooley}
\end{figure}

\subsection{Spherical $k$-means clustering for extremes (\cite{JanssenWan_kmce})}

In some applications, instead of mapping the data to a lower dimensional subspace, it might be desirable to use a clustering approach and find $k$ prototypical extreme events that represent the data well. In~\cite{JanssenWan_kmce}, it was shown that spherical $k$-means clustering is a consistent procedure that can recover low dimensional max-linear models. We briefly outline the procedure and apply $k$-means clustering to the Danube dataset to discuss similarities and differences. The goal for spherical $k$-means clustering for extremes is to find $k$ prototypical points on the sphere that represent the angular component of extreme observations well with respect to a suitable metric $d$ on the sphere. To calculate these centroids, we first estimate the spectral measure $S$ from data that is assumed to be in the max-domain of attraction of some multivariate extreme value distribution and rescaled to standard Pareto margins. For the resulting estimator $\hat S_n$, one then minimizes 
\begin{equation}
    W(A, \hat S_n ) := \int_{\S^{d-1}_+} \min_{j=1,\ldots, k} d(A_{\cdot j}, a) \, \hat S_n(da), \quad A_{\cdot 1}, \ldots, A_{\cdot k} \in \S^{d-1}_+,
\end{equation}
over the matrix $A \in \R^{d \times k}$ of cluster centers $A_{\cdot 1}, \ldots, A_{\cdot k}$, where we employ the spherical distance measure $d$, for details see~\cite{JanssenWan_kmce}. 

We apply the spherical $K$-means procedure to the Danube dataset to visually inspect the elbow plot provided in Figure~\ref{fig:elbow_danube_kmeans} for the losses and a spatial plot of the vectors of cluster centers for $k=6$ in Figure~\ref{fig:kmeans_clustercenters}. The cluster centers allow to detect concentration of mass of the spectral measure, but the number of identified cluster centers is not related to the notion of dimension, and we cannot reconstruct the original observations from this approach. 
\begin{figure}[h!bt]
\centering
\begin{subfigure}{0.4\textwidth}
      \centering
      \includegraphics[width=\textwidth]{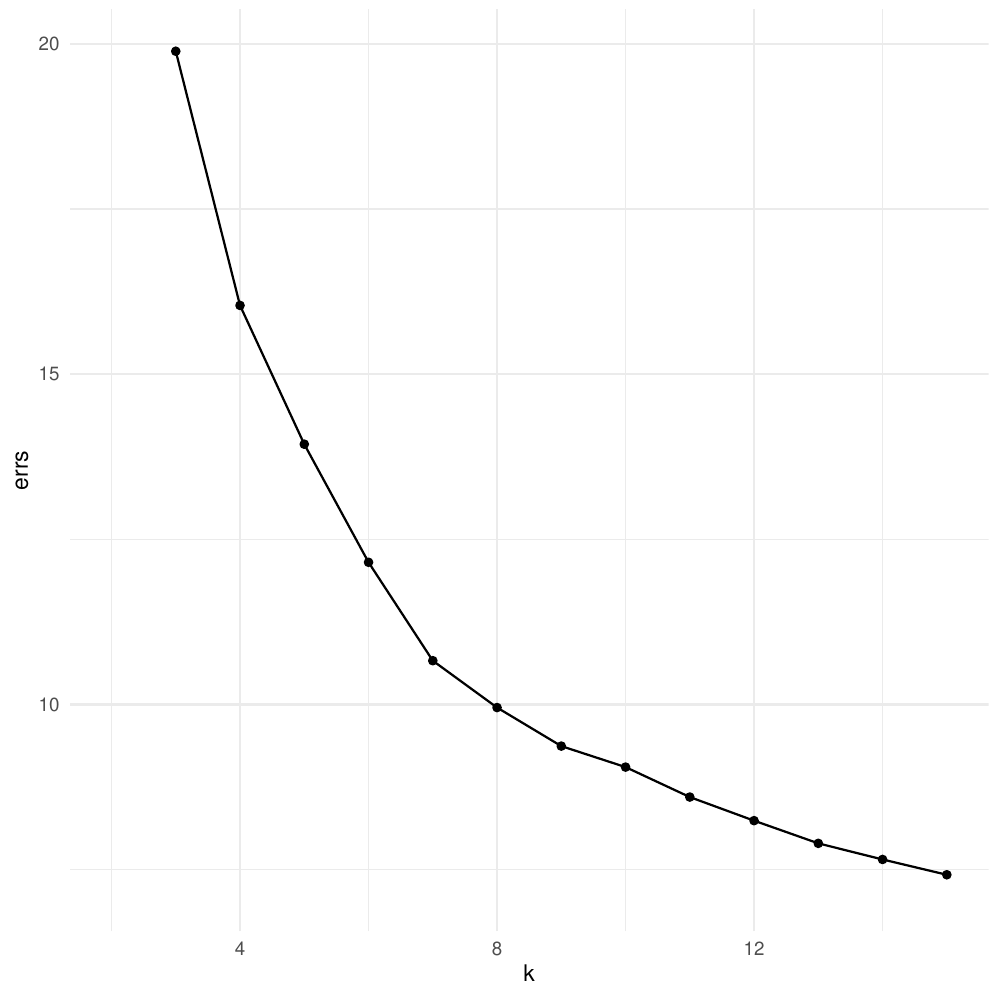}
    \end{subfigure}
    \caption{Elbow plot of the spherical $k$-means algorithm for different values of $k$. Note that there is no clear elbow visible in this plot. }\label{fig:elbow_danube_kmeans}
\end{figure}
\begin{figure}[h!bt]
\centering
\begin{subfigure}{0.3\textwidth}
      \centering
      \includegraphics[width=\textwidth]{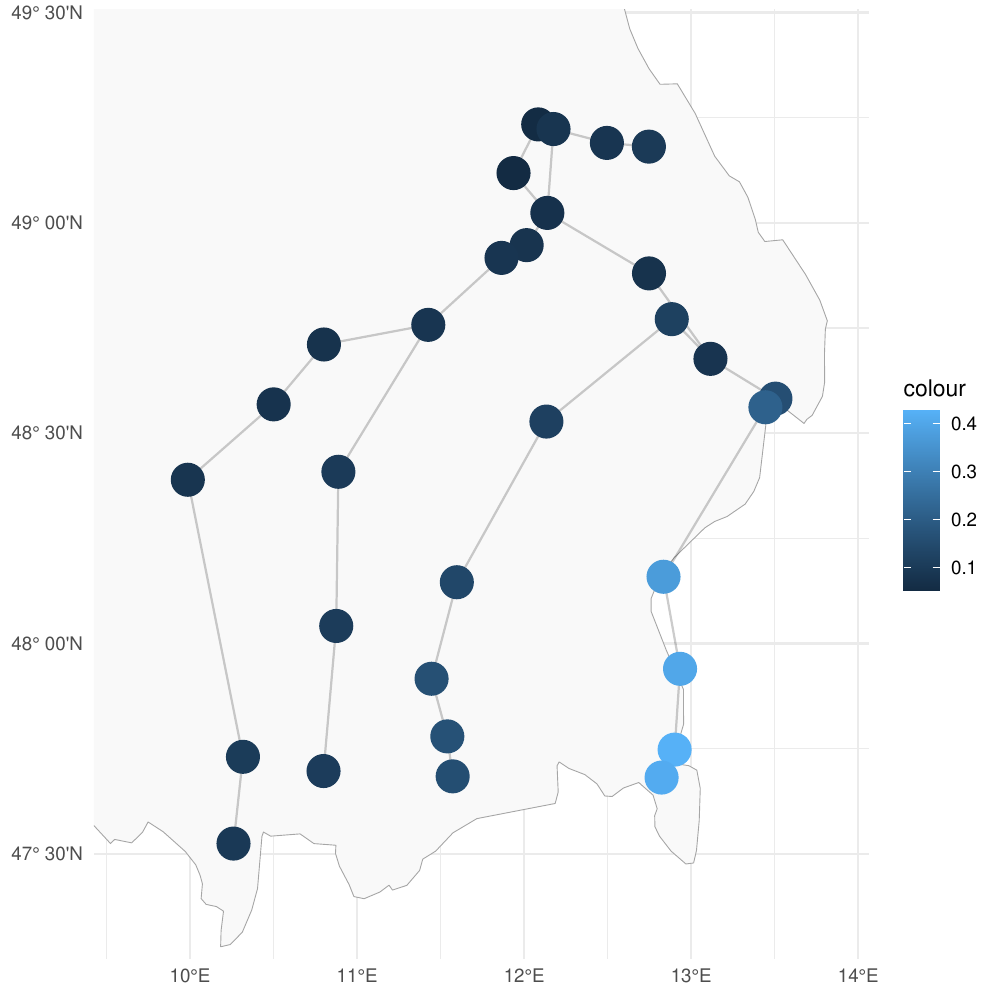}
    \end{subfigure}
    \hfill\begin{subfigure}{0.3\textwidth}
      \centering
      \includegraphics[width=\textwidth]{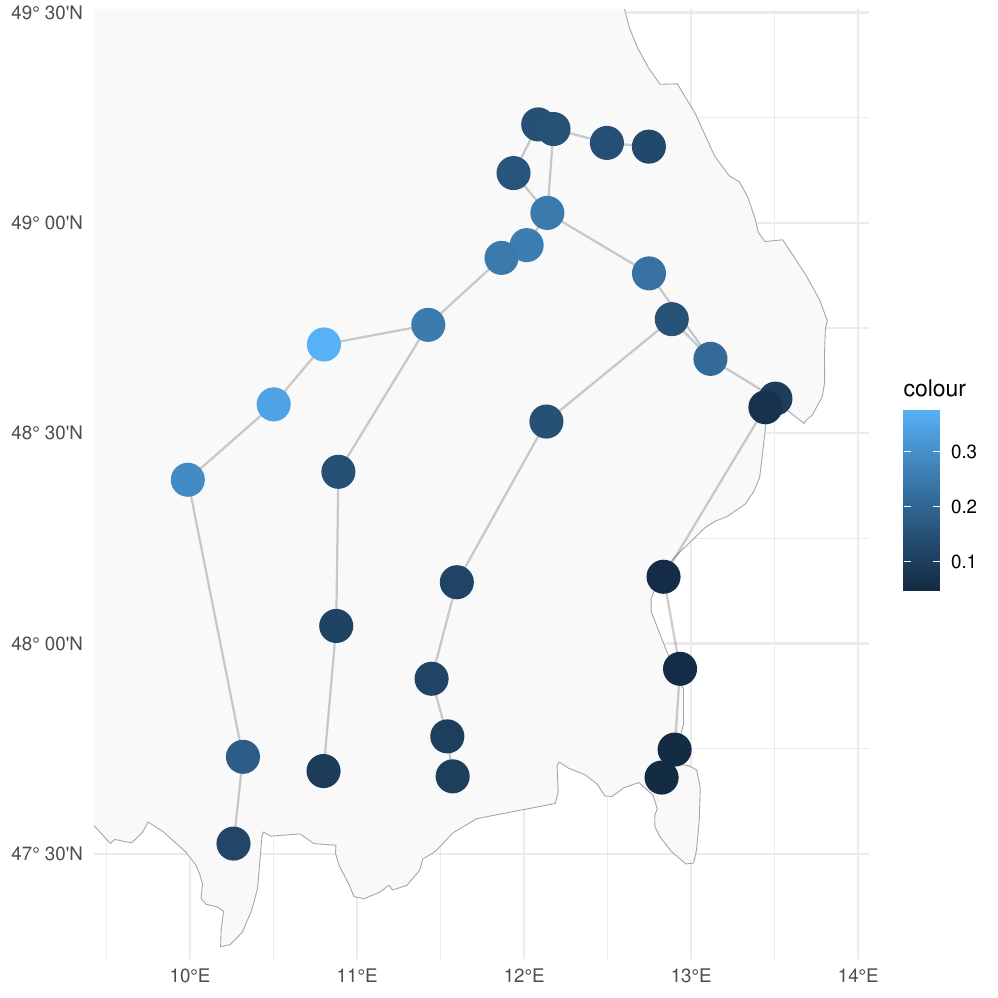}
    \end{subfigure}
    \hfill\begin{subfigure}{0.3\textwidth}
      \centering
      \includegraphics[width=\textwidth]{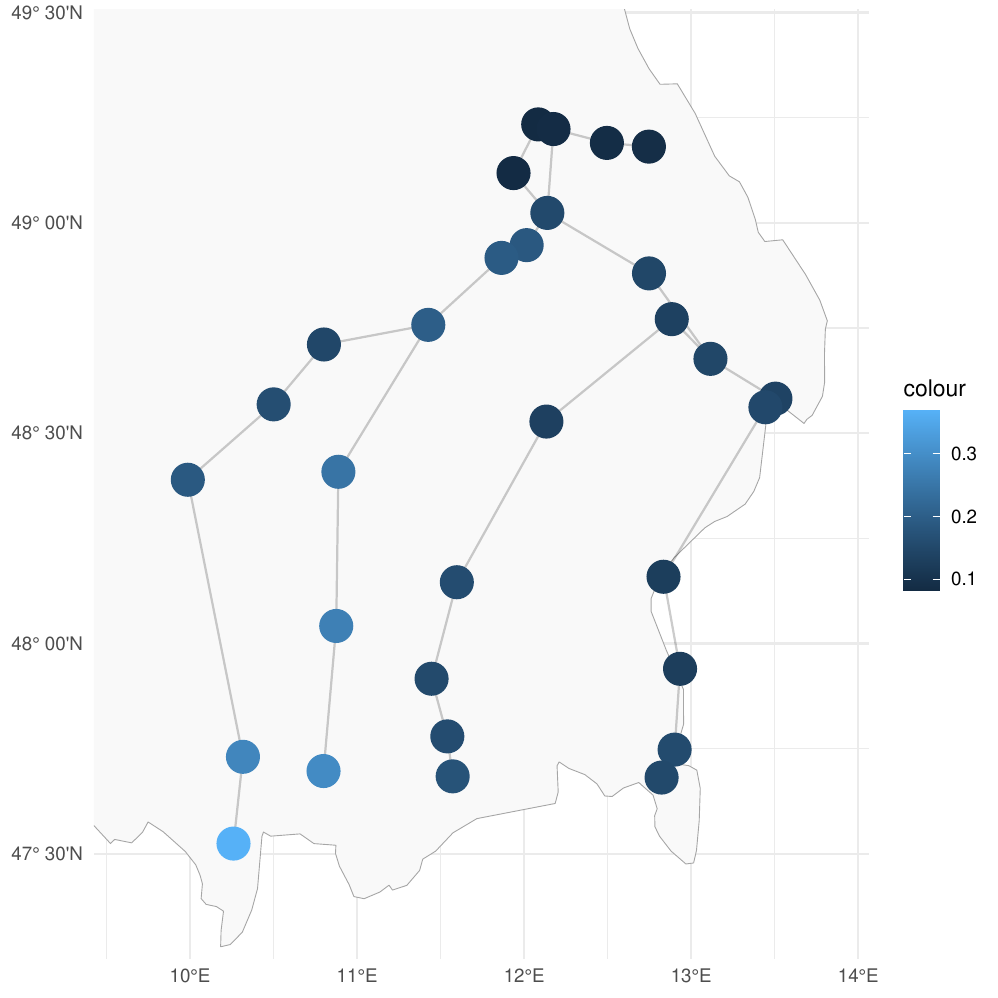}
    \end{subfigure}
    \hfill
    \begin{subfigure}{0.3\textwidth}
      \centering
      \includegraphics[width=\textwidth]{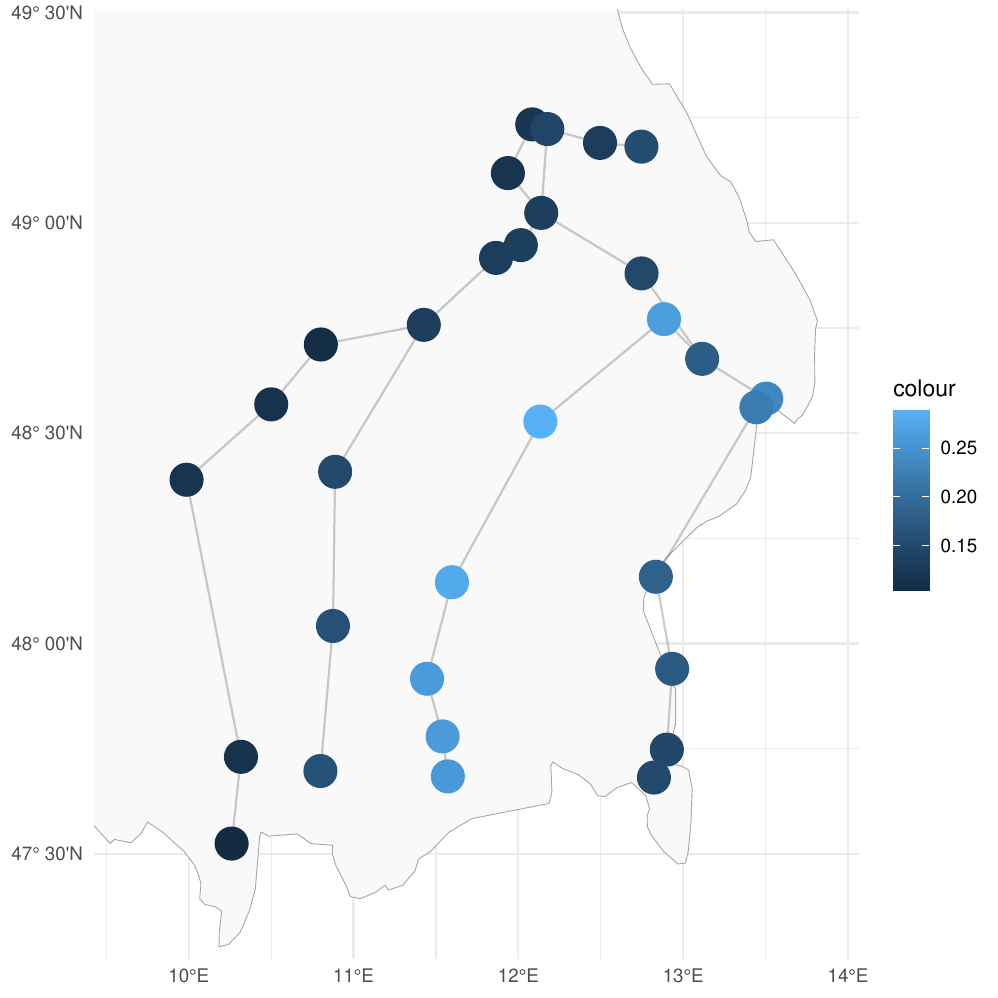}
    \end{subfigure}
    \hfill
    \begin{subfigure}{0.3\textwidth}
      \centering
      \includegraphics[width=\textwidth]{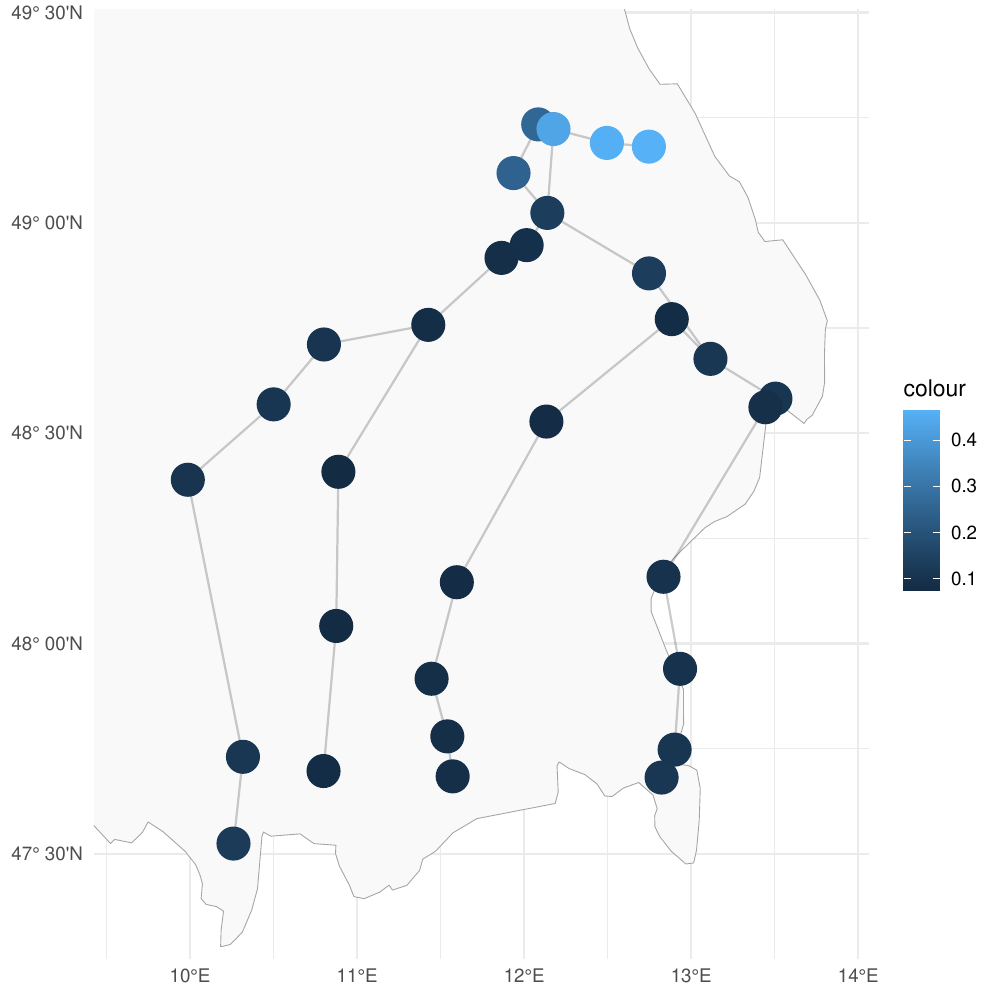}
    \end{subfigure}
    \hfill
    \begin{subfigure}{0.3\textwidth}
      \centering
      \includegraphics[width=\textwidth]{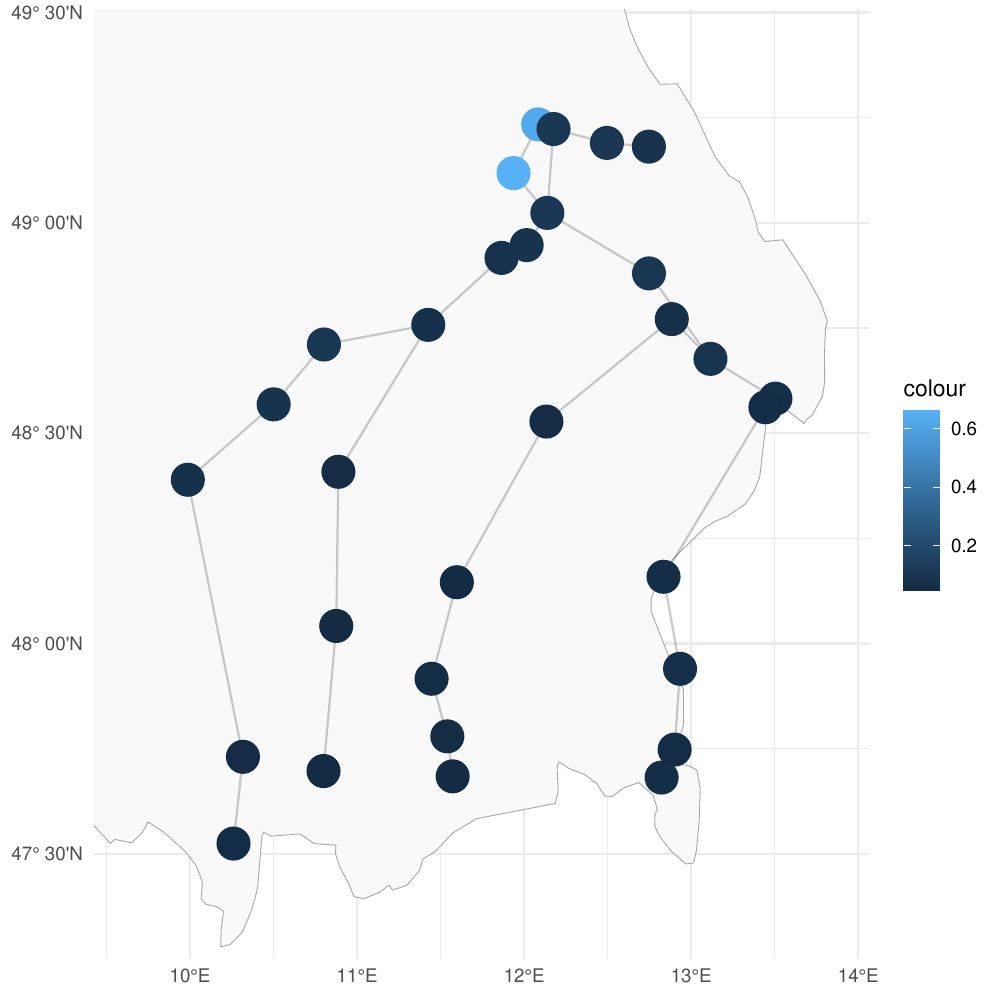}
    \end{subfigure}
    \caption{The cluster centers for $k = 6$ obtained by the spherical $k$-means clustering algorithm for extremes. The cluster centers seem to capture the structure of single river arms.}\label{fig:kmeans_clustercenters}
\end{figure}
}

\FloatBarrier

\bibliographystyle{plain}

\bibliography{./bibliography}
\end{document}